\newif\ifpublish
\newcommand{\codelink}{
    \ifpublish
        \url{https://github.com/asonnino/mysticeti/tree/stingray}
        (commit \texttt{0ae4bb5})
    \else
        Code available but link omitted for blind review.
    \fi
}
\newcommand{\para}[1]{\vskip 0.5em\noindent\textbf{#1.}}
\algnewcommand{\LineComment}[1]{\State {} \(\triangleright\) \textit{#1}}
\algnewcommand{\InlineRequire}[1]{\State \textbf{require} {#1}}
\Crefname{figure}{Fig.}{Figs.}
\Crefname{figure}{Fig.}{Figs.}
\Crefname{table}{Tab.}{Tabs.}
\Crefname{table}{Tab.}{Tabs.}
\Crefname{section}{Sec.}{Secs.}
\Crefname{section}{Sec.}{Secs.}
\Crefname{appendix}{App.}{Apps.}
\Crefname{appendix}{App.}{Apps.}
\Crefname{algorithm}{Alg.}{Algs.}
\Crefname{algorithm}{Alg.}{Algs.}
\Crefname{line}{l.}{ll.}
\Crefname{line}{l.}{ll.}
\Crefname{proposition}{Prop.}{Props.}
\Crefname{proposition}{Prop.}{Props.}
\Crefname{lemma}{Lem.}{Lems.}
\Crefname{lemma}{Lem.}{Lems.}
\Crefname{theorem}{Thm.}{Thms.}
\Crefname{theorem}{Thm.}{Thms.}
\Crefname{corollary}{Cor.}{Cors.}
\Crefname{corollary}{Cor.}{Cors.}
\Crefname{definition}{Def.}{Defs.}
\Crefname{definition}{Def.}{Defs.}
\Crefname{item}{Item}{Items}
\Crefname{item}{Item}{Items}
\newcommand{\sysname}{Stingray\xspace}
\newcommand{\sys}{\sysname}
\newcommand{\fun}{FastUnlock\xspace}
\newcommand{\bcobject}{bounded counter\xspace}
\newcommand{\BCObject}{Bounded Counter\xspace}
\newcommand{\bcobjects}{bounded counters\xspace}
\newcommand{\Globalsafety}{Global safety\xspace}
\newcommand{\globalsafetybc}{global safety\xspace}
\newcommand{\Globalsafetybc}{Global safety\xspace}
\newcommand{\convergence}{eventual consistency\xspace}
\newcommand{\Convergence}{Eventual consistency\xspace}
\newcommand{\freal}{f_{\mathrm{r}}}
\newcommand{\none}{\text{\textbf{None}}\xspace}
\newcommand{\unlocked}{\text{\textbf{Unlocked}}\xspace}
\newcommand{\confirmed}{\text{\textbf{Confirmed}}\xspace}
\newcommand{\noop}{\text{\textbf{No-Op}}\xspace}
\newcommand{\Cert}{\textsf{Cert}\xspace}
\newcommand{\Oid}{\textsf{ObjectId}\xspace}
\newcommand{\Okey}{\textsf{ObjectKey}\xspace}
\newcommand{\Okeys}{\textsf{ObjectKeys}\xspace}
\newcommand{\Version}{\textsf{Version}\xspace}
\newcommand{\Bal}{\textsf{Bal}\xspace}
\newcommand{\Val}[1]{\Delta(#1)\xspace}
\newcommand{\initBal}{\textsf{Bal}_0\xspace}
\newcommand{\Bud}{\textsf{Bud}\xspace}
\newcommand{\BudTotal}{\overline{\textsf{Bud}}\xspace}
\newcommand{\budFrac}{\eta}
\newcommand{\certtxs}{\mathcal{C}\xspace}
\newcommand{\senttxs}{\textsf{sentTxs}\xspace}
\newcommand{\diff}{\delta}
\newcommand{\req}{v\xspace}
\newcommand{\PrevTxs}{\textsf{PrevTxs}\xspace}
\newcommand{\PrevVersion}{\textsf{PrevVersion}\xspace}
\newcommand{\PrevVersions}{\textsf{PrevVersions}\xspace}
\newcommand{\SignedTxs}{\textsf{SignedTxs}\xspace}
\newcommand{\initVersion}{v_0}
\newcommand{\parent}[1]{p({#1})}
\newcommand{\parents}[1]{P({#1})}
\newcommand{\history}[1]{H_{#1}}
\newcommand{\inctxs}[1]{\history{#1}\xspace}
\newcommand{\exctxs}[1]{\widetilde{H}_{#1}\xspace}
\newcommand{\excnegtxs}[1]{{\widetilde{H}_{#1}}^{-}\xspace}
\newcommand{\excpostxs}[1]{{\widetilde{H}_{#1}}^{+}\xspace}
\newcommand{\tx}{\textsf{Tx}\xspace}
\newcommand{\cert}{\textsf{Cert}\xspace}
\newcommand{\auth}{\textsf{Auth}\xspace}
\newcommand{\unlockreq}{\textsf{UnlockRqt}\xspace}
\newcommand{\unlockvote}{\textsf{UnlockVote}\xspace}
\newcommand{\unlockcert}{\textsf{UnlockCert}\xspace}
\newcommand{\effectvote}{\textsf{EffectSign}\xspace}
\newcommand{\effectcert}{\textsf{EffectCert}\xspace}
\newcommand{\funcCall}[2]{\emph{#1}(#2)\xspace}
\newcommand{\valid}[1]{\emph{valid}(#1)\xspace}
\newcommand{\sign}[1]{\emph{sign}(#1)\xspace}
\newcommand{\exec}[1]{\emph{exec}(#1)\xspace}
\newcommand{\allVersionsMatch}[1]{\emph{allVersionsMatch}(#1)\xspace}
\newcommand{\areCertified}[1]{\emph{areCertified}(#1)\xspace}
\newcommand{\broadcast}[1]{\emph{broadcast}(#1)\xspace}
\newcommand{\lockdb}{\textsc{LockDb}\xspace}
\newcommand{\unlockdb}{\textsc{UnlockDb}\xspace}
\newcommand{\one}{\ding{202}\xspace}
\newcommand{\two}{\ding{203}\xspace}
\newcommand{\three}{\ding{204}\xspace}
\newcommand{\four}{\ding{205}\xspace}
\newcommand{\five}{\ding{206}\xspace}
\newcommand{\six}{\ding{207}\xspace}
\newcommand{\seven}{\ding{208}\xspace}
\newcommand{\eight}{\ding{209}\xspace}
\newcommand{\CO}{{\mathcal{O}}}
\newcommand{\CS}{{\mathcal{S}}}
\newcommand{\CV}{{\mathcal{V}}}
\definecolor{myParula01Blue}{RGB}{0,114,189}
\definecolor{myParula02Orange}{RGB}{217,83,25}
\definecolor{myParula03Yellow}{RGB}{237,177,32}
\definecolor{myParula04Purple}{RGB}{126,47,142}
\definecolor{myParula05Green}{RGB}{119,172,48}
\definecolor{myParula06LightBlue}{RGB}{77,190,238}
\definecolor{myParula07Red}{RGB}{162,20,47}
\pgfplotsset{compat=newest}
\pgfplotsset{
    mysimpleplot/.style = {
        every axis plot/.prefix style={thick},
        width=\linewidth,
        height=0.45\linewidth,
        xlabel={Client-Perceived Throughput (tx/s)},
        ylabel={Latency (s)},
        xmin=1, xmax=2e5,
        ymin=0, ymax=4,
        title style={font=\scriptsize,align=center},
        legend cell align=left,
        legend style={font=\scriptsize},
        legend columns=2,
        legend style={
                at={(0.5,1)},
                yshift=0.3em,
                anchor=south,
                draw=none,
                /tikz/every even column/.append style={
                        column sep=0.3em
                    },
                cells={
                        align=left
                    }
            },
        grid=both,
        minor tick num=4,
        major grid style={solid,very thin,draw=gray!50},
        minor grid style={solid,ultra thin,draw=gray!20},
        label style={font=\scriptsize,align=center},
        tick label style={font=\scriptsize},
        yticklabel style={
            /pgf/number format/fixed,
            /pgf/number format/precision=2
        },
        scaled y ticks=false,
    },
    line1/.style = {
        myParula02Orange, mark=triangle, mark size=2pt, mark options={solid}, dotted, line width=1pt, 
        error bars/.cd, y dir=plus,y explicit,
    },
    line2/.style = {
        myParula02Orange, mark=*, mark size=2pt, mark options={solid}, solid, line width=1pt, 
        error bars/.cd, y dir=plus,y explicit,
    },
    line3/.style = {
        myParula05Green, mark=square, mark size=2pt, mark options={solid}, dashed, line width=1pt, 
        error bars/.cd, y dir=plus,y explicit,
    },
    line4/.style = {
        myParula05Green, mark=diamond*, mark size=2pt, mark options={solid}, solid, line width=1pt, 
        error bars/.cd, y dir=plus,y explicit,
    },
}
\begin{document}

\title{\sysname: Fast Concurrent Transactions Without Consensus}

\ifpublish
  \author{Srivatsan Sridhar}
  \affiliation{
    \institution{Stanford University}
    \city{}
    \state{}
    \country{}
  }

  \author{Alberto Sonnino}
  \affiliation{
    \institution{Mysten Labs \\ University College London}
    \city{}
    \country{}
  }

  \author{Lefteris Kokoris-Kogias}
  \affiliation{
    \institution{Mysten Labs}
    \city{}
    \country{}
  }

  \renewcommand{\shortauthors}{Sridhar, Sonnino, Kokoris-Kogias}
\else
  \author{}
\fi

\setcopyright{none}
\renewcommand\footnotetextcopyrightpermission[1]{}
\settopmatter{printacmref=false}
\settopmatter{printfolios=true}

\begin{abstract}
  
Recent advances have improved the throughput and latency of blockchains
by processing transactions accessing different parts of the state concurrently.
However, these systems
are unable
to concurrently process (a) transactions accessing the same state, even if they are (almost) commutative, e.g., payments much smaller than an account’s balance,
and
(b) multi-party transactions, e.g., asset swaps.
Moreover, they are slow to recover
from contention, requiring once-in-a-day synchronization.
We present \sysname, a novel blockchain architecture that addresses these limitations.
The key conceptual contributions are
a replicated bounded counter that processes (almost) commutative transactions concurrently,
and a \fun protocol that uses a fallback consensus protocol for fast contention recovery.
We prove \sysname's security in an asynchronous network with Byzantine faults and demonstrate on a global testbed that \sysname achieves 10,000 times the throughput of prior systems for commutative workloads.

\end{abstract}

\maketitle
\pagestyle{plain}

\section{Introduction}

Blockchain technology has fundamentally transformed the landscape of digital transactions,
providing a decentralized and secure framework for managing digital assets. Despite its
innovative potential, current blockchains~\cite{bitcoin, ethereum, tendermint} face significant scalability and
efficiency challenges. Two key issues are the batch commit of blocks, which necessitates
transactions to wait until a block is ready to be proposed, and the sequential execution of
transactions, which fails to exploit modern CPU architectures capable of parallel processing.
These limitations introduce latency and restrict throughput, impeding the full utilization of
available resources.

To mitigate these bottlenecks, recent advances have focused on enhancing concurrency in
transaction processing. This has been achieved through improvements in both the agreement module
via consensus-less fast paths~\cite{sui, fastpay, abc, astro} and the execution module via parallel
execution engines~\cite{blockstm, pilotfish, sui,solana}. Existing consensus-less fast paths enable concurrency and significantly reduce latency for
transactions that have a consensus number of 1~\cite{guerraoui19consensus}, such as payments and transfers, and facilitate easy identification of transactions accessing independent resources for parallel execution~\cite{sui,solana}.

Despite these advancements, existing fast-path protocols and parallel execution engines still
have limitations.
First, while they parallelize transactions touching independent parts of the state, e.g., payments to/from different accounts, transactions accessing the same account are sequential
because
they incorporate a version number derived from the preceding transaction on
the same account.
This lack of concurrency limits performance gains
for commutative transactions that
do not inherently require sequentiality.

Second,
the fast path has been confined to a limited range of transactions. For example,
in systems such as Sui~\cite{suiL}, FastPay~\cite{fastpay} and Astro~\cite{astro}, fast-path transactions are restricted to objects owned by a single entity, allowing
simple payments and asset transfers but excluding more complex operations such as asset swaps or
multi-signature authorizations. This limitation significantly decreases the potential
transaction load that could benefit from the fast path.

Finally, if a user submits concurrent conflicting transactions through the fast path, the system must
lock the object until the system can deterministically resolve this deadlock. While it is
anticipated that users will not send conflicting transactions, such conflicts may arise due to
minor bugs in wallet implementations or malicious collusion among signatories in multi-signature
structures. Deployed systems, such as Sui~\cite{sui}, can take up to 24 hours to resolve these conflicts,
resulting in a suboptimal user experience.

We address the first limitation by
allowing concurrent transactions that are commutative or almost commutative.
We were inspired by the database literature, where systems can deduce (or be instructed) that
accessing the same resource can occur safely when actions are commutative~\cite{kleppmann2021thinking}.
This is captured using Conflict-free Replicated Data Types (CRDTs)~\cite{nasirifard2023orderlesschain}.
However, using CRDTs is insufficient, as any transaction in a blockchain system must pay for gas. This gas payment (and any payment transaction) has a non-commutative comparison with zero. To our knowledge, the only algorithms that bridge this gap between commutativity and comparing a counter (or set) with a bound (i.e., nearly commutative)
have been proposed under crash faults~\cite{nasirifard2023orderlesschain,weidner2022oblivious,almeida2013scalable} and break under Byzantine faults.
We resolve this open question in \sysname with \emph{the first Byzantine fault-tolerant bounded counter}. The intuition behind this construction is to (a) provide a local budget per validator for signing transactions but (b) require quorums for approval such that even if every quorum includes all malicious validators the global budget can still not be overspent. This allows users to spend up to half their account's value concurrently before resynchronization.
Importantly, resynchronization still happens through a consensus-less fast path, solving the problem in asynchrony using protocols resembling reliable broadcast~\cite{cachin2011introduction}.

Once we identify that the real problem in concurrent distributed ledgers is not the concurrent accesses on the same memory location (i.e., \emph{concurrency}), but the non-monotonic~\cite{keep-calm-crdt-on} accesses that cannot be reordered (i.e., \emph{contention}),
we address the second limitation by extending the fast path to accommodate transactions involving objects owned by
different users, such as asset swaps, and support complex authorization structures, including
threshold and logical combinations of user authorizations. These transactions are finalized extremely fast and
in parallel as long as there is no contention over the resources accessed. However, the absence of contention is no longer guaranteed if one of the users is malicious,
risking loss of liveness.
The last challenge we solve is alleviating this risk through our novel \fun protocol that leverages the consensus of hybrid blockchains~\cite{suiL,mysticeti} to swiftly resolve conflicting transactions and
enable users to submit new transactions within the latency of consensus protocols (e.g., 400ms for Mysticeti~\cite{mysticeti}).

We implement \sysname on top of the consensus-less fast path of Sui~\cite{sui}, called Mysticeti-FPC~\cite{mysticeti}, and show that it provides significant throughput and latency improvements for transaction loads concurrently accessing the same resources with commutative operations (20,000 tps in 0.5 seconds instead of 2 tps). Additionally, we demonstrated that \sysname incurs no performance penalty for fully parallelizable workloads, both with and without faults, when compared to Sui.

\para{Contributions}
This paper makes the following contributions:
\begin{itemize}
    \item We introduce the first bounded counter with Byzantine fault tolerance, a new object type that supports commutative operations, such as addition and subtraction, and some non-commutative operations, such as comparison with zero for account balances, without consensus.
    \item We propose a new protocol, \fun, that enables the fast resolution of conflicting transactions within the latency of consensus protocols.
    \item We present \sysname, a novel system that applies these techniques to existing fast-path protocols. In addition to bounded counters and \fun, \sysname supports on the fast path, transactions involving objects owned by different users, such as asset swaps, and complex authorization structures, including thresholds and other logical combinations.
    \item We formally prove the safety and liveness of \sysname in the asynchronous network model with Byzantine faults.
    \item We implement \sysname and evaluate it in a realistic geo-replicated environment to demonstrate that it outperforms the state of the art by $10{,}000$x for a commutative workload.
\end{itemize}

\section{Background} \label{sec:background}

Several consensus-less systems have been proposed in the literature, including FastPay~\cite{fastpay}, Astro~\cite{astro}, Zef~\cite{zef}, and Linera~\cite{linera}. In this section, we recap Sui (the Sui Lutris mechanism~\cite{sui}) as a basis for the \sysname design, as it is the only currently deployed system supporting a consensus-less fast path.
\sysname improves upon the Sui design (a) the concurrency admitted between transactions by exploiting commutativity
and (b) the scope of object authentication and transactions in the fast path. Finally, \sysname recognizes that
these benefits apply to the fast path only in the optimistic case
and mitigates this by (c) extending Sui's consensus with \fun to mitigate the worst case of contented transactions.

\subsection{Data Structures}
\label{sec:background-data-structures}

\para{Object Types}
The Sui blockchain state consists of a set of objects categorized into three types. Sui determines whether to use the fast path or the consensus path for a transaction based on the types of objects involved.
\begin{itemize}
    \item \emph{Read-only objects} cannot be mutated or deleted and may be used in transactions in either path concurrently by all users.
    \item \emph{Owned objects} have an owner field that determines access control. The owner is an address representing a public key. A transaction may access the object if it is signed by that key (which can also be a multi-signature). 
    A canonical example is a user's cryptocurrency account.
    As owned objects are never under contention when the owner is honest, Sui validates transactions accessing only owned or read-only objects using the fast path.
    \item \emph{Shared objects} do not specify an owner. They can instead be included in transactions by anyone and do not require any authorization. Instead, their authorization logic is enforced by a smart contract. In Sui, such objects are only accessed through consensus to serialize their access.
\end{itemize}

In \sysname, we introduce two additional types of objects, \bcobjects (\Cref{sec:bounded-counter}) and collective objects (\Cref{sec:collective-objects}). Transactions using them are validated using the fast path.

\tikzset{
    process/.style={rectangle, draw, rounded corners, minimum height=2em, minimum width=3cm, text centered, thick},
    circleprocess/.style={circle, draw, fill=white, thick, minimum size=0.25cm, text centered},
    arrow/.style={thick,->,>=stealth},
}

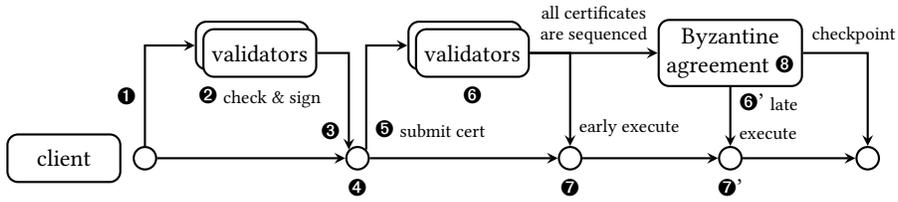
\begin{figure*}[t]
    \centering
    \begin{tikzpicture}[]
        \node (client) [process, minimum width=1.5cm] {client};
        \node (submittx) [circleprocess, right=0.15cm of client] {};
        \node (validators1b) [process, right=0.5cm of submittx, yshift=1.5cm, minimum width=1.5cm] {}; 
        \node (validators1f) [process, right=0.1cm of validators1b.west, yshift=-0.1cm, fill=white, minimum width=1.5cm] {validators}; 
        \draw [arrow] (submittx.north) |- (validators1b.west) node [anchor=east, pos=0.25] {\one}; 
        \node [below=0cm of validators1f] {\two \footnotesize{check \& sign}};
        \node (submitcert) [circleprocess, right=2.5cm of submittx] {};
        \node (validators2b) [process, right=0.5cm of submitcert, yshift=1.5cm, minimum width=1.5cm] {}; 
        \node (validators2f) [process, right=0.1cm of validators2b.west, yshift=-0.1cm, fill=white, minimum width=1.5cm] {validators}; 
        \draw [arrow] (submittx.east) -- (submitcert.west);
        \node [below=0cm of submitcert] {\four};
        \draw [arrow] (validators1f.east) -| (submitcert.north west) node [pos=0.9, anchor=east] {\three};
        \draw [arrow] (submitcert.north east) |- (validators2b.west) node [anchor=west, pos=0.1, align=left] {\five \footnotesize{submit cert}}; 
        \node [below=0cm of validators2f] {\six};
        \node (earlyexecute) [circleprocess, right=2.5cm of submitcert] {};
        \node (byzagreement) [process, right=1cm of earlyexecute, yshift=1.4cm, minimum width=1.5cm, align=center] {Byzantine\\agreement \eight};
        \draw [arrow] (validators2f.east) -| (earlyexecute.north) node [pos=0.9, anchor=west, font=\footnotesize] {early execute};
        \draw [arrow] (validators2f.east) -- (byzagreement.west) node [midway, anchor=south, font=\footnotesize, align=center] {all certificates\\are sequenced};
        \draw [arrow] (submitcert.east) -- (earlyexecute.west);
        \node [below=0cm of earlyexecute] {\seven};
        \node (lateexecute) [circleprocess, below=1.4cm of byzagreement.center, anchor=center] {};
        \draw [arrow] (earlyexecute.east) -- (lateexecute.west);
        \draw [arrow] (byzagreement.south) -- (lateexecute.north) node [midway, anchor=west, align=left] {\six' \footnotesize{late}\\\footnotesize{execute}};
        \node [below=0cm of lateexecute] {\seven'};
        \node [circleprocess] (checkpoint) [right=1.5cm of lateexecute] {};
        \draw [arrow] (lateexecute.east) -- (checkpoint.west);
        \draw [arrow] (byzagreement.east) -| (checkpoint.north) node [pos=0, anchor=south west, font=\footnotesize] {checkpoint}; 
    \end{tikzpicture}
    \caption{General protocol flow of Sui Lutris~\cite{suiL} fast-path (\one-\seven) \& consensus failover system (\eight,\six',\seven').}
    \label{fig:overview}
\end{figure*}

\para{Transactions} A transaction is a signed command that specifies several input objects, a version number per object, and a set of parameters.
For owned objects, executing the transaction
consumes the input object versions and constructs a set of output objects---which may be the input objects at a later version or new objects.
Shared objects do not require a specified version. Instead, the system assigns the version on which the transaction executes based on the consensus sequence.
Input objects' versions must be the latest version in validators' databases and must not be re-used across transactions.
This limits concurrency because validators must process transactions with the same object sequentially, and we address this using \bcobject objects (\Cref{sec:bounded-counter}).

In Sui, a transaction is signed by a single address and therefore can use one or more objects owned by that address. A single transaction cannot use objects owned by more than one address and must use shared objects instead.
In this work, we will allow a transaction to use objects owned by different addresses if the transaction is signed by the all the owners (\Cref{sec:multi-owner-txs}), thus enabling validation of such transactions on the fast path.

\para{Certificates}
A \emph{certificate} ($\cert$) on a transaction contains the transaction and signatures from a quorum of at least $2f+1$ validators with their identifiers.
A certificate may not be unique, and the same logical certificate may be signed by different quorums of validators. However, two different valid certificates on the same transaction are treated as representing semantically the same certificate.

\subsection{Processing in the Fast Path and Consensus}
\Cref{fig:overview} provides an overview of Sui and, by extension, \sysname's common case. A transaction is sent by a user to all validators~(\one), who ensure it is correctly authenticated by the owners of all owned objects
and that all objects exist~(\two).
A correct validator rejects any conflicting transaction using the same owned object versions (the first transaction using an object acquires a \emph{lock} on it).
Validators then countersign the transaction~(\three) and return the signatures to the user. A quorum of signatures constitutes a \emph{certificate} for the transaction~(\four). Anyone may submit the certificate to the validators~(\five) that check it.

At this point, execution may take the fast path: If the certificate only references read-only and owned objects
(in \sysname, also owned \bcobject and collective objects)
it is executed immediately~(\six) and a signature on the effects of the execution is returned to the user. Signatures from $2f+1$ validators create an \emph{effects certificate}~(\seven), and the transaction is \emph{finalized}, i.e., it is guaranteed to never be rolled back, even if the set of validators change. If any shared objects are included, execution must wait for them to be assigned versions post-consensus. In all cases, certificates are input into consensus and sequenced~(\eight). Once sequenced, the system assigns a common version number to shared objects for each certificate, and execution can resume (steps~\six' and~\seven') to finalize the transaction. The common sequence of certificates is also used to construct checkpoints, which are guaranteed to include all finalized transactions~(\eight).

\para{Checkpoints and Reconfiguration} \label{sec:sui-checkpoints}
Sui ensures transaction finality before consensus for owned object transactions~(\seven) or after consensus for shared object transactions~(\seven').
A reconfiguration protocol ensures that any transaction finalized through the fast path will eventually be included in a checkpoint before the end of the epoch (epochs last roughly for 24 hours).

\para{Limitations of Sui}
Sui users are not allowed to submit conflicting transactions that reuse the same owned object versions in steps~(\one) and~(\two), limiting concurrency.
If a misconfigured user client behaves this way, neither transaction may successfully construct a certificate~(\four), and the owned object becomes locked until the end of the epoch, harming user experience. To avoid mistrusting users from locking each other's objects for a day, Sui restricts transactions to only contain objects from a single owner, thus limiting the applicability of the fast path. For similar reasons, objects used in the fast path may have at most one owner.

\sysname addresses all these limitations through changes to the fast path (steps \one-\seven).
\sysname increases the concurrency in Sui by allowing concurrent transactions on \bcobjects.
This can be trivially extended to bounded sets and any commutative objects (e.g., add-only counters or PN-sets~\cite{shapiro2011conflict}).
We also allow transactions with multi-owner authentication and collective objects that are not expected to have contention to execute in the fast path.
Since the risk of contention in such transactions is low but not zero, we finally show how to recover from contention using \fun.
\section{System Overview} \label{sec:overview}
We introduce \sysname and the setting in which it operates.

\subsection{Threat Model and Goals} \label{sec:model}
The adversary is computationally bounded, ensuring that standard cryptographic properties such as the security of hash functions, digital signatures, and other primitives hold. Under this assumption \sysname ensures \emph{validity}, meaning that all transactions that are executed have valid authorization (\Cref{def:valid-transaction}).

We consider a message-passing system with $n = 3f + 1$ validators running the \sysname protocol. An adversary can adaptively corrupt up to $f$ validators, referred to as \emph{Byzantine}, who may deviate arbitrarily from the protocol. The remaining validators, called \emph{honest}, follow the protocol.
The communication network is asynchronous and messages can be delayed arbitrarily.
Given these conditions, \sysname is \emph{safe}, that is, the union (merge) of all
transactions executed by honest validators
does not lead to invalid state transitions (\Cref{def:safety}).
This safety property subsumes the classic safety of blockchains since any fork in the state, if merged, would violate validity predicates such as conservation of value (e.g., in case of double-spends).
At the same time, this definition captures concurrent execution.

Finally, assuming messages among honest validators are eventually delivered, \sysname is \emph{live}, meaning honest validators eventually execute certifiably valid user transactions and update their state accordingly, and in the absence of new transactions, eventually converge to the same state (\Cref{def:liveness}).
Here, liveness is required for transactions that can be executed without causing invalid state transitions, as certified by a quorum of validators (\Cref{def:certificate-validity}).
In contrast, requiring liveness for all transactions issued by honest users is too strong and unachievable, since the validity of their state transitions depends on the system's current state.

\begin{definition}[Transaction Validity]
    \label{def:valid-transaction}
    A transaction $\tx$ is valid if
    all its input objects are owned by the transaction's signers.
\end{definition}

\begin{definition}[Certificate Validity]
    \label{def:certificate-validity}
    A transaction $\tx$ has a valid certificate if:
    \begin{itemize}
        \item The transaction is valid
        \item It has a quorum certificate signed by at least $2f+1$ validators
    \end{itemize}
\end{definition}

Let $T_p(t)$ denote the set of transactions executed by validator $p$ up to time $t$.
The applications specify a set of predicates  $\mathcal{P}$ over sequences of transactions.
For example, a predicate may require that no user can spend more than their account balance
or that no two transactions perform conflicting state updates.

\begin{definition}[Safety Properties]
    \label{def:safety}
    For any execution of \sysname with at most $f$ Byzantine validators:
    \begin{itemize}
        \item \textbf{Validity:} For all $p,t$: all $\tx \in T_p(t)$
              have valid certifcates.

        \item \textbf{\Globalsafety:} For all $t$, for all subsets $H$ of honest validators, there exists a sequence $T$ that contains $\bigcup_{p \in H} T_p(t)$ such that for all $P \in \mathcal{P}$: $P(T)$ is true.
    \end{itemize}
\end{definition}

\begin{definition}[Liveness Properties]
    \label{def:liveness}
    For any execution of \sysname with at most $f$ Byzantine validators:
    \begin{itemize}
        \item \textbf{Progress:} Every
              transaction with a valid certificate
              is eventually executed by all honest validators,
              unless an owner of its input objects equivocates, i.e., signs two transactions with the same object version as input.

        \item \textbf{\Convergence:} For all honest validators $p_1, p_2$ and all time $t$, there exists a time $t' \geq t$ such that
              $T_{p_1}(t') \supseteq T_{p_2}(t)$.

    \end{itemize}
\end{definition}

Alongside the above properties, our goal is to enable concurrent execution of as many transactions as possible.
\sysname uses separate paths for processing commutative/bounded-counter transactions and for other fast-path transactions. We prove the security of the former in \Cref{sec:bounded-counter-owned-proof} and the latter in \Cref{sec:proofs}.

\subsection{Motivating Applications of \sys}\label{sec:applications}

We present three example applications of \sysname: one utilizing the bounded counter and two employing multi-owner transactions (whose potential liveness risks are mitigated by \fun).

\para{Concurrent Payments}
The first application, concurrent debit or credit of an account, is one of the most common in existing blockchain systems.
Not only is this useful for payments, but it is also useful for gas debits, which are required for every transaction.
However, it presents significant challenges for parallel execution if not carefully designed.
Incoming transactions (credits) only increase the account's balance, so they are commutative.
Since they do not cause contention, they can easily be processed concurrently (even though prior blockchains do not do so).
Concurrent debits, on the other hand, are more complex due to the need for zero-balance checks.
These operations are non-commutative, hence they cause true contention that hinders concurrency.

Concurrent transactions may be achieved in UTXO-style blockchains by dividing one's account into smaller UTXOs. However, keeping track of these smaller UTXOs is cumbersome, and using the same UTXO twice is actually a double-spending attack.
In account- or object-based blockchains, one can similarly split their account into smaller accounts and concurrently access them, but it has the same challenges as with UTXOs.
For example, users on Sui frequently make mistakes, causing contention, which results in losing access for an entire day.
\sys resolves this issue without introducing such complications by using a mostly-commutative bounded counter, allowing half of the budget to be spent concurrently before requiring a sequential rebalancing transaction.

\para{Atomic swaps}
Atomic swaps enable two parties to exchange digital assets without relying on a trusted intermediary. While consensus-based blockchains achieve this through smart contracts, consensus-less environments face the risk of deadlock due to Byzantine users issuing concurrent transactions. Such scenarios can effectively lock both parties’ assets. Thus, in Sui, swaps require multiple transactions, with at least one (the swap) relying on consensus.

However, the risk of liveness loss on the fast path only occurs when an active attacker deliberately creates contention. \sys provides an effective mitigation of this risk with the \fun protocol. This safety net allows \sysname to support multi-owner transactions in the fast path, allowing fast path atomic swaps and other multi-party smart contracts, enhancing the programmability of consensus-less transactions.
Although not a real application, the same risk runs for users who inadvertently equivocate on their objects, leading to unexpected deadlocks and a poor user experience. Here too, \fun helps restore liveness quickly, lowering barriers to securely using an ultra-low latency blockchain.

\para{Regulated stablecoins} Regulated stablecoins~\cite{circle} require the issuer to be able to
block an account for regulatory reasons, besides its owner spending
from it.
This has eluded consensus-less systems since sequencing these potentially conflicting operations requires consensus.
Yet, the ability to
block objects is nearly never exercised, creating no practical contention. \sysname's collective
objects, that may be used by more than one owner (or complex
access control) enables such transactions in the fast path.

\subsection{Challenges}
\sysname defines new object types that allow for higher concurrency. We do not focus on purely commutative data structures for which a Byzantine fault-tolerant CRDT~\cite{kleppmann2021thinking} is sufficient but still cannot support transactions for blockchains because of gas payments that require
a comparison with zero, a non-commutative operation.

To resolve this, we define the \emph{bounded counter}, which runs in the fast path.
The design of this bounded counter creates our first challenge (\textbf{Challenge~1}): implementing nearly commutative objects in BFT settings.
To achieve this, we depart from prior work in CFT~\cite{almeida2013scalable} that splits the budget among replicas since in the BFT setting, a single malicious replica could sign infinite transactions.
As a result, we need to rely on quorums to distribute the budget collectively. Since there is an exponential number of potential quorums we reduce this to giving a sufficient budget to each validator to spend concurrently but not enough that an overspend can happen if all quorums have a minority of equivocating participants that spend infinite amounts. This tension results in our construction spending half of the bound when not under attack. We then show how we can reset the budget through a consistent read, allowing for the full amount to be concurrently spent with only $\log n$ points of synchronization.

The second challenge we take on is that existing consensus-less blockchains require transactions to operate on state owned by a single user, due to the fear of concurrency.
\sysname absolves concurrency for consensus-less transactions and instead identifies contention as the true culprit of correctness violations.
Thus, \sysname enhances programmability by defining new types of objects such as collective objects that are owned by multiple users,
as well as new types of transactions, such as multi-owner transactions (e.g., an asset swap) that are processed in the fast path.
This approach is powerful, and promises reduced latency for such operations that currently require consensus. But this creates our second challenge (\textbf{Challenge~2}): users may naturally submit conflicting transactions because contention is now possible. For example, two users may perform a swap and end up locking the objects due to bad timing. To address this challenge, \sysname uses a novel unlocking mechanism, called \fun, that allows users to resolve conflicts quickly, enabling users to submit new transactions within the latency of a consensus protocol.

\section{Concurrency through Commutativiy}
\label{sec:bounded-counter}

We present protocols that allow users to finalize a common category of transactions: updates to a \emph{\bcobject}, without using consensus.
This type of object is strictly harder to implement than CRDTs as it has a non-monotonic comparison with zero~\cite{keep-calm-crdt-on}.
Our \bcobject protocol is a modification to the fast-path protocol shown in \Cref{fig:overview}, while transactions on other owned objects continue to be processed as in \Cref{fig:overview}.

\subsection{The \BCObject Object} \label{sec:bounded-counter-object}

A bounded counter is an object that has a balance $\Bal \in \mathbb{R}$ as its state and supports
transactions with a parameter $\diff \in \mathbb{R}$,
allowing additions ($\delta > 0$) and subtractions ($\delta < 0$) on its balance, while maintaining the invariant $\Bal \geq 0$.
Our goal is to execute transactions concurrently to the extent possible.

\begin{definition}[Bounded counter] \label{def:bounded-counter}
    A bounded counter object has
    an authorized user called the owner
    and an initial balance $\initBal$,
    supports transactions $\tx$ with value $\tx.\delta \in \mathbb{R}$
    and has the following properties:
    \begin{enumerate}

        \item \textbf{\Globalsafetybc}: as in \Cref{def:safety} with predicate $P(T) = (\initBal + \sum_{\tx \in T} \tx.\delta \geq 0)$.

        \item \textbf{Progress}: If an honest owner sends a set of transactions $T$ such that $\initBal + \sum_{\tx \in T} \tx.\delta \geq 0$, then all validators will eventually
              execute all transactions in $T$.
        \item \textbf{Validity} and \textbf{\convergence}: as in \Cref{def:safety,def:liveness}.
    \end{enumerate}
\end{definition}

The progress property is a strengthening of \Cref{def:liveness} wherein we specify the conditions under which transactions issued by honest owners get certified and eventually executed.
We show a protocol in \Cref{sec:bounded-counter-owned} and prove that it achieves the above properties in \Cref{sec:bounded-counter-owned-proof}.

The bounded counter is useful for a common use case of blockchains, cryptocurrencies, where an account's balance can be represented using a \bcobject.
The bounded counter can be generalized to a bounded set in which elements can be added or removed, as long as the size of the set does not exceed a predefined bound.
This may be useful, for example, to mint a predefined number of limited-edition non-fungible tokens (NFTs).
More generally, the \bcobject can be used as a loop counter.
For simplicity, we describe the most common case of a real-valued counter bounded below by $0$ and consider payments from an account as the canonical use case.

\subsection{Key Ideas for the Bounded Counter} \label{sec:bounded-counter-key-ideas}

Consider a user (owner) who owns an account containing $\initBal$ units of money.
As an example, suppose that each transaction the user makes credits $1$ unit from its account, i.e., $\tx.\delta = -1$.
In this case,
safety requires that no honest validator execute more than $\initBal$ transactions.
Validators execute a transaction only if they see a certificate for the transaction, so it is sufficient to ensure that no more than $\initBal$ transactions get certified.

\para{Key Idea 1: Signing Budgets for Validators}
Recall that Sui's fast path prevents the certification of two transactions on the same version of an object by ensuring that validators sign at most one transaction per version (\Cref{sec:background}).
Extending this approach to allow concurrent transactions, we must ensure that each validator signs only a few transactions concurrently.
We assign to each validator a \emph{budget} $\Bud = \budFrac \initBal$, where $\budFrac = \frac{f+1}{2f+1}$, which is the maximum number of transactions the validator can sign.
Since each certified transaction is signed by at least $2f+1$ validators, $f+1$ of whom are honest,
for each certified transaction, at least $f+1$ is deducted from the total budget of all honest validators. Since the total budget of all honest validators starts at $(2f+1)\Bud$, the number of certified transactions can be at most $\frac{(2f+1)\Bud}{f+1} = \initBal$, even if the user and up to $f$ validators are malicious.

Thus, the signing budgets ensure \globalsafetybc.
However, this idea alone does not satisfy liveness. If the $f$ Byzantine validators abstain from signing transactions, each certified transaction requires signatures from $2f+1$ (all) \emph{honest} validators, causing every honest validator to decrease their budget. So, at most $\Bud = \eta \initBal$ transactions will get certified,
while liveness requires all of them to be certified eventually.

\tikzset{
    startstop/.style={rectangle, draw, dashed, text centered, minimum height=1.5em, minimum width=3em, font=\footnotesize},
    process/.style={circle, draw, minimum size=1.5em, text centered, font=\scriptsize},
    arrow/.style={thick,->,>=stealth},
    dashedarrow/.style={thick, dashed,->,>=stealth},
}

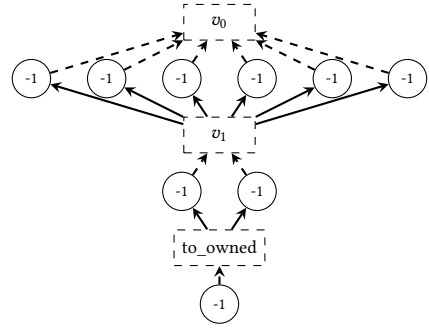
\begin{figure}
    \centering
    \begin{tikzpicture}[x=1cm,y=0.75cm]
        \node (v0) [startstop] at (0,0) {$\initVersion$};
        \node (node1) [process] at (0.5,-1) {-1};
        \node (node2) [process] at (1.5,-1) {-1};
        \node (node3) [process] at (2.5,-1) {-1};
        \node (node4) [process] at (-0.5,-1) {-1};
        \node (node5) [process] at (-1.5,-1) {-1};
        \node (node6) [process] at (-2.5,-1) {-1};
        \node (v1) [startstop] at (0,-2) {$v_1$};
        \node (node7) [process] at (0.5,-3) {-1};
        \node (node8) [process] at (-0.5,-3) {-1};
        \node (to_owned) [startstop] at (0,-4) {to\_owned};
        \node (node9) [process] at (0,-5) {-1};

        \draw [dashedarrow] (node1) -- (v0);
        \draw [dashedarrow] (node2) -- (v0);
        \draw [dashedarrow] (node3) -- (v0);
        \draw [dashedarrow] (node4) -- (v0);
        \draw [dashedarrow] (node5) -- (v0);
        \draw [dashedarrow] (node6) -- (v0);
        \draw [arrow] (v1) -- (node1);
        \draw [arrow] (v1) -- (node2);
        \draw [arrow] (v1) -- (node3);
        \draw [arrow] (v1) -- (node4);
        \draw [arrow] (v1) -- (node5);
        \draw [arrow] (v1) -- (node6);
        \draw [dashedarrow] (node7) -- (v1);
        \draw [dashedarrow] (node8) -- (v1);
        \draw [arrow] (to_owned) -- (node7);
        \draw [arrow] (to_owned) -- (node8);
        \draw [dashedarrow] (node9) -- (to_owned);
    
    \end{tikzpicture}
    \caption{Version updates in the bounded counter. Circles represent unit decrement transactions and dashed boxes versions. The initial balance is $\initBal = 9$, and $f = 1$. For version $\initVersion$, each validator has a budget of $\frac{f+1}{2f+1}\initBal = 6$. After $6$ transactions are certified, the user sends a version update (dashed box $v_1$) containing pointers to the $6$ certified transactions. The remaining balance is $9 - 6 = 3$ and the validators update their budget to $\frac{f+1}{2f+1}*3 = 2$. Finally, when the remaining balance is $1$, the user converts the bounded counter to a standard owned object and spends the remaining balance.
    }
    \label{fig:bc-owned-versions-example}
\end{figure}

\para{Key Idea 2: Version Updates}
When the user realizes that validators may have exhausted their budget (because $\Bud$ transactions have already been certified), the user sends a \emph{version update request} which includes pointers to all the previously certified transactions.
Upon seeing a valid version update request, each validator updates their budget
to $\budFrac$ fraction of the balance remaining after executing the certified transactions.
In the above example, this increases each validator's budget from $0$ to $\budFrac(1-\budFrac)\initBal$.
The validator also updates its local version to stop signing transactions with the previous version and start signing transactions with the new version (see \Cref{fig:bc-owned-versions-example} for an example).

Note that within a version, transactions get certified concurrently, while transactions across versions are certified sequentially.
At each version, the user can spend up to $\budFrac$ fraction of the remaining balance, until finally, when the remaining balance is small enough, the user can convert the bounded counter to a standard owned object and spend the remaining amount in one last transaction (shown in \Cref{fig:bc-owned-versions-example}).
Thus, with an initial balance of $\initBal$ and each transaction spending $1$ unit, the user requires only $\log(\initBal)$ version update requests, and therefore $O(\log(\initBal))$ latency to spend the entire balance.
In contrast, this would require $O(\initBal)$ latency on Sui's fast path or any other consensus protocol.

\subsection{Bounded Counter for Single Owner} \label{sec:bounded-counter-owned}
\Cref{alg:bounded-counter-owned} shows the algorithm run by validators for the bounded counter.
For simplicity, we specify the algorithm for when all transactions have only one bounded counter object as input, although transactions with multiple bounded counters and other owned objects as inputs can be processed in the fast path.

\para{Processing Transactions}
For each bounded counter object, the state maintained by each validator includes the current version $\Version$, the current budget $\Bud$, and the set of transactions $\SignedTxs$ the validator has signed (\Cref{loc:bc-owned-init-version,loc:bc-owned-init-budget,loc:bc-owned-init-signed}).
Upon receiving a new valid transaction, the validator checks that the transaction has the same version as the validator's current version (\Cref{loc:bc-owned-tx-version}).
If the transaction's value $\tx.\diff$ is within budget (\Cref{loc:bc-owned-tx-check-budget}), the validator deducts its budget if required, marks the transaction as signed, and sends the signature to the user.
Addition or debit transactions ($\tx.\diff > 0$) will always be signed, without decreasing the budget, while subtraction or credit transactions ($\tx.\diff < 0$) will be signed only up to the budget.

\para{Finalizing and Executing Transactions}
Upon receiving a valid certificate $\Cert$, the validator broadcasts the certificate to other validators, executes the transaction to update the counter's state locally, and sends an \emph{effects signature} to the user.
Unlike standard owned objects where the effects signature contains the new state of the object, for a bounded counter object, the effect signature simply marks the transaction as executed because the validator does not know the counter's correct state yet due to its partial view of other certified transactions.
Once the user receives $2f+1$ effects certificates, it considers the transaction finalized.

\para{Version Update Requests}
When the validator receives a version update request, it processes the request only if all the transactions to which the request points ($\req.\PrevTxs$) are certified (\Cref{loc:bc-owned-req-valid}).
Moreover, all the transactions in $\req.\PrevTxs$ must have the same version as the validator's current version (\Cref{loc:bc-owned-req-version}).
If so, the validator updates its budget.
For each certified transaction in $\req.\PrevTxs$, the validator increases for additions or decreases for subtractions its budget by $\budFrac$ times the transaction's value (\Cref{loc:bc-owned-req-update-budget}).
The validator also reclaims its spent budget for transactions it signed and are now included in $\req.\PrevTxs$ (\Cref{loc:bc-owned-req-regain-budget}), ensuring that budget is not deducted twice.
After the update, the validator's budget is $\budFrac$ times the remaining balance of the bounded counter after executing all certified transactions minus the values of the transactions that the validator signed but were not included in $\req.\PrevTxs$.
Finally, the validator updates its current version to $\req$ (\Cref{loc:bc-owned-req-update-version}).%
\footnote{In reality, the version identifier stored by the validator and included in each transaction can be a hash of the version update request. Collision resistance will bind each version identifier to a unique version update request.}

\begin{algorithm}[t]
    \caption{Bounded counter for single owner (validator logic)}
    \label{alg:bounded-counter-owned}
    \footnotesize
    \begin{algorithmic}[1]
        \LineComment{Initialize a bounded counter object}
        \LineComment{Initial version is $\initVersion$ and budget is computed from initial balance}
        \Procedure{InitBC}{$\initBal$}
        \State $\Version \gets \initVersion$ \Comment{current version for which the validator signs txs} \label{loc:bc-owned-init-version}
        \State $\Bud \gets \budFrac * \initBal$ \Comment{remaining budget for signing txs for $\Version$} \label{loc:bc-owned-init-budget}
        \State $\SignedTxs \gets \emptyset$ \Comment{set of txs signed so far} \label{loc:bc-owned-init-signed}
        \EndProcedure
        \vspace{1em}
        \LineComment{Executed upon receiving a transaction}
        \Procedure{ProcessTx}{$\tx$}
        \LineComment{Ensure same transaction is not proceesed twice}
        \If{$\tx \in \SignedTxs$} \Return $\sign{\tx}$ \EndIf \label{loc:bc-owned-already-signed}
        \InlineRequire{$\valid{\tx}$} \Comment{check $\tx$ has valid signatures} \label{loc:bc-owned-tx-valid}
        \InlineRequire{$\tx.\Version = \Version$} \label{loc:bc-owned-tx-version}
        \InlineRequire{$\Bud + \tx.\diff \geq 0$} \label{loc:bc-owned-tx-check-budget}
        \LineComment{Decrease the budget if decrement transaction}
        \If{$\tx.\diff < 0$} $\Bud \gets \Bud + \tx.\delta$  \label{loc:bc-owned-tx-decrease-budget}
        \EndIf
        \State $\SignedTxs \gets \SignedTxs \cup \{\tx\}$ \label{loc:bc-owned-tx-append-signed}
        \State \Return $\sign{\tx}$ \label{loc:bc-owned-tx-sign}
        \EndProcedure
        \vspace{1em}
        \LineComment{Upon receiving valid certificate signed by $2f+1$ validators; ensure same certificate not processed twice}
        \Procedure{ProcessCert}{$\cert$}
        \State $\broadcast{\cert}$ \label{loc:bc-owned-cert-broadcast}
        \State \textbf{wait until} $\funcCall{areExecuted}{\cert.\tx.\Version.\PrevTxs}$ \label{loc:bc-owned-cert-version}
        \If{$\valid{\cert}$} $\exec{\cert.\tx}$
        \Comment{execute tx, persist object's state} \label{loc:bc-owned-cert-execute}
        \EndIf
        \State \Return $\sign{\cert}$ \Comment{send signature to finalize $\cert.\tx$ to user} \label{loc:bc-owned-cert-effects-sign}
        \EndProcedure
        \vspace{1em}
        \LineComment{Upon receiving version update request signed by the user}
        \LineComment{Ensure same request is not processed twice}
        \Procedure{ProcessVersionUpdateReq}{$\req$}
        \State $\broadcast{\req}$ \label{loc:bc-owned-req-broadcast}
        \InlineRequire{$\req.\PrevVersion = \Version$
            \textbf{and} $\areCertified{\req.\PrevTxs}$} \label{loc:bc-owned-req-valid}
        \LineComment{Check all txs in $\req.\PrevTxs$ have version $\PrevVersion$}
        \InlineRequire{$\allVersionsMatch{\req.\PrevTxs, \req.\PrevVersion}$} \label{loc:bc-owned-req-version}
        \LineComment{Update budget based on txs in $\req.\PrevTxs$}
        \For{$\tx$ \textbf{in} $\req.\PrevTxs$}
        \State $\Bud \gets \Bud + \budFrac * \tx.\diff$ \label{loc:bc-owned-req-update-budget}
        \LineComment{Regain spent budget for txs included in $\req.\PrevTxs$}
        \If{$\tx \in \SignedTxs$ \textbf{and} $\tx.\diff < 0$} $\Bud \gets \Bud - \tx.\diff$ \EndIf \label{loc:bc-owned-req-regain-budget}
        \EndFor
        \LineComment{Start signing txs for the updated version}
        \State $\Version \gets \req$ \label{loc:bc-owned-req-update-version}
        \EndProcedure

    \end{algorithmic}

\end{algorithm}

\para{Honest Users}
The bounded counter protocol allows a user to get transactions certified concurrently under low contention, i.e., when the number of concurrent transactions is low. In \Cref{alg:bounded-counter-owned-user},
we specify how an honest user interacts with the bounded counter. In summary, the honest user ensures the following:
\begin{enumerate}
    \item The user sends transactions with the same $\Version$ until it sends a version update request (\Cref{alg:bounded-counter-owned-user}~\Cref{loc:bc-owned-user-create-tx})
    \item For any version, the user does not send transactions exceeding the validators' budget for that version (\Cref{alg:bounded-counter-owned-user}~\Cref{loc:bc-owned-user-check-budget}).
    \item Upon exhausting the budget for a version (\Cref{alg:bounded-counter-owned-user}~\Cref{loc:bc-owned-user-call-version-update}), the user sends a version update request containing pointers to all transactions it sent for that version (\Cref{alg:bounded-counter-owned-user}~\Cref{loc:bc-owned-user-create-req}).
    \item When the validators' budget falls to $0$ for a certain version (perhaps due to rounding), the user sends a transaction requesting to convert the bounded counter object to a standard owned object so that the small amount of remaining balance may be spent through a classic owned object transaction.
\end{enumerate}
An honest user can satisfy these requirements by keeping track of the current version and the validator's budget, all of which can be computed based on the transactions it has sent (see \Cref{alg:bounded-counter-owned-user}), without any communication.
In \Cref{sec:app-bcounter-algs-proofs}, we show that liveness holds for users that behave as specified by \Cref{alg:bounded-counter-owned-user}.

\begin{algorithm}[t]
    \caption{Bounded counter for single owner (user logic)}
    \label{alg:bounded-counter-owned-user}
    \footnotesize
    \begin{algorithmic}[1]
        \Procedure{Init}{$\initBal$}
        \State $\Version \gets \initVersion$
        \State $\Bud \gets \budFrac * \initBal$
        \State $\senttxs \gets \emptyset$
        \EndProcedure
        \vspace{1em}
        \LineComment{Invoked when application requires updating the bounded counter}
        \Procedure{Update}{$\diff$}
        \If{$\Bud + \diff < 0$} \Comment{If update exceeds the budget}
        \State $\Call{VersionUpdate}{ }$ \label{loc:bc-owned-user-call-version-update}
        \EndIf
        \LineComment{Abort if update exceeds the budget even after version update}
        \If{$\Bud + \diff < 0$} \Return Error \EndIf \label{loc:bc-owned-user-check-budget}
        \State $\tx \gets \{\Version \colon \Version, \diff \colon \diff \}$ \label{loc:bc-owned-user-create-tx}
        \If{$\diff < 0$} $\Bud \gets \Bud + \tx.\diff$ \EndIf \label{loc:bc-owned-user-tx-deduct-budget}
        \State $\funcCall{sendToValidators}{\tx}$
        \State $\senttxs \gets \senttxs \cup \{\tx\}$
        \EndProcedure
        \vspace{1em}
        \Procedure{VersionUpdate}{ }
        \State $\req \gets \{\PrevVersion: \Version, \PrevTxs: \senttxs\}$ \label{loc:bc-owned-user-create-req}
        \For{$\tx$ \textbf{in} $\senttxs$}
        \State $\Bud \gets \Bud + \budFrac * \tx.\diff$ \label{loc:bc-owned-user-req-update-budget}
        \If{$\tx.\diff < 0$} $\Bud \gets \Bud - \tx.\diff$ \EndIf \label{loc:bc-owned-user-req-regain-budget}
        \EndFor
        \LineComment{Send a version update request if there is enough remaining budget, otherwise request to convert the bounded counter to an owned object}
        \If{$\Bud \geq \textsf{minBudget}$} $\funcCall{sendToValidators}{\req}$
        \Else\xspace $\funcCall{sendToValidators}{\funcCall{convertToOwnedObject}{\senttxs}}$
        \EndIf
        \State $\Version \gets \req$
        \EndProcedure
    \end{algorithmic}
\end{algorithm}

\subsection{Security Proof}
\label{sec:bounded-counter-owned-proof}

\begin{theorem}
    \label{thm:bc-owned-validity}
    \Cref{alg:bounded-counter-owned} satisfies validity.
\end{theorem}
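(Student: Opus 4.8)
The plan is to trace every point at which \Cref{alg:bounded-counter-owned} can add a transaction to an honest validator's executed set and to check that each such transaction satisfies both clauses of \Cref{def:certificate-validity}. A quick case analysis over the procedures of the algorithm shows that $T_p(t)$ (for honest $p$) grows only through the execute call on \Cref{loc:bc-owned-cert-execute} inside \textsc{ProcessCert}: \textsc{InitBC} only initializes state, \textsc{ProcessTx} only signs and updates the budget, \textsc{ProcessVersionUpdateReq} only rewrites the budget and version, and the \textbf{wait until} on \Cref{loc:bc-owned-cert-version} merely delays execution rather than performing it. So the whole statement reduces to: whenever an honest validator reaches \Cref{loc:bc-owned-cert-execute} for a certificate $\cert$, the transaction $\cert.\tx$ has a valid certificate.

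First I would pin down what the guard \valid{\cert} on \Cref{loc:bc-owned-cert-execute} actually verifies. By the definition of a certificate in \Cref{sec:background-data-structures} together with \Cref{def:certificate-validity}, a certificate counts as valid precisely when (i) it carries signatures from a quorum of at least $2f+1$ distinct validators on $\cert.\tx$, and (ii) $\cert.\tx$ is a valid transaction per \Cref{def:valid-transaction}, i.e., all of its input objects are owned by the transaction's signers. Both are predicates that a single verifier can evaluate unilaterally from the certificate and transaction contents, so an honest validator that passes the \valid{\cert} check has confirmed both. Hence the execute step on \Cref{loc:bc-owned-cert-execute} is taken only for transactions that, by that very fact, possess a valid certificate.

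Then I would combine the two observations: for every honest $p$ and every time $t$, each $\tx \in T_p(t)$ was added via \Cref{loc:bc-owned-cert-execute} and therefore has a valid certificate, which is exactly the Validity clause of \Cref{def:safety}. I do not expect a genuine obstacle here---Validity is the ``local'' safety property and follows directly from the execute-time guard, with no quantitative reasoning about budgets or versions (those enter only for \globalsafetybc and Progress). The only subtleties worth stating explicitly are the case analysis showing that no other code path mutates the executed set, and the remark that certificate validity is adversary-proof in the required sense: a certificate assembled by the adversary out of up to $f$ Byzantine signatures still cannot pass the validity check unless it genuinely contains $2f+1$ signatures on a genuinely valid transaction, so no invalid transaction is ever executed by an honest validator.
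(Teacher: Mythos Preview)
Your proposal is correct and follows exactly the same approach as the paper, which simply observes that honest validators execute only at \Cref{loc:bc-owned-cert-execute} and only after the guard $\valid{\cert}$ succeeds. Your write-up is merely a more detailed unpacking of that one-line argument.
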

\begin{proof}
    Validators execute only valid transactions with a valid certificate (\Cref{alg:bounded-counter-owned}~\Cref{loc:bc-owned-cert-execute}).
\end{proof}

\begin{theorem}
    \label{thm:bc-owned-eventual-cons}
    \Cref{alg:bounded-counter-owned} satisfies eventual consistency.
\end{theorem}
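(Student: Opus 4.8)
The plan is to reduce eventual consistency to the claim that \emph{whenever an honest validator executes a transaction $\tx$, every honest validator eventually executes $\tx$}; call this \emph{(K)}. Given \emph{(K)}, the eventual consistency property of \Cref{def:liveness} follows immediately: for honest $p_1,p_2$ and any $t$, the set $T_{p_2}(t)$ is finite, so by \emph{(K)} there is a time $t'\ge t$ by which $p_1$ has executed every transaction in it, giving $T_{p_1}(t')\supseteq T_{p_2}(t)$.

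To prove \emph{(K)} I would use the gossip built into \Cref{alg:bounded-counter-owned}: \textsc{ProcessCert} rebroadcasts the certificate before anything else (\Cref{loc:bc-owned-cert-broadcast}) and \textsc{ProcessVersionUpdateReq} rebroadcasts the version update request (\Cref{loc:bc-owned-req-broadcast}). A transaction is executed only inside \textsc{ProcessCert}, and only after its certificate is checked valid (\Cref{loc:bc-owned-cert-execute}); so if $p_2$ executed $\tx$, it held and broadcast a valid certificate $\cert$ for $\tx$, and by the eventual-delivery assumption every honest validator eventually receives $\cert$ and runs \textsc{ProcessCert} on it. Since $\cert$ carries $\tx$, which in turn carries $\tx.\Version$ (the version update request, exposing $\tx.\Version.\PrevTxs$; under the hashing optimization of the footnote, that request was itself rebroadcast by whichever honest validator adopted that version in order to sign $\tx$, so it too reaches $p_1$), the only thing that can delay $p_1$ from executing $\tx$ is the guard $\funcCall{areExecuted}{\cert.\tx.\Version.\PrevTxs}$ on \Cref{loc:bc-owned-cert-version}.

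The core of the argument is then an induction on the depth $d(\tx)$ of $\tx.\Version$ in the version chain---the number of version updates separating $\tx.\Version$ from $\initVersion$, which is finite because each request points to a strictly earlier version via $\req.\PrevVersion$ together with $\allVersionsMatch{\req.\PrevTxs,\req.\PrevVersion}$. If $d(\tx)=0$, then $\tx.\Version=\initVersion$, so $\tx.\Version.\PrevTxs=\emptyset$, the guard is vacuous, and $p_1$ executes $\tx$ as soon as it receives $\cert$. If $d(\tx)=k>0$, write $v=\tx.\Version$; every $\tx'\in v.\PrevTxs$ has depth $k-1$, and because $p_2$ passed the guard on \Cref{loc:bc-owned-cert-version} it had already executed all of $v.\PrevTxs$, so the induction hypothesis applied to each $\tx'$ (with $p_2$ as the witness honest validator) gives that $p_1$ eventually executes all of $v.\PrevTxs$. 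At that point $p_1$'s own guard is satisfied and, $\cert$ being valid, $p_1$ executes $\tx$, completing the induction and hence \emph{(K)}.

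The inductive skeleton is routine; the step I expect to need the most care is justifying that $p_1$ can actually \emph{evaluate} $\funcCall{areExecuted}{v.\PrevTxs}$ and, where relevant, pass \textsc{ProcessVersionUpdateReq}'s $\areCertified{\cdot}$ check---i.e.\ pinning down precisely which objects must travel inside $\cert$ and which must arrive by gossip (the version update requests, and the certificates of the transactions named in each $\PrevTxs$ set), and confirming that every such object is in fact broadcast because some honest validator acted on it. A minor additional point is to check that the \textbf{wait until} on \Cref{loc:bc-owned-cert-version} has the intended semantics---it keeps re-evaluating and eventually fires rather than being tested once---and that re-delivering the same certificate or request is harmless, both of which the de-duplication comments in \Cref{alg:bounded-counter-owned} ensure.
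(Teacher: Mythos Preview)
Your proposal is correct and follows the same strategy as the paper: use the rebroadcast of certificates (\Cref{loc:bc-owned-cert-broadcast}) and of version update requests (\Cref{loc:bc-owned-req-broadcast}), together with the fact that any certified transaction was signed by at least one honest validator who therefore gossiped the corresponding version update, to conclude that every honest validator eventually executes any transaction some honest validator executed. The paper's proof compresses the handling of the recursive \textbf{wait until} guard into a single sentence (``along with their certified transactions''), whereas you make the underlying induction on version depth explicit; this is a welcome clarification but not a different approach.
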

\begin{proof}
    If one honest validator executes a transaction (\Cref{alg:bounded-counter-owned}~\Cref{loc:bc-owned-cert-execute}), the validator also broadcasts the certificate for the transaction (\Cref{loc:bc-owned-cert-broadcast}).
    Eventually, all validators also receive the version update request $\cert.\tx.\Version$ (along with their certified transactions) corresponding to the certified transaction because at least one honest validator (who signed the certified transaction) received the request and broadcast it (\Cref{loc:bc-owned-req-broadcast}).
    Thus, eventually, all honest validators will execute the transaction.
\end{proof}

As a warmup for proving \globalsafetybc, we first show how Key Idea 1 ensures that within a single version,
any subset of the certified transactions does not spend too much balance.
Refer to \Cref{tab:notation}
for a summary of the notation used in the proof.

\begin{table}[ht]
    \centering
    \begin{tabular}{|c|p{0.78\columnwidth}|}
        \hline
        $f$ & maximum number of adversarial validators \\
        $\initBal$ & initial balance of \bcobject \\
        $\Bud$ & validator's signing budget \\
        $\BudTotal_{v}$ & average budget of honest validators for version $v$ \\
        $\budFrac \triangleq \frac{f+1}{2f+1}$ & fraction of $\initBal$ assigned to $\Bud$ \\
        $\tx.\diff$ & quantity added by $\tx$ to the \bcobject \\
        $\Val{\CS}$ & sum of the quantities of transactions in the set $\CS$ \\
        $\initVersion$ & initial version of the \bcobject \\
        $\parent{v}$ & parent version of $v$, same as $v.\PrevVersion$ \\
        $\CV$ & set of versions with at least one certified transaction \\
        $\certtxs_{v}$ & set of certified transactions with version $v$ \\ 
        $\history{v}$ & history of $v$: transactions included in the version update requests of $v$ and all previous versions \\
        $\exctxs{v}$ & certified transactions with $v$ and all previous versions, except those in $\history{v}$ \\
        $\CS^{+}$, $\CS^{-}$ & Increment, decrement transactions in the set $\CS$ \\
        \hline
    \end{tabular}
    \caption{Table of notation}
    \label{tab:notation}
\end{table}

\begin{definition}
    The value function is defined on a set of transactions $\CS$ as $\Val{\CS} = \sum_{\tx \in \CS} \tx.\diff$.
\end{definition}

\begin{lemma}
    \label{lem:bc-owned-warmup}
    For any version $v$, let $\BudTotal_v$ be the average budget of all honest validators at the time when they set $\Version$ to $v$.%
    \footnote{Count a validator's budget as $0$ if it never sets $\Version$ to $v$.}
    Let $\certtxs_v$ be the set of certified transactions with version $v$.
    Then, for all $T \subseteq \certtxs_{v}$:
    $\Val{T} \geq -\frac{1}{\budFrac} \BudTotal_v$.
\end{lemma}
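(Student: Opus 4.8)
The plan is to prove a slightly tighter equivalent statement --- that the total magnitude of the \emph{decrement} transactions certified at version $v$ is at most $\frac{1}{\budFrac}\BudTotal_v$ --- and then read off the lemma for every $T \subseteq \certtxs_{v}$ as an immediate corollary. First I would reduce to that statement: for any $T \subseteq \certtxs_{v}$, split $T = T^{+} \cup T^{-}$ into its increment and decrement parts; since $\Val{T^{+}} \geq 0$ and $T^{-} \subseteq \certtxs_{v}^{-}$ where every quantity is nonpositive, $\Val{T} \geq \Val{T^{-}} \geq \Val{\certtxs_{v}^{-}}$. So it is enough to show $\Val{\certtxs_{v}^{-}} \geq -\frac{1}{\budFrac}\BudTotal_v$, equivalently $\sum_{\tx \in \certtxs_{v}^{-}} |\tx.\diff| \leq \frac{1}{\budFrac}\BudTotal_v = \frac{2f+1}{f+1}\BudTotal_v$.

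The heart of the argument is a double count of the quantity $D = \sum_{i \text{ honest}} d_i$, where $d_i$ is the total amount honest validator $i$ subtracts from its budget while signing version-$v$ decrement transactions. For the lower bound on $D$: by \Cref{def:certificate-validity} each $\tx \in \certtxs_{v}^{-}$ has at least $2f+1$ signers, so (intersecting with the at least $2f+1$ honest validators) at least $f+1$ honest signers; each of them signed $\tx$ only while its own $\Version$ equalled $\tx.\Version = v$ (\Cref{alg:bounded-counter-owned}, \Cref{loc:bc-owned-tx-version}), at which point it decremented its budget by exactly $|\tx.\diff|$ (\Cref{loc:bc-owned-tx-decrease-budget}) and never re-signs the same transaction (\Cref{loc:bc-owned-already-signed}); hence $\tx$ contributes at least $(f+1)|\tx.\diff|$ to $D$, giving $D \geq (f+1)\sum_{\tx \in \certtxs_{v}^{-}}|\tx.\diff|$. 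For the upper bound on $D$: for a fixed honest $i$, while its $\Version$ is $v$ the only changes to its budget happen in \textsc{ProcessTx} and are downward (by $|\tx.\diff|$ per signed decrement), and the check $\Bud + \tx.\diff \geq 0$ (\Cref{loc:bc-owned-tx-check-budget}) keeps the budget nonnegative throughout, so $d_i$ is at most $B_i$, the budget $i$ held when it set $\Version \gets v$ (with $d_i = B_i = 0$ if it never does, matching the footnote's convention); summing gives $D \leq \sum_{i \text{ honest}} B_i = (2f+1)\BudTotal_v$ by definition of $\BudTotal_v$. Combining the two bounds gives $(f+1)\sum_{\tx \in \certtxs_{v}^{-}}|\tx.\diff| \leq (2f+1)\BudTotal_v$, which is the reduced claim.

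The main obstacle I anticipate is making $B_i$ well-defined and genuinely equal to the quantity averaged in $\BudTotal_v$: the monotonicity argument above needs each honest validator to occupy version $v$ during a single contiguous interval, so that ``the budget upon entering $v$'' is unambiguous. I would justify this from the structure of versions --- each version update request fixes its $\PrevVersion$ field once, version identifiers are collision-resistant hashes of such requests, and a request is processed at most once --- so an honest validator's $\Version$ moves monotonically along an acyclic chain rooted at $\initVersion$ and never returns to $v$. The remaining steps are routine: keeping the ``count as $0$'' convention aligned on both sides of the double count, and observing that increments (and zero-value transactions) in $T$ only help the inequality, so they need no separate treatment.
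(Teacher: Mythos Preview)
Your approach mirrors the paper's---reduce to the decrement set $\certtxs_v^{-}$ and account for the honest budget consumed---but there is a slip in the upper bound on $D$. You write $\sum_{i\text{ honest}} B_i = (2f+1)\BudTotal_v$, whereas $\BudTotal_v$ is defined as the average over \emph{all} honest validators, of which there are $h = 3f+1-\freal \geq 2f+1$ where $\freal \leq f$ is the actual number of Byzantine validators; the correct identity is $\sum_i B_i = h\,\BudTotal_v$. Paired with your lower bound $D \geq (f+1)\sum_{\tx \in \certtxs_v^{-}}|\tx.\diff|$ (which keeps only the loose count of $f+1$ honest signers), this yields merely $\sum|\tx.\diff| \leq \frac{h}{f+1}\BudTotal_v$, which can be as large as $\frac{3f+1}{f+1}\BudTotal_v$ and does not give the lemma.

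The fix is to parameterize both sides of the double count by $h$: each certified transaction actually has at least $2f+1-\freal = h-f$ honest signers, so $D \geq (h-f)\sum|\tx.\diff|$, and then $\sum|\tx.\diff| \leq \frac{h}{h-f}\BudTotal_v \leq \frac{2f+1}{f+1}\BudTotal_v = \frac{1}{\budFrac}\BudTotal_v$, since $h/(h-f)$ is decreasing in $h \geq 2f+1$. This is exactly what the paper does, introducing $\freal$ explicitly and using $\frac{2f+1-\freal}{3f+1-\freal} \geq \budFrac$. With that correction your argument goes through; the remaining concern you raise about $B_i$ being well-defined is harmless here and the paper does not dwell on it.
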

\begin{proof}

    Let $\certtxs_v^{-} \subseteq \certtxs_v$ be the set of decrement transactions ($\tx$ such that $\tx.\diff < 0$). It is sufficient to show that
    $\Val{\certtxs_v^{-}} \geq -\frac{1}{\budFrac} \BudTotal_v$.

    Suppose that $\freal \leq f$ validators are adversarial (so, $n-\freal = 3f + 1 - \freal$ are honest).
    Each certified transaction has $2f+1$ signatures, of which at most $\freal$ are from adversarial validators.
    All other signatories are honest, thus if $\tx \in \certtxs_v^{-}$ is certified, at least $(2f + 1 - \freal)|\tx.\diff|$ is deducted from the total budget of all honest validators (note that $\tx.\diff < 0$).
    Suppose, for contradiction, that a set of decrement transactions $\certtxs_v^{-}$ are certified such that
    $\Val{\certtxs_v^{-}} < -\frac{1}{\budFrac} \BudTotal_v$.
    Then, the average budget of all honest nodes is
    $\BudTotal \leq \BudTotal_v + \frac{2f+1 - \freal}{3f+1-\freal} \Val{\certtxs_v^{-}} \leq \BudTotal_v + \frac{f+1}{2f+1} \Val{\certtxs_v^{-}} < 0$.
    This is a contradiction because no honest validator signs a transaction that would cause its budget to fall below $0$ (\Cref{alg:bounded-counter-owned}~\Cref{loc:bc-owned-tx-check-budget}), so the average budget of honest validators cannot fall below $0$.
\end{proof}

Next, we will prove that the version updates introduced as Key Idea 2 preserve \globalsafetybc.
To do this, we first prove that the versions corresponding to certified transactions form a chain in which each version is the $\PrevVersion$ of the next (\Cref{lem:versions-form-chain}).
This follows from a quorum-intersection argument and the fact that the versions of transactions signed by any single honest validator form a chain.
This property allows us to define a linearly growing history of the bounded counter containing certified transactions from each version update request's $\PrevTxs$ (\Cref{def:bc-owned-proof-history}).
Using this, we show that the average budget of honest validators at the time they update their local $\Version$ to $v$ is $\budFrac$ times the balance after executing transactions in the counter's history, minus any budget deducted for certified transactions that were not included in the history (\Cref{lem:bc-owned-budget-value}).
Combining this with \Cref{lem:bc-owned-warmup}, we prove that the certified transactions across all versions do not spend more than the initial balance (\Cref{thm:bc-owned-boundedness}).
Since validators only execute certified transactions, this ensures global safety.

\begin{definition}
    \label{def:bc-owned-proof-parent}
    For any version $v \neq \initVersion$, define the parent version $\parent{v}$ as $v.\PrevVersion$.
\end{definition}

\begin{lemma}
    \label{lem:versions-form-chain}
    Let $\CV$ be the set of versions for which there exists at least one certified transaction. If $\CV \neq \emptyset$, then
    $\CV = \{v_0, ..., v_{|\CV|-1}\}$ such that
    for all $i=1,...,|\CV|-1$, $\parent{v_i} = v_{i-1}$.
\end{lemma}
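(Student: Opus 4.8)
The plan is to reduce the statement to two structural facts about honest validators' local $\Version$ variable, glue them together with a quorum-intersection argument, and then upgrade ``ancestor'' to ``parent'' using the guards that protect a version update. \textbf{Fact A:} for a single honest validator, the successive values taken by $\Version$ over time form a chain under $\parent{\cdot}$ rooted at $\initVersion$, and once a value is left it is never revisited. This is immediate from the code: $\Version$ is set to $\initVersion$ in \textsc{InitBC} (\Cref{loc:bc-owned-init-version}) and is reassigned only on \Cref{loc:bc-owned-req-update-version} inside \textsc{ProcessVersionUpdateReq}, which first requires $\req.\PrevVersion$ to equal the current $\Version$ (\Cref{loc:bc-owned-req-valid}) and then sets $\Version \gets \req$; so if $v^{(0)} = \initVersion, v^{(1)}, v^{(2)}, \dots$ are the successive values, then $\parent{v^{(j)}} = v^{(j)}.\PrevVersion = v^{(j-1)}$, and as no line ever rewinds $\Version$, any two values this validator ever holds are comparable under the ancestor relation, the earlier-held being an ancestor of the later-held. \textbf{Fact B:} every $v \in \CV$ equals the local $\Version$ of some honest validator at some moment, since a certificate on a transaction $\tx$ with $\tx.\Version = v$ carries at least $2f+1$ signatures, of which at least $f+1 \geq 1$ are honest, and an honest validator signs $\tx$ only when $\tx.\Version$ equals its current $\Version$ (\Cref{loc:bc-owned-tx-version}).

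From Facts A and B I would show that $\CV$ is totally ordered by the ancestor relation. Given $v, v' \in \CV$, take certificates on transactions of versions $v$ and $v'$; their signer sets are quorums of size at least $2f+1$ out of $n = 3f+1$ validators, so they overlap in at least $(2f+1) + (2f+1) - (3f+1) = f+1$ validators, hence in some honest validator $p$ (at most $f$ are Byzantine). By Fact B, $p$ held $\Version = v$ at one time and $\Version = v'$ at another, and by Fact A both lie on $p$'s single parent-chain, so one is an ancestor of the other. Thus $\CV$ is a chain in the ancestor order and can be enumerated as $\CV = \{v_0, v_1, \dots, v_{|\CV|-1}\}$ with $v_{i-1}$ a strict ancestor of $v_i$ for each $i \geq 1$.

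To replace ``strict ancestor'' with ``parent'', the key step is that $\CV$ is closed under parents, except possibly at its minimum: if $v \in \CV$ and $v \neq \initVersion$, then $\parent{v} \in \CV$. Indeed, by Fact B some honest validator adopted local version $v$, which, since $v \neq \initVersion$, happened by accepting a version update request $\req$ with $\digest(\req) = v$ (taking the version identifier to be the hash of the request, so this request is unique by collision resistance and $\parent{v} = \req.\PrevVersion$ is well-defined), and the guards $\areCertified{\req.\PrevTxs}$ (\Cref{loc:bc-owned-req-valid}) and $\allVersionsMatch{\req.\PrevTxs, \req.\PrevVersion}$ (\Cref{loc:bc-owned-req-version}) ensure $\req.\PrevTxs$ is a non-empty set of certified transactions, all of version $\req.\PrevVersion = \parent{v}$; any of them witnesses $\parent{v} \in \CV$. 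Applying this with $v = v_i$ for $i \geq 1$ (where $v_i \neq \initVersion$ because it has the strict ancestor $v_{i-1}$) gives $\parent{v_i} \in \CV$. Now $\parent{v_i}$ and $v_{i-1}$ are both ancestors of $v_i$ and hence comparable on $v_i$'s ancestor chain $v_i, \parent{v_i}, \dots, \initVersion$; if $v_{i-1}$ were a strict ancestor of $\parent{v_i}$, then $\parent{v_i} \in \CV$ would lie strictly between $v_{i-1}$ and $v_i$, contradicting their consecutiveness in $\CV$, while if $\parent{v_i}$ were a strict ancestor of $v_{i-1}$, then $v_{i-1}$ would lie strictly between $v_i$ and its immediate parent $\parent{v_i}$, which is impossible. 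Hence $\parent{v_i} = v_{i-1}$, which is the claim.

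I expect the closure step to be the one delicate point, since it is where the argument must use that an accepted version update request carries a \emph{non-empty} list of certified previous transactions (equivalently, that $\areCertified{\emptyset}$ is false, or that non-emptiness is part of request validity). Without it, a Byzantine user could submit ``empty'' version updates that carry honest validators past a version on which nothing ever gets certified, opening a gap in $\CV$ and breaking the chain. The remaining ingredients — Facts A and B, the quorum count, and the comparability bookkeeping — are routine.
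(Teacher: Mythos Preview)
Your proof is correct and takes essentially the same approach as the paper: honest validators' $\Version$ values form a parent-chain from $\initVersion$, every certified version is held by some honest validator, and quorum intersection forces any two elements of $\CV$ to be ancestor-comparable. You are more explicit than the paper in isolating the closure of $\CV$ under parents and in flagging that this step needs $\req.\PrevTxs$ to be non-empty---the paper uses this implicitly when arguing $\initVersion \in \CV$ but never states the general closure, so its passage from ``tree with no siblings'' to a consecutive parent-chain is informal precisely where your argument is careful.
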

\begin{proof}
    If no transactions have been certified, $\CV = \emptyset$.
    If at least one transaction is certified, then $\initVersion \in \CV$.
    This is because initially, honest validators sign only transactions with version $ \initVersion$ (\Cref{alg:bounded-counter-owned}~\Cref{loc:bc-owned-init-version,loc:bc-owned-tx-version}) and will not sign transactions for a different version until they receive a version update request containing certified transactions (\Cref{loc:bc-owned-req-valid}). Since a certificate requires at least one honest validator's signature, at least one certified transaction must have version $\initVersion$.

    For any honest validator $j$, if it updates $\Version$ from $v$ to $v'$, it must be such that $\parent{v'} = v$ (\Cref{alg:bounded-counter-owned}~\Cref{loc:bc-owned-req-valid}).
    Therefore, for all $v \in \CV$,
    there exists a sequence $\initVersion, ..., v$ in which each version is the parent of the next version.
    In other words, the versions in $\CV$ form a tree rooted at $\initVersion$ with parent links as edges.

    Now all that remains to show is that this tree is, in fact, a chain.
    That is, there is no $v, v' \in \CV$ such that $v \neq v'$ and $\parent{v} = \parent{v'}$.
    This follows from quorum intersection.
    If there was $v, v' \in \CV$ such that $v \neq v'$ and $\parent{v} = \parent{v'}$, then for both versions $v$ and $v'$, there is a set of $2f+1$ validators that signed transactions with that version. These two sets of $2f+1$ validators have at least $2(2f+1) - n = f+1$ validators in common (since $n = 3f+1$).
    However, since at most $f$ validators are adversarial, at least one honest validator signed both transactions with version $v$ and $v'$.
    However, this is a contradiction because once the honest validator signs a transaction for version $v$, it will never sign a transaction for version $v'$
    since there is no path $v, ..., v'$ in which each is a parent of the next one.
\end{proof}

\begin{definition}
    \label{def:bc-owned-proof-history}
    Define the history $\history{v}$ of a version as $\history{\initVersion} = \emptyset$ and $\history{v \neq \initVersion} = \history{\parent{v}} \cup v.\PrevTxs$.
\end{definition}

Let $\certtxs^i = \certtxs_{v_1} \cup ... \cup \certtxs_{v_i}$ be the set of certified transactions with versions up to $v_i$.
Let
$\exctxs{v_i} = \certtxs^i \setminus \inctxs{v_i}$ be the set of certified transactions not included in the history.
For any set of transactions $\CS$,
$\CS^{-} \subseteq \CS$ contains transactions $\tx$ such that $\tx.\diff < 0$, and $\CS^{+} = \CS \setminus \CS^{-}$.

\begin{lemma}
    \label{lem:bc-owned-budget-value}
    For any version $v_i \in \CV$,
    the average budget of all honest validators at the time they upgrade to version $v_{i}$ satisfies
    $\BudTotal_{v_{i}} \leq \budFrac (\initBal + \Val{\inctxs{v_i}} + \Val{\excnegtxs{v_i}})$.
\end{lemma}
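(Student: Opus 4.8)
The plan is to prove the inequality by induction on the position $i$ of $v_i$ in the version chain produced by \Cref{lem:versions-form-chain}, tracking a closed form for an individual honest validator's budget at the instant it adopts $v_i$. For the base case $i=0$ we have $v_0=\initVersion$: every honest validator runs \textsc{InitBC} and sets $\Bud=\budFrac\initBal$ (\Cref{alg:bounded-counter-owned}~\Cref{loc:bc-owned-init-budget}), while $\history{v_0}=\emptyset$ and no certified transaction precedes $v_0$, so $\excnegtxs{v_0}=\emptyset$ and the claim holds with equality.

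For the inductive step I would first pin down how one honest validator $j$'s budget evolves. Fix $j$ that ever sets $\Version=v_i$; by \Cref{lem:versions-form-chain} and the guard $\req.\PrevVersion=\Version$ (\Cref{loc:bc-owned-req-valid}) it must pass through $v_0,v_1,\dots,v_{i-1},v_i$ in order. Between adopting $v_{i-1}$ and adopting $v_i$ its budget changes only by (a) signing version-$v_{i-1}$ decrement transactions, each costing $|\tx.\diff|$ (\Cref{loc:bc-owned-tx-decrease-budget}; increments are free), and (b) processing the version update request $v_i$, which adds $\budFrac\,\tx.\diff$ for each $\tx\in v_i.\PrevTxs$ (\Cref{loc:bc-owned-req-update-budget}) and refunds $|\tx.\diff|$ for each decrement in $v_i.\PrevTxs$ that $j$ had signed (\Cref{loc:bc-owned-req-regain-budget}). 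Unrolling from $\Bud=\budFrac\initBal$, and using that $v_1.\PrevTxs,\dots,v_i.\PrevTxs$ are pairwise disjoint with union $\inctxs{v_i}$ (\Cref{def:bc-owned-proof-history}), I expect to get: when $j$ reaches $v_i$ its budget equals $\budFrac(\initBal+\Val{\inctxs{v_i}})+\Val{E_j}$, where $E_j$ is the set of decrement transactions $j$ signed at a version below $v_i$ that never entered $\inctxs{v_i}$ (a signed decrement that does land in $\inctxs{v_i}$ has its $|\tx.\diff|$ cost refunded and is left with exactly the $\budFrac\,\tx.\diff$ charge already in the first term, whereas one that never lands there is never refunded and contributes its full $\tx.\diff$ to $\Val{E_j}$).

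Averaging over the honest validators, those that never reach $v_i$ count as $0$; assuming (and carrying through the induction) that $\initBal+\Val{\inctxs{v_i}}\ge 0$, discarding them only lowers the average, so $\BudTotal_{v_i}\le\budFrac(\initBal+\Val{\inctxs{v_i}})+\frac{1}{|H|}\sum_j\Val{E_j}$ with the sum over the honest validators that do reach $v_i$. Rewriting $\sum_j\Val{E_j}=\sum_{\tx}\tx.\diff\,h(\tx)$, where $\tx$ ranges over decrements below $v_i$ outside $\inctxs{v_i}$ and $h(\tx)$ counts those honest validators that signed $\tx$: uncertified $\tx$ have negative weight, so deleting them only raises the sum and leaves transactions in $\excnegtxs{v_i}$. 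For a certified $\tx$, at most $\freal\le f$ of its $\ge 2f+1$ signers are adversarial, so $h(\tx)\ge 2f+1-\freal$; the elementary fact $(2f+1-\freal)(2f+1)\ge(3f+1-\freal)(f+1)$ (i.e.\ $f(f-\freal)\ge 0$) gives $h(\tx)\ge\budFrac\,|H|$, hence $\tx.\diff\,h(\tx)\le\budFrac\,|H|\,\tx.\diff$ because $\tx.\diff<0$. Summing over $\tx\in\excnegtxs{v_i}$ and dividing by $|H|$ yields $\frac{1}{|H|}\sum_j\Val{E_j}\le\budFrac\,\Val{\excnegtxs{v_i}}$, which closes the induction.

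The main obstacle, I expect, is the interaction between the per-validator bookkeeping and the averaging step. On the bookkeeping side one must charge each signed decrement exactly once---as a $\budFrac$-fraction through the history or in full through $E_j$---which hinges on the disjointness of the $\PrevTxs$ sets and the precise combination of \Cref{loc:bc-owned-tx-decrease-budget,loc:bc-owned-req-update-budget,loc:bc-owned-req-regain-budget}. On the averaging side, a Byzantine owner can fork the chain of version update requests into dead ends that strand some honest validators, so one must establish both that dropping their (nonnegative) budgets is harmless---this is exactly where $\initBal+\Val{\inctxs{v_i}}\ge 0$ enters---and that the count $h(\tx)$ of honest, $v_i$-reaching signers of each certified $\tx\in\excnegtxs{v_i}$ is still large enough for the quorum bound (a signer whose deduction was refunded through a dead-end request is itself stranded and already excluded). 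Reconciling ``whose signing deduction still sticks'' with ``which transactions are counted in $\excnegtxs{v_i}$'' is the part I would be most careful about.
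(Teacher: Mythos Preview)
Your overall accounting---start from $\budFrac\initBal$, subtract for signed decrements, add $\budFrac\cdot\tx.\diff$ per transaction in each processed $\PrevTxs$, refund signed decrements that land in the history---is exactly the four-component decomposition the paper uses, and your final comparison $(2f+1-\freal)(2f+1)\ge(3f+1-\freal)(f+1)$ is the same inequality the paper invokes to pass from $\frac{2f+1-\freal}{3f+1-\freal}$ to $\budFrac$. In that sense the proposal and the paper's proof are the same argument.

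Where you diverge from the paper is in \emph{splitting} the bound into two pieces: first $\BudTotal_{v_i}\le\budFrac(\initBal+\Val{\inctxs{v_i}})+\frac{1}{|H|}\sum_{j\in H'}\Val{E_j}$, and then separately $\frac{1}{|H|}\sum_{j\in H'}\Val{E_j}\le\budFrac\,\Val{\excnegtxs{v_i}}$. The second piece relies on $h(\tx)\ge 2f+1-\freal$ where $h$ counts only honest \emph{$v_i$-reaching} signers, and that can fail. Concretely: take $f=1$, honest $B,C,D$, Byzantine $A$, $\initBal=100$; at $v_0$ the owner gets $\tx_1$ with $\diff=-200/3$ certified by $A,B,C$; the owner then issues $v_1$ with $\PrevTxs=\emptyset$ (adopted by $C,D$) and a conflicting $v_1'$ with $\PrevTxs=\{\tx_1\}$ (adopted by $B$). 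Now $\excnegtxs{v_1}=\{\tx_1\}$, but among the $v_1$-reachers only $C$ signed it, so $h(\tx_1)=1<2=2f+1-\freal$. Your second inequality asserts $-200/9\le-400/9$, which is false. The lemma itself still holds here ($\BudTotal_{v_1}=200/9$ equals the claimed bound), because the slack you discarded in the \emph{first} inequality---replacing $|H'|=2$ by $|H|=3$ in front of $\budFrac(\initBal+\Val{\inctxs{v_i}})$---is exactly $200/9$ and cancels the deficit. So the stranded-validator concern you flag in your last paragraph is real, and it is fatal to the split argument.

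The paper's proof avoids this by never separating the two pieces: it writes the single inequality $\BudTotal_{v_i}\le\budFrac\initBal+\frac{2f+1-\freal}{3f+1-\freal}\Val{{\certtxs^i}^-}+\budFrac\Val{\history{v_i}}-\frac{2f+1-\freal}{3f+1-\freal}\Val{\history{v_i}^-}$ directly, summing the four components over \emph{all} honest validators (not just reachers). To make your route rigorous you would have to keep the $(|H|-|H'|)\budFrac(\initBal+\Val{\inctxs{v_i}})$ slack and show it dominates $\sum_{\tx\in\excnegtxs{v_i}}\bigl(2f+1-\freal-h(\tx)\bigr)(-\tx.\diff)$, rather than bounding the two terms independently.
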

\begin{proof}
    Throughout the execution, an honest validator i) starts with an initial budget of $\budFrac\initBal$, then ii) decreases its budget for every decrement transaction signed (\Cref{alg:bounded-counter-owned}~\Cref{loc:bc-owned-tx-decrease-budget}), iii) updates its budget for every certified transaction included in a version update request (\Cref{loc:bc-owned-req-update-budget}), and iv) reclaims its budget for every certified decrement transaction it had previously signed that is included in a version update request (\Cref{loc:bc-owned-req-regain-budget}).
    Suppose that $\freal \leq f$ validators are adversarial (so, $n-\freal = 3f + 1 - \freal$ are honest).
    Combining these four components,
    the average budget of all honest validators at the time they update to version $v_i$ is
    \begin{IEEEeqnarray*}{rCl}
        \BudTotal_{v_i} &\leq& \budFrac\initBal
        + \frac{2f+1-\freal}{3f+1-\freal} \Val{{\certtxs^i}^{-}} + \budFrac\Val{\history{v_i}}
        - \frac{2f+1-\freal}{3f+1-\freal}\Val{\history{v_i}^{-}} \\
        &=& \budFrac(\initBal + \Val{\history{v_i}}) + \frac{2f+1-\freal}{3f+1-\freal}\Val{\excnegtxs{v_i}} \\
        &\leq& \budFrac (\initBal + \Val{\inctxs{v_i}} + \Val{\excnegtxs{v_i}}).
    \end{IEEEeqnarray*}
\end{proof}

\begin{theorem}
    \label{thm:bc-owned-boundedness}
    The bounded counter protocol (validator code: \Cref{alg:bounded-counter-owned}) satisfies \globalsafetybc.
\end{theorem}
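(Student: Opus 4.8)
The plan is to reduce \globalsafetybc (\Cref{def:safety}, restated for the bounded counter in \Cref{def:bounded-counter} with predicate $P(T) = (\initBal + \Val{T} \geq 0)$) to a single inequality about the set of \emph{all} certified transactions. Fix a time $t$ and a set $H$ of honest validators. Since a validator executes a transaction only after checking a certificate for it (\Cref{alg:bounded-counter-owned}~\Cref{loc:bc-owned-cert-execute}), every transaction in $\bigcup_{p \in H} T_p(t)$ lies in the finite set $\certtxs$ of transactions certified by time $t$. As $P$ depends only on the underlying set, ordering $\certtxs$ arbitrarily and taking the sequence $T = \certtxs$ reduces the goal to proving $\initBal + \Val{\certtxs} \geq 0$.

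Next I would set up the induction skeleton. By \Cref{lem:versions-form-chain}, either $\certtxs = \emptyset$ (and the claim is trivial) or the versions carrying certified transactions form a chain $\CV = \{v_0, \dots, v_\ell\}$ with $v_0 = \initVersion$ and $\parent{v_i} = v_{i-1}$. For each $i$ set $B_i \triangleq \initBal + \Val{\history{v_i}} + \Val{\excnegtxs{v_i}}$ -- this is precisely the upper bound on $\BudTotal_{v_i}/\budFrac$ supplied by \Cref{lem:bc-owned-budget-value}. Since $\certtxs = \history{v_\ell} \sqcup \exctxs{v_\ell}$ and discarding the non-negative increment transactions in $\exctxs{v_\ell}$ only lowers the sum, $\initBal + \Val{\certtxs} \geq B_\ell$. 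So it suffices to show $B_i \geq 0$ for all $i \leq \ell$.

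The core of the proof is an induction on $i$ along the chain that uses both lemmas at every step. For $i = 0$, $\history{v_0} = \emptyset$ and $\exctxs{v_0} = \certtxs_{v_0}$, so $B_0 = \initBal + \Val{\certtxs_{v_0}^{-}}$; as every honest validator initialises version $\initVersion$ with budget $\budFrac\initBal$ (\Cref{alg:bounded-counter-owned}~\Cref{loc:bc-owned-init-budget}), we have $\BudTotal_{v_0} = \budFrac\initBal$, and \Cref{lem:bc-owned-warmup} applied to $\certtxs_{v_0}^{-}$ gives $\Val{\certtxs_{v_0}^{-}} \geq -\initBal$, i.e.\ $B_0 \geq 0$. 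For the inductive step I would first unfold the definitions of $\history{\cdot}$ and $\exctxs{\cdot}$: because $v_i.\PrevTxs$ is (by the checks on \Cref{alg:bounded-counter-owned}~\Cref{loc:bc-owned-req-valid,loc:bc-owned-req-version}) a set of certified version-$v_{i-1}$ transactions disjoint from $\history{v_{i-1}}$, one obtains $\history{v_i} = \history{v_{i-1}} \sqcup v_i.\PrevTxs$ and $\exctxs{v_i} = (\exctxs{v_{i-1}} \setminus v_i.\PrevTxs) \sqcup \certtxs_{v_i}$ with $v_i.\PrevTxs \subseteq \exctxs{v_{i-1}}$; adding up values, the $v_i.\PrevTxs^{-}$ contributions cancel and leave the recurrence $B_i = B_{i-1} + \Val{v_i.\PrevTxs^{+}} + \Val{\certtxs_{v_i}^{-}}$. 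Then \Cref{lem:bc-owned-warmup} (with $T = \certtxs_{v_i}^{-}$, giving $\Val{\certtxs_{v_i}^{-}} \geq -\tfrac{1}{\budFrac}\BudTotal_{v_i}$) combined with \Cref{lem:bc-owned-budget-value} ($\BudTotal_{v_i} \leq \budFrac B_i$) yields $\Val{\certtxs_{v_i}^{-}} \geq -B_i$; substituting the recurrence gives $2 B_i \geq B_{i-1} + \Val{v_i.\PrevTxs^{+}} \geq 0$ by the induction hypothesis and non-negativity of increments, so $B_i \geq 0$.

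I expect the main obstacle to be the bookkeeping for certified transactions that no version-update request ever points to -- the ``leftovers'' that accumulate in $\exctxs{v_i}$. Honest signers deduct budget for them and never reclaim it, so the set-level recurrence for $\exctxs{v_i}$ and the exact cancellation of the $v_i.\PrevTxs^{-}$ terms (which is what lets $B_i$ absorb them) must be carried out carefully; this, together with the averaging convention behind $\BudTotal_{v_i}$ (a validator that never reaches $v_i$ contributes $0$), is what makes \Cref{lem:bc-owned-warmup,lem:bc-owned-budget-value} fit together. Everything else is routine arithmetic turning on the identity $\tfrac{1}{\budFrac} = \tfrac{2f+1}{f+1}$.
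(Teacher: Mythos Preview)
Your proof is correct and rests on the same two lemmas as the paper, but the final assembly differs and is longer than necessary. The paper applies \Cref{lem:bc-owned-warmup} and \Cref{lem:bc-owned-budget-value} once, at the terminal version $v_k$, to bound the worst-case $\Val{T}$ in three lines; you instead set up an induction along the chain to establish $B_i \geq 0$ for every $i$. That induction is sound but superfluous: from your own intermediate step $\Val{\certtxs_{v_i}^{-}} \geq -B_i$ (obtained by chaining the two lemmas at level $i$), the non-positivity of $\Val{\certtxs_{v_i}^{-}}$ already yields $B_i \geq -\Val{\certtxs_{v_i}^{-}} \geq 0$ with no appeal to $B_{i-1}$ or the recurrence. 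On the other hand, your choice of witness $T = \certtxs$ is arguably cleaner than the paper's: the paper takes $T = \bigcup_{p\in H} T_p(t)$ and then needs the extra containment $T \supseteq \history{v_k}$ (from \Cref{loc:bc-owned-cert-version}) to pin down the worst case, whereas your choice sidesteps that step entirely since \Cref{def:safety} only asks for \emph{some} $T$ containing the union.
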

\begin{proof}
    For any given subset of honest validators, let $T$ be the set of transactions executed by some validator in this subset.
    Let $v_k$ be the latest version in $T$.
    Since validators only execute certified transactions (\Cref{alg:bounded-counter-owned}~\Cref{loc:bc-owned-cert-execute}),
    $T \subseteq \certtxs^k$.
    Moreover, validators execute transactions in $v_k.\PrevTxs$ before executing transactions with version $v_k$ (\Cref{alg:bounded-counter-owned}~\Cref{loc:bc-owned-cert-version}), so
    $T \supseteq \history{v_{k}}$.

    Recall that we partitioned $\certtxs^{k-1} = \inctxs{v_k} \cup \excnegtxs{v_k} \cup \excpostxs{v_k}$, that is, certified transactions with versions up to $v_{k-1}$ may be in the history of version $k$, and those that are not may be either increments or decrements.
    Given these constraints, it is sufficient to prove that $\initBal + \Val{T} \geq 0$ for the worst case $T = \inctxs{v_k} \cup \excnegtxs{v_k} \cup \certtxs_{v_k}^{-}$ where all decrement transactions and no increment transactions beyond $\history{v_k}$ are executed.
    \begin{align}
        \Val{T} & = \Val{\inctxs{v_k}} + \Val{\excnegtxs{v_k}} + \Val{\certtxs_{v_k}^{-}} \\
                & \geq \Val{\inctxs{v_k}} + \Val{\excnegtxs{v_k}}
        - \frac{1}{\budFrac}\BudTotal_{v_{k}} \quad \text{(\Cref{lem:bc-owned-warmup})}   \\
                & \geq -\initBal \quad \text{(\Cref{lem:bc-owned-budget-value})}
    \end{align}
\end{proof}

\section{Concurrency with Multiple Owners}
\label{sec:enhanced-programmability}

So far, we have seen designs to improve concurrency for transactions with owned objects. However, these techniques and prior consensus-less systems require that all owned objects accessed in a transaction have the same owner.
\sys enhances object programmability on the fast path using two ingredients: (i) multi-owner transactions, and (ii) collective objects.

\subsection{Multi-Owner Transactions}\label{sec:multi-owner-txs}
Sui requires all owned objects in a transaction to be `owned' by the same address~\cite{suiL}. \sys lifts this restriction: a transaction can reference owned objects from multiple owners. Validators must still ensure that each owned object referenced by a transaction is correctly authorized before signing a transaction. For example, consider an atomic swap transaction that takes object $A$ owned by Alice and object $B$ owned by Bob and exchanges their ownership.
For the transaction to be authorized we need two signatures over the full transaction (or a hash), one from an authorized signer of $A$ (i.e., Alice) and one from an authorized signer of $B$ (i.e., Bob).

However, enabling multi-owner transactions makes \sysname more susceptible to owned objects being locked through error or malicious behavior. For example, consider the previous scenario. If Alice signs $\tx$ first, Bob may refuse to sign, denying Alice access to her object. If Alice loses patience and tries to use $A$ in another transaction $\tx'$, then Bob may sign $\tx$ and race Alice's attempt to build a certificate. Now both $\tx$ and $\tx'$ contain $A$ and conflict which can lead to $A$ (and $B$) being locked.
Sui unlocks such objects after one day, at epoch change. But multi-owner transactions in \sysname make such conflicts more likely, so the latency of one day is unacceptable.
Thus, we develop the \fun protocol described in \Cref{sec:fast-unlock-owned-objects} that provides a resolution in seconds.

\subsection{Collective Objects}
\label{sec:collective-objects}

The second interesting class of objects that \sysname uses are Collective Objects. These objects can be accessed by multiple users concurrently, that is, a transaction on such an object can be authorized by any party (or a subset of parties) from a given set (the set may be infinite). Yet, they are processed on the fast path.

Applications for this include an NFT sale where users can add themselves to a collective set of users that will receive the NFT or an auction where users can add their bids as long as the auction is still running.
If there is no limit on the size of the set or the number of bids, then we can simply use an add-only-set data structure to process these transactions concurrently without any contention. However, if there is a limit, we must use a \emph{collective bounded counter}.

\subsection{Collective Bounded Counter}
\label{sec:shared-bounded-counter}

Following the owned bounded counter (\Cref{sec:bounded-counter}), our collective bounded counter allows multiple owners to send transactions and version updates. This works well as long as the concurrent transactions sent by all the owners together never exceed validators' budgets; then they all get certified, irrespective of the order in which they arrive.
However, the bounded counter may get locked and lose liveness under any of these conditions:
\begin{itemize}
    \item The owners attempt to spend more than the validators' budget in any version: In this case, honest validators may exhaust their budget by signing different transactions so that no transaction gets certified.
    \item The owners send two conflicting version update requests, i.e., neither request transitively includes transactions from the other version. In this case, honest validators may be split across the two versions, with neither group of validators willing to switch to the other version, causing neither version's transactions to get certified.
\end{itemize}
These circumstances would not occur if a single owner sending transactions follows the protocol specifications (\Cref{alg:bounded-counter-owned-user}).
However, a misconfigured owner or multiple owners who do not coordinate may cause such scenarios.

\begin{algorithm}[t]
    \caption{Collective bounded counter (validator logic)}
    \label{alg:bounded-counter-shared}
    \footnotesize
    \begin{algorithmic}[1]
        \LineComment{$\Call{InitBC}{}, \Call{ProcessTx}{}, \Call{ProcessCert}{}, \Call{ProcessVersionUpdateReq}{}$ same as \Cref{alg:bounded-counter-owned}}
        \vspace{1em}
        \LineComment{Upon receiving version merge request signed by the user}
        \Procedure{ProcessVersionMergeReq}{$\req$}
        \LineComment{Check current version is one of the versions being merged}
        \InlineRequire{$\Version \in \req.\PrevVersions$} \label{loc:bc-shared-merge-version}
        \LineComment{Update budget based on all txs in the history of $\req.\PrevVersions$, except those that have already been considered}
        \For{$\tx$ \textbf{in} $\funcCall{pendingTxsInHistory}{\req}$}
        \State $\Bud \gets \Bud + \budFrac * \tx.\diff$ \label{loc:bc-shared-merge-update-budget}
        \LineComment{Regain spent budget for txs included in $\req.\PrevTxs$}
        \If{$\tx \in \SignedTxs$ \textbf{and} $\tx.\diff < 0$} $\Bud \gets \Bud - \tx.\diff$ \EndIf \label{loc:bc-shared-merge-regain-budget}
        \EndFor
        \LineComment{Start signing txs for the updated version}
        \State $\Version \gets \req$ \label{loc:bc-shared-merge-update-version}
        \EndProcedure
    \end{algorithmic}
\end{algorithm}

In this section, we show how owners can unlock the bounded counter object, without consensus under optimistic conditions, by issuing a \emph{version merge request} (\Cref{alg:bounded-counter-shared}).
When the owners send a \emph{version merge request} that contains a set of versions $\PrevVersions$ to merge,
a validator processes the request if its current local $\Version$ is one of $\PrevVersions$ (\Cref{alg:bounded-counter-shared}~\Cref{loc:bc-shared-merge-version}).
This allows validators locked on any of the conflicting versions to adopt the merged version while also ensuring that each validator processes versions in a linear order.
Just as in version update requests, the validator updates its budget by $\budFrac$ times the value of every pending transaction in the history of the merged versions (\Cref{alg:bounded-counter-shared}~\Cref{loc:bc-shared-merge-update-budget}). In this case, the history contains all transactions included in the version update requests for $\PrevVersions$ recursively, and pending are the ones for which the budget has not been updated previously.
As in version update requests, the validator reclaims its budget for transactions in the history that it had signed (\Cref{alg:bounded-counter-shared}~\Cref{loc:bc-shared-merge-regain-budget}), and finally sets the new version to be $\req$ (\Cref{loc:bc-shared-merge-update-version}).

\tikzset{
    startstop/.style={rectangle, draw, dashed, text centered, minimum height=1.5em, minimum width=3em, font=\footnotesize},
    process/.style={circle, draw, minimum size=1.5em, text centered, font=\scriptsize},
    arrow/.style={thick,->,>=stealth},
    dashedarrow/.style={thick, dashed,->,>=stealth},
}

\begin{figure}
    \centering
    \begin{tikzpicture}[x=1cm,y=0.75cm]
        \node (v0) [startstop] at (0,0) {$\initVersion$};
        \node (node1) [process] at (0.5,-1) {-1};
        \node (node2) [process] at (1.5,-1) {-1};
        \node (node3) [process] at (2.5,-1) {-1};
        \node (node4) [process] at (-0.5,-1) {-1};
        \node (node5) [process] at (-1.5,-1) {-1};
        \node (node6) [process] at (-2.5,-1) {-1};
        \node (v1) [startstop] at (-1,-2) {$v_1$};
        \node (v2) [startstop] at (2,-2) {$v_2$};
        \node (v3) [startstop] at (0.5,-3) {$v_3$};
        \node (merge1) [left=0cm of v3, anchor=east] {\footnotesize Merge};
        \node (node7) [process] at (1,-4) {-1};
        \node (node8) [process] at (0,-4) {-1};
        \node (node9) [process] at (2,-4) {-1};
        \node (node10) [process] at (-1,-4) {-1};
        \node (excess) [left=0cm of node10, anchor=east] {\footnotesize Txs $>$ budget};
        \node (v4) [startstop] at (0.5,-5) {$v_4$};
        \node (merge2) [left=0cm of v4, anchor=east] {\footnotesize Merge};

        \draw [dashedarrow] (node1) -- (v0);
        \draw [dashedarrow] (node2) -- (v0);
        \draw [dashedarrow] (node3) -- (v0);
        \draw [dashedarrow] (node4) -- (v0);
        \draw [dashedarrow] (node5) -- (v0);
        \draw [dashedarrow] (node6) -- (v0);
        \draw [arrow] (v1) -- (node1);
        \draw [arrow] (v2) -- (node2);
        \draw [arrow] (v2) -- (node3);
        \draw [arrow] (v1) -- (node4);
        \draw [arrow] (v1) -- (node5);
        \draw [arrow] (v1) -- (node6);
        \draw [arrow] (v3) -- (v1);
        \draw [arrow] (v3) -- (v2);
        \draw [dashedarrow] (node7) -- (v3);
        \draw [dashedarrow] (node8) -- (v3);
        \draw [dashedarrow] (node9) -- (v3);
        \draw [dashedarrow] (node10) -- (v3);
        \draw [arrow] (v4) -- (v3);
    
    \end{tikzpicture}
    \caption{Version updates and merges in the collective bounded counter. Version $v_3$ results from a merge request on two conflicting version updates $v_1$ and $v_2$. Version $v_4$ results from a merge request on version $v_3$ for which the user sent transactions exceeding the budget.
    }
    \label{fig:bc-shared-versions-example}
\end{figure}
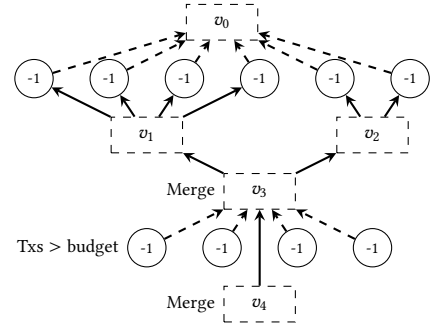

\Cref{fig:bc-shared-versions-example} shows an example of version updates and merges in the collective bounded counter.
If the owners accidentally sent two conflicting version update requests (e.g., $v_1,v_2$ in \Cref{fig:bc-shared-versions-example}),
upon sending a version merge request, validators locked on either version will switch to signing transactions for the merged version, restoring liveness.
Finally, if the owners sent transactions spending more than the budget (e.g., for $v_3$ in \Cref{fig:bc-shared-versions-example}), the owners can send a version merge request containing only one previous version.
This will cause validators to update their version without changing their budget (no new certified transactions included).
The owners can then reissue transactions, ensuring this time not to send transactions spending more than the budget.
Thus, in optimistic cases, where a single owner was misconfigured or crashed,
or the bounded counter had temporary high contention, the bounded counter can be unlocked without requiring consensus among the validators.
In cases of continuously high contention, owners must use \fun (\Cref{sec:fast-unlock}) which uses consensus to unlock their bounded counter.
In \Cref{app:bc-shared-safety-proof}, we prove the safety of the collective bounded counter.

\section{Fast Unlock Protocol} \label{sec:fast-unlock}
As discussed in \Cref{sec:enhanced-programmability}, concurrency for multi-owner transactions and collective objects does not come free. They increase the chance that transactions diverge the view of the validators, leading to a loss of liveness.
This is already a problem in Sui~\cite{sui} and Mysticeti~\cite{mysticeti} for clients that run buggy software and may issue conflicting transactions on an owned object, i.e., transactions operating on the same version of an owned object (cf. \Cref{sec:background-data-structures}). Their current solution is to wait until the end of the epoch, at which point they run an atomic snapshot sub-protocol as part of the epoch change and then drop all partial states. As a result, at the start of the new epoch, all validators have exactly the same state, and all objects can be safely accessed again.
In Sui, epoch changes occur once per day.
Since multi-owner transactions and collective objects make locks more likely, the latency of one day is unacceptable.
To remedy this issue, \sysname introduces a \fun functionality.
On a high level, \fun is a generalization of the merge functionality introduced in \Cref{sec:shared-bounded-counter}. The merge operation in \Cref{sec:shared-bounded-counter} requires an honest owner to drive it to completion and only applies to commutative transactions. For arbitrary transactions, we require consensus to decide which among conflicting transactions to accept.

\subsection{Baseline \fun Protocol} \label{sec:fast-unlock-owned-objects}

For simplicity, we show how the user can unlock a single object by executing a no-op or adopting one of the conflicting transactions.
\Cref{sec:multi-unlock} extends the basic protocol to execute a new transaction instead of a no-op. Additionally, we describe at the end of this section how validators can detect two conflicting transactions on the system and  \emph{automatically} trigger an unlock.

\para{New Persistent Data Structures}
Each \sysname validator maintains a set of persistent tables abstracted as key-value maps, with the usual $\mathsf{contains}$, $\mathsf{get}$, and $\mathsf{set}$ operations.
The table
$$\lockdb[\Okey] \rightarrow \Cert \text{ or } \none$$
maps $\Okey = (\Oid,\Version)$, an object's identifier and version, to a certificate $\cert$, or $\none$ if the object's version exists but the validator does not hold a certificate for it.
The map
$$\unlockdb[\Okey] \rightarrow \unlocked \text{, } \confirmed \text{, or } \none$$
records whether a transaction over the specified object version is involved in a current \fun instance ($\unlocked$), has been sequenced by consensus ($\confirmed$), or none of the above ($\none$).

All new owned object entries start with $\unlockdb[\Okey]$ set to \none. Once a transaction certificate is sequenced through consensus, it is always executed (whether it is for a shared object transaction or an owned-object-only transaction) and all owned object entries have $\unlockdb[\Okey]$ set to \confirmed.

\para{\fun Protocol Description} \label{sec:unlock-protocol}
To safely unlock an object, the user interactively constructs a proof, called a \textit{no-commit certificate}, that no transaction modifying that object has been committed or will be committed on the fast path. This proof consists of a message signed by a quorum of validators attesting that they have not already executed a transaction on $\Okey$, and promising that they will not execute any transaction on $\Okey$ in the fast path. After that, only certificates sequenced over consensus may affect such an $\Okey$.

\begin{figure}[t]
    \centering
    \includegraphics[width=\columnwidth]{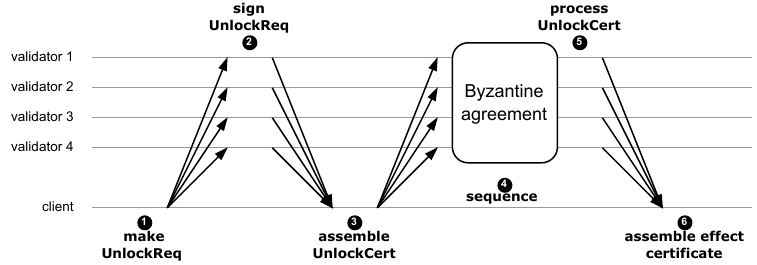}
    \caption{
        \fun interactions between a user and validators to unlock an object.
    }
    \label{fig:fast-unlock}
\end{figure}
\begin{algorithm}[t]
    \caption{Process unlock requests}
    \label{alg:process-unlock-tx}
    \footnotesize
    \begin{algorithmic}[1]
        \LineComment{Handle $\unlockreq$ messages from users.}
        \Procedure{ProcessUnlockTx}{\unlockreq}
        \LineComment{Check (\ref{alg:process-unlock-tx}.1): Check Auth. (\Cref{sec:unlock-protocol})}
        \If{$!\valid{\unlockreq}$} \label{alg:line:verify-auth} \Return error \EndIf
        \LineComment{Step (\ref{alg:process-unlock-tx}.2): Check for certificates.}
        \State $\Okey \gets \unlockreq.\Okey$
        \State $\Cert \gets \lockdb[\Okey]$ \Comment{can be $\none$} \label{alg:line:check-cert}
        \LineComment{Step (\ref{alg:process-unlock-tx}.3): Record the decision to unlock.}
        \State $\unlockvote \gets \sign{\unlockreq,\Cert}$
        \State $\unlockdb[\Okey] \gets \unlocked$ \label{alg:line:block-exec}
        \State \Return $\unlockvote$
        \EndProcedure
    \end{algorithmic}
\end{algorithm}

\Cref{fig:fast-unlock} illustrates the \fun protocol allowing a user to instruct validators to unlock a specific object.
A user first creates an \emph{unlock request} specifying the object they wish to unlock:
$$\unlockreq(\Okey, \auth)$$
This message contains the object's key $\Okey$ to unlock (accessible as $\unlockreq.\Okey$) and an authenticator $\auth$ ensuring the user is authorized to unlock $\Okey$.
The authenticator is composed of two parts: (i) a transaction that mutates the object in question (and potentially additional objects) which is signed by the object owner, and (ii) a proof that the party requesting the unlock can modify the object in question. The authenticator prevents rogue unlock requests for objects that are either not under contention (the transaction shows there exists a transaction that uses the object) or by parties not authorized to act on the objects.
The user broadcasts this $\unlockreq$ message to all validators~(\Cref{fig:fast-unlock}~\one).

Each validator handles the $\unlockreq$ as follows (\Cref{alg:process-unlock-tx}).
A validator first checks (\textbf{Check (\ref{alg:process-unlock-tx}.1)}) the validity of $\unlockreq$ by verifying the authenticator $\auth$. Specifically, $\auth$ must contain a valid transaction including $\Okey$, and a signature on the transaction by the owner of $\Okey$. Otherwise, the validator stops processing.
The validator attempts to retrieve a certificate $\Cert$ for a transaction on $\Okey$ if it exists (\textbf{Step (\ref{alg:process-unlock-tx}.2)}), or sets $\Cert$ to \none.
Then, the validator records that the object in $\unlockreq$ can only be included in transactions in the consensus path (\Cref{alg:line:block-exec}) by setting its entry in $\unlockdb[\Okey]$ to $\unlocked$ (\textbf{Step (\ref{alg:process-unlock-tx}.3)}). It finally returns a signed \emph{unlock vote} $\unlockvote$ to the user:
$$\unlockvote(\unlockreq, \mathbf{Option}(\Cert))$$

This message contains the authorized $\unlockreq$ and the certificate $\cert$ for some transaction consuming $\Okey$ that the validator executed~(\Cref{fig:fast-unlock}~\two).
If the validator has not executed any transaction on $\Okey$, then $\Cert = \none$.
$$\unlockcert(\unlockreq, \mathbf{Option}(\Cert)).$$

There are two cases in the creation of $\unlockcert$:
\begin{enumerate}
    \item At least one $\unlockvote$ carries a certificate. This scenario indicates that a correct validator has already executed a transaction, which implies that the object is not locked. However, this is not a proof of finality and subsequent steps may invalidate this execution.
    \item No $\unlockvote$ carries a certificate. This scenario is a `no-commit' proof as there are $f+1$ honest validators that will not process certificates ($\unlockdb$ holds $\unlocked$), thus no certificate will be executed in the fast path.
\end{enumerate}
The user submits this $\unlockcert$ for sequencing by the consensus engine~(\three).

\begin{algorithm}[H]
    \caption{Process unlock certificates}
    \label{alg:process-unlock-cert}
    \footnotesize
    \begin{algorithmic}[1]
        \LineComment{Handle $\unlockcert$ message from consensus.}
        \Procedure{ProcessUnlockCert}{$\unlockcert$}
        \LineComment{Check (\ref{alg:process-unlock-cert}.1): Check no transaction already processed (\Cref{sec:unlock-protocol}).}
        \If{$\unlockdb[\Okey] = \confirmed$}  \label{alg:line:ensure-first}
        \Return
        \EndIf
        \LineComment{Check (\ref{alg:process-unlock-cert}.2): Check cert validity (\Cref{sec:unlock-protocol}).}
        \If{$!\valid{\unlockcert}$} \Return error \EndIf \label{alg:line:check-unlock-cert}
        \LineComment{Execute $\cert$ or $\none$ (\ref{alg:process-unlock-cert}.3).}
        \State $\Cert \gets \unlockcert.\cert$
        \If {$\Cert \neq \none$ }
        $\tx \gets \Cert.\tx$
        \Else\xspace $\tx \gets \noop$
        \EndIf
        \State $\effectvote \gets \exec{\tx, \unlockcert}$ \label{alg:line:exec}
        \LineComment{Prevent execution overwrite.}
        \State $\unlockdb[\Okey] \gets \confirmed$ \label{alg:line:mark-confirmed}
        \State \Return $\effectvote$
        \EndProcedure
    \end{algorithmic}
\end{algorithm}

All correct validators observe a consistent sequence of  $\unlockcert$ messages output by consensus~(\four) and process them in order as follows (\Cref{alg:process-unlock-cert}).
A validator performs the following checks and if any fail, they ignore the certificate:
\begin{itemize}
    \item \textbf{Check (\ref{alg:process-unlock-cert}.1) } They ensure they did not already process another transaction to completion (i.e.\ $\unlockdb$ is not \confirmed) or a different $\unlockcert$ for the same $\Okey$.  %
    \item \textbf{Check (\ref{alg:process-unlock-cert}.2) } They check $\unlockcert$ is valid, that is, (i) it is correctly signed by a quorum of authorities, and (ii) the certificate $\cert$ it contains is valid or $\none$.
\end{itemize}
The validator then executes the transaction referenced by $\cert$ (Step \ref{alg:process-unlock-cert}.3) if one exists. Otherwise, if $\cert$ is $\none$, the validator undoes any transaction locally executed on the object%
\footnote{The $\unlockcert$ with $\cert$ being \none ensures such execution could not have been finalized; only a single layer of execution can ever be undone, and no cascading aborts can happen.}%
, then executes a no-op, that is, the object contents remain unchanged but its version number increases by one.
The validator finally marks every object key as $\confirmed$ to prevent future unlock certificates or checkpoint certificates from overwriting execution (\Cref{alg:line:mark-confirmed}) and returns an $\effectvote$ to the user~(\five).
The user assembles a quorum of $2f+1$ $\effectvote$ messages into an \emph{effect certificate} $\effectcert$ that determines finality~(\six).

\Cref{sec:gas} details the use of gas objects in the context of \fun and \Cref{sec:proofs} proves the safety and liveness of the \sysname system using \fun. The key insight is that an $\unlockcert$ forces transactions on the owned object to go through consensus. There, either a transaction certificate or an unlock certificate will be sequenced first and executed. If a transaction is finalized, an unlock certificate will always cause the execution of that transaction.

\para{Auto-Unlock}
The basic \fun scheme presumes that the request to unlock an object is authenticated by the owner(s) of the object. This ensures that only authorized parties can interfere with the completion of a transaction, but it also restricts who can initiate unlocking in case of loss of liveness. Alternatively, an `AutoUnlock' can be issued by validators if the fast-path protocol is embedded in the consensus protocol, as proposed by Mysticeti~\cite{mysticeti}. In such a protocol, the presence of conflicting transactions in the causal history of a consensus block is evidence of loss of liveness.
Upon seeing such evidence, validators can start locally processing a virtual unlock request posting the signed unlock requests as transactions in the consensus protocol and forming unlock certificates.

\section{Implementation} \label{sec:implementation}
We base our implementation of \sysname on Sui~\cite{sui} as it is, to our knowledge, the only blockchain currently supporting consensus-less transactions. Specifically, we fork the research codebase of Mysticeti~\cite{mysticeti-code}, which is a fork of the production codebase of Sui, but without irrelevant features such as Admission control, RPC endpoints, support for light clients, enforcement of correct genesis, etc.
Our implementation only modifies the block and transaction processing logic by adding the bounded counter, keeping the networking, storage, and cryptography layers untouched.
We open-source our implementation of \sysname and our orchestration tools to ensure reproducibility of our results\footnote{\codelink}.

\section{Evaluation} \label{sec:evaluation}
We evaluate the throughput and latency of \sysname through experiments conducted on Amazon Web Services (AWS), demonstrating its performance improvements over the state-of-the-art.

We compare \sysname with the consensus-less fast path of Sui~\cite{sui}, called Mysticeti-FPC~\cite{mysticeti} as to our knowledge, Sui is the only blockchain supporting consensus-less transactions.
We did not compare with other consensus-less systems, including FastPay~\cite{fastpay}, Astro~\cite{astro}, Zef~\cite{zef}, and Brick~\cite{brick} because they only support payments and are thus not adapted to showcase loads under high concurrency\footnote{Parallelizing payments issued from an account can be achieved by splitting the available balance into multiple accounts before initiating the payments.}. Furthermore, these systems lack a mechanism to unlock transactions and thus cannot optimistically handle contention.

Our evaluation demonstrates the following claims:
\begin{itemize}
    \item\textbf{C1}: Clients of \sysname submitting a commutative load experience a lower latency and higher throughput than those of the Sui baseline.
    \item\textbf{C2}: There is no noticeable performance difference between Sui using owned objects and \sysname using bounded counters for loads with no contention (i.e., parallel). In other words, there is no performance trade-off in adopting \sysname.
    \item\textbf{C3}: Operations under (crash) faults do not overly penalize \sysname in comparison to Sui. That is, both systems observe similar performance degradation when validators are faulty.
\end{itemize}
Note that evaluating the performance of BFT protocols in the presence of Byzantine faults is an open research question~\cite{twins}, and state-of-the-art evidence relies on formal proofs. %

\subsection{Experimental Setup} \label{sec:setup}

We deploy all systems on a geo-distributed network of validators, each running on a dedicated machine. \Cref{sec:detailed-setup} details the precise machine specs, validator configuration, and the network setup.

In the following graphs, each data point is the p50 latency and the error bars represent the p90 latency (error bars are sometimes too small to be visible on the graph). We instantiate several geo-distributed benchmark clients within each validator, submitting transactions at a fixed rate for 5 minutes.
We increase the load of transactions sent to the systems and record the throughput and latency. As a result, each plot illustrates the `steady state' latency of the system under low load and the maximum throughput it can serve after which latency grows steeply.
Transactions in the benchmarks contain 512 bytes. The ping latency between the validators varies from 50ms to 250ms.

By \emph{latency}, we mean the time between when the client submits the transaction and when the transaction is finalized by the validators. By \emph{throughput}, we mean the number of transactions finalized per second during the run.

\subsection{Benchmark under Commutative Load}

\Cref{fig:sequential} compares the throughput and latency experienced by clients of Sui and \sysname submitting a load of \emph{commutative} transactions. In Sui, these transactions are implemented using operations on the same owned object~\cite{sui}, whereas, in \sysname, they rely on bounded counter withdrawals with values much lower than the available balance (\Cref{sec:bounded-counter}).
Specifically, we measure the maximum rate at which a client can submit these transactions and the corresponding end-to-end latency. Both systems operate in a failure-free wide-area network (WAN) environment, configured with committees of 10 and 50 validators to reflect small and large committee setups.

\begin{figure}[t]
    \centering
    \begin{tikzpicture}[]
        \begin{axis}[
                mysimpleplot,
                xmode=log,
            ]

            \addplot [line1] table [x=throughput,y=latency, y error=yerr] {data/latex-txt/sui-sequential-10-0.txt};
            \addlegendentry{Sui - 10 nodes};
            \label{plt:experiment-sequential-sui10};

            \addplot [line2] table [x=throughput,y=latency, y error=yerr] {data/latex-txt/sui-sequential-50-0.txt};
            \addlegendentry{Sui - 50 nodes};
            \label{plt:experiment-sequential-sui50};

            \addplot [line3] table [x=throughput,y=latency, y error=yerr] {data/latex-txt/bcounter-inf-10-0.txt};
            \addlegendentry{\sysname - 10 nodes};
            \label{plt:experiment-sequential-stingray10};

            \addplot [line4] table [x=throughput,y=latency, y error=yerr] {data/latex-txt/bcounter-inf-50-0.txt};
            \addlegendentry{\sysname - 50 nodes};
            \label{plt:experiment-sequential-stingray50};
        \end{axis}
    \end{tikzpicture}%
    \caption{
        Comparing throughput and latency of Sui and \sysname with a \emph{commutative} load. WAN measurements with $10$ and $50$ validators. %
    }
    \label{fig:sequential}
    \Description{}
\end{figure}
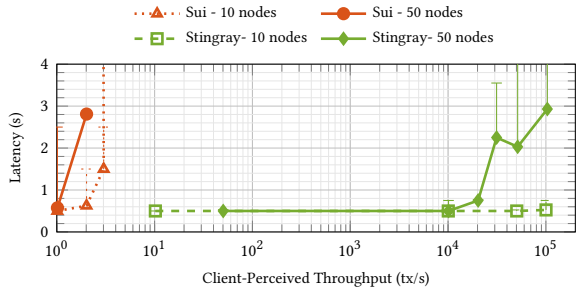

\begin{table}[t]
    \centering
    \begin{tabular}{lrrrr}
        \toprule

        Protocols \hskip 2em & 1 tx      & 10 tx   & 100 tx        \\
        \midrule
        Sui                  & 400-500ms & 4-5s    & $\approx$ 50s \\
        \sysname             & < 500ms   & < 500ms & < 500ms       \\
        \bottomrule
    \end{tabular}
    \caption{
        Average total time required to submit a load of 1, 10, and 100 commutative transactions to Sui and \sysname.
    }
    \label{tab:spend-time}
\end{table}

As expected, Sui clients can submit only about two commutative transactions per second. This is because Sui fails to exploit the commutativity of these transactions and processes them sequentially because it detects false dependencies based on memory access patterns. Consequently, a client must wait approximately 500ms for one transaction to complete before submitting the next.
Despite Sui's low baseline latency~\cite{mysticeti,sui}, this commutative load results in a latency proportional to the number of transactions submitted. Consequently, clients perceive significantly higher overall latency as the transaction count increases. For example, as shown in \Cref{tab:spend-time}, a single transaction incurs the state-of-the-art latency of 500 ms, but submitting 100 of these transactions causes the total latency to grow linearly to 50 seconds.

In contrast, \Cref{fig:sequential} demonstrates that \sysname enables parallel submission of commutative transactions, maintaining latencies under 500 ms. This performance holds even for workloads of 10-20k commutative transactions with a large committee of 50 validators (note the log scale on the x-axis) or 100k transactions with a smaller committee of 10 validators. This improvement stems from \sysname's use of the bounded counter, which processes transactions concurrently. As a result, transaction latency remains unaffected by the submission rate until the system reaches saturation.
Throughout these benchmarks, the CPU utilization of the validators of both systems remains roughly below 20\% and the validators consume less than 10GB of memory (when experiencing the highest loads).

These results validate our claim \textbf{C1}: clients of \sysname experience lower latency and higher throughput than those of the Sui baseline when handling commutative transaction loads.

\subsection{Benchmark under Parallel Load}

\Cref{fig:parallel} compares the throughput and latency experienced by clients of Sui and \sysname submitting a load of \emph{independent} transactions. Both systems operate in a failure-free wide-area network (WAN) environment, configured with committees of 10 and 50 validators.

\begin{figure}[t]
    \centering

    \begin{tikzpicture}[]
        \begin{axis}[
                mysimpleplot,
                xmode=log,
            ]

            \addplot [line1] table [x=throughput,y=latency, y error=yerr] {data/latex-txt/sui-parallel-10-0.txt};
            \addlegendentry{Sui - 10 nodes};
            \label{plt:experiment-parallel-sui10};

            \addplot [line2] table [x=throughput,y=latency, y error=yerr] {data/latex-txt/sui-parallel-50-0.txt};
            \addlegendentry{Sui - 50 nodes};
            \label{plt:experiment-parallel-sui50};

            \addplot [line3] table [x=throughput,y=latency, y error=yerr] {data/latex-txt/bcounter-inf-10-0.txt};
            \addlegendentry{\sysname - 10 nodes};
            \label{plt:experiment-parallel-stingray10};

            \addplot [line4] table [x=throughput,y=latency, y error=yerr] {data/latex-txt/bcounter-inf-50-0.txt};
            \addlegendentry{\sysname - 50 nodes};
            \label{plt:experiment-parallel-stingray50};

        \end{axis}
    \end{tikzpicture}%

    \caption{
        Comparing throughput and latency of Sui and \sysname with a \emph{parallel} load. WAN measurements with $10$ and $50$ validators. %
    }
    \label{fig:parallel}
    \Description{}
\end{figure}
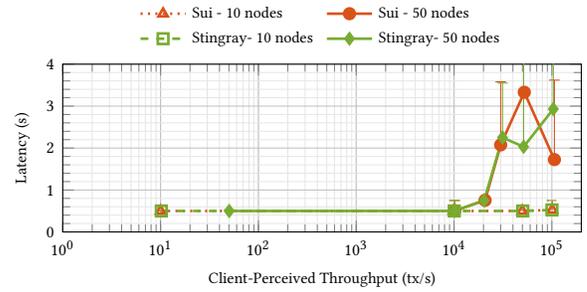

In contrast to the previous benchmark, the transactions in this benchmark are implemented using operations on different owned objects. As a result, both systems process these transactions concurrently, and the throughput and latency experienced by the clients of both systems are similar. In both cases, clients experience a latency of less than 500 ms and a throughput of 10-20k transactions per second with a large committee of 50 validators and 100k transactions per second with a small committee of 10 validators.

This result validates our claim \textbf{C2}: there is no noticeable performance difference between Sui and \sysname for loads with no contention (in this case, both systems use owned objects), i.e., there is no performance trade-off in adopting \sysname.

\subsection{Benchmark under Faults}

\Cref{fig:faults} compares the throughput and latency experienced by clients of Sui and \sysname submitting a load of independent transactions when a committee of 10 validators experiences 3 (crash) faults, which is the maximum number of faults that can be tolerated in this systems' configuration.
The results show that both systems observe similar performance degradation when validators are faulty. In both cases, the throughput drops to about 70k transactions per second, and the latency increases to about 1 second.
This result validates our claim \textbf{C3}: operations under (crash) faults do not overly penalize \sysname's clients in comparison to Sui's.

\begin{figure}[t]
    \centering
    \begin{tikzpicture}[]
        \begin{axis}[
                mysimpleplot,
                xmode=log,
            ]

            \addplot [line1] table [x=throughput,y=latency, y error=yerr] {data/latex-txt/sui-parallel-10-0.txt};
            \addlegendentry{Sui - 10 nodes};
            \label{plt:experiment-parallel-faults-sui-10-0};

            \addplot [line2] table [x=throughput,y=latency, y error=yerr] {data/latex-txt/sui-parallel-10-3.txt};
            \addlegendentry{Sui - 10 nodes (3 faults)};
            \label{plt:experiment-parallel-faults-sui-10-3};

            \addplot [line3] table [x=throughput,y=latency, y error=yerr] {data/latex-txt/bcounter-inf-10-0.txt};
            \addlegendentry{\sysname - 10 nodes};
            \label{plt:experiment-parallel-faults-stingray-10-0};

            \addplot [line4] table [x=throughput,y=latency, y error=yerr] {data/latex-txt/bcounter-inf-10-3.txt};
            \addlegendentry{\sysname - 10 nodes (3 faults)};
            \label{plt:experiment-parallel-faults-stingray-10-3};

        \end{axis}
    \end{tikzpicture}%
    \caption{
        Comparing throughput and latency of Sui and \sysname with a \emph{parallel} load. WAN measurements with $10$ validators, $3$ faults.%
    }
    \label{fig:faults}
    \Description{}
\end{figure}
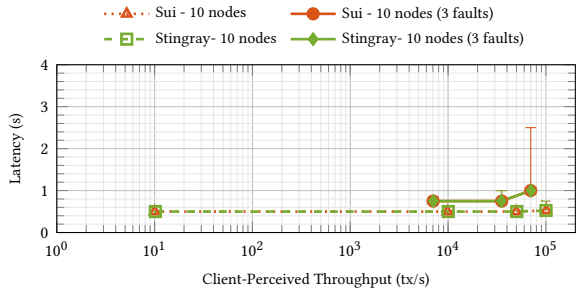

\section{Related Work}

\sysname is closely related to three research directions: consensus-less (fast path) blockchains, parallel execution engines, and replicated data types. Consensus-less blockchains were originally proposed for payments both theoretically~\cite{guerraoui19consensus,auvolat2020moneytransfer,astro} and in practice~\cite{fastpay} and achieve the lowest possible latency. However, these systems support only payments and no programmability, and do not support validators' reconfiguration.
Similarly, Groundhog~\cite{groundhog} foregoes consensus for commutative transactions but doesn't allow non-commutative ones.
The Sui Lutris system~\cite{suiL}, implemented in the Sui blockchain~\cite{sui}, combines FastPay~\cite{fastpay} and the Bullshark~\cite{bullshark} consensus protocol to deliver low-latency payments and full programmability.
Further, Mysticeti~\cite{mysticeti} addresses Sui's redundant broadcasting and high signature verification costs.
However, all these works adopt a conservative approach to the consensus-less path, limiting it to transactions involving state owned by a single account which avoids contention unless the account owner equivocates.
The generic broadcast~\cite{raykov2011commutative,pedone2001generic} framework uses consensus-less broadcast for non-conflicting messages and consensus for others, but does not specify which transactions conflict.
\sysname expands the design space by allowing transactions that are commutative or have a low risk of contention to run on the consensus-less path. In case of contention, \sysname leverages \fun to restore liveness efficiently.

Parallel execution in blockchains is a relatively new research area
led by Solana~\cite{solana} and FuelVM\footnote{https://docs.fuel.network/docs/intro/what-is-fuel/}.
On the research side, Block-STM~\cite{blockstm} focuses on shared memory and executes blocks of transactions instead of streaming. This creates a tension between high-throughput and low latency as high-throughput needs a high batch size, but collecting this batch increases the latency. On the other hand, Pilotfish~\cite{pilotfish} focuses on multi-machine execution and follows Sui's streaming architecture.
Sui also supports parallel execution in a single machine
but only for transactions accessing different memory locations.
\sysname improves upon the state of the art by enabling the parallel execution of transactions that access the same memory location, as long as they do not conflict. While this paper emphasizes parallelization on the consensus-less path, the same principles apply even more easily post-consensus as there is no risk of losing liveness due to equivocation.

Another closely related research area is replicated data types.
Unlike conflict-free data types (CRDTs)~\cite{shapiro2011conflict,keep-calm-crdt-on}, our bounded counter supports non-commutative and non-inflationary state transitions.
While some previous work~\cite{kleppmann2022making,frey2024process,capretto2022setchain,capretto2023setchain,chai2014byzantine,cholvi2021byzantinegset,nasirifard2023orderlesschain,almeida2024blocklace,zhao2015optimistic} developed Byzantine fault-tolerant CRDTs and others developed bounded counters that are safe under no faults~\cite{barbara-demarcation,oneil-escrow} and crash faults~\cite{balegas2015extending},
our work achieves the best of both through the first Byzantine fault-tolerant bounded counter.
RDTs with non-commutative operations inherently require coordination among the replicas~\cite{bailis2014coordination,kleppmann2020byzantine,bazzi2024fractional}, even without Byzantine faults.
Accordingly, our bounded counter also requires coordination but the number of rounds of coordination is at most logarithmic in the counter's initial value.

RapidLane~\cite{rapidlane} enables concurrent transactions by deferring execution and instead predicting transaction outcomes without reading the object's state. This allows optimistic concurrent processing, including bounded counters. In contrast, \sysname ensures correct execution upfront, avoiding the need for rollbacks due to incorrect predictions.
Bazzi et al.~\cite{bazzi2024fractional} enable concurrent payments using small random disjoint quorums to certify each transaction. However, their approach is probabilistic and tolerates only $1/8$ faulty validators, while \sysname is deterministic and tolerates up to $1/3$ faulty validators.

\ifpublish
  \section*{Acknowledgments}
This work is sponsored by Mysten Labs.
We thank George Danezis for initial discussions and feedback and Igor Zablotchi, David Tse, and Roger Wattenhofer for feedback.
\fi

\bibliographystyle{ACM-Reference-Format}
\bibliography{refs}

\appendix

\section{Algorithms and Proofs for the Bounded Counter}
\label{sec:app-bcounter-algs-proofs}

\subsection{Liveness Proof for the Owned Bounded Counter}
\label{sec:app-bc-owned-liveness-proof}

\begin{theorem}
    \label{thm:bc-owned-liveness}
    The bounded counter protocol (validator code: \Cref{alg:bounded-counter-owned}, user code: \Cref{alg:bounded-counter-owned-user}) satisfies liveness.
    If the user sends only decrement transactions, all transactions sent by an honest user will be executed by honest validators in $\CO(\log(\initBal))$ rounds.
\end{theorem}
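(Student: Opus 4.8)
The plan is to establish the two clauses of \Cref{def:liveness} --- Progress and \convergence --- and then to bound the number of communication rounds in the decrement-only case. Since \Cref{thm:bc-owned-eventual-cons} already gives \convergence, the remaining work is Progress, both in the generic sense and in the strengthened sense of \Cref{def:bounded-counter}, and the whole argument is structured as an induction along the version chain $v_0, \dots, v_k$ of \Cref{lem:versions-form-chain}.

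\textbf{Step 1: every certified transaction is executed.} I would first show that any transaction $\tx$ carrying a valid certificate, say with version $v_i$, is eventually executed by every honest validator, by induction on $i$. For $i = 0$ the set $v_0.\PrevTxs$ is empty, so the \emph{wait} guard on \Cref{alg:bounded-counter-owned}~\Cref{loc:bc-owned-cert-version} is vacuous; once some honest validator receives the certificate it re-broadcasts it (\Cref{loc:bc-owned-cert-broadcast}), so every honest validator receives it and executes $\tx$ (\Cref{loc:bc-owned-cert-execute}). For $i>0$, a certificate on a version-$v_i$ transaction carries the signature of at least one honest validator, which it produces only while its local version equals $v_i$; by \Cref{loc:bc-owned-req-valid,loc:bc-owned-req-version} that validator moved to $v_i$ only after every transaction in $v_i.\PrevTxs$ --- all of them with version $v_{i-1}$ --- was certified. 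By the induction hypothesis these are eventually executed by all honest validators, so the \emph{wait} guard is discharged and $\tx$ is executed. Combined with \Cref{thm:bc-owned-eventual-cons} this yields the generic Progress and \convergence properties.

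\textbf{Step 2: an honest user's transactions are certified.} To get the bounded-counter Progress property I would show that every transaction emitted by a user running \Cref{alg:bounded-counter-owned-user} is signed by all honest validators, hence certified, hence executed by Step 1. Let $T_{v_i}$ be the batch the user sends at version $v_i$ and $R_{v_i} = \initBal + \sum_{j<i}\Val{T_{v_j}}$ the counter balance after the earlier batches. I carry a joint invariant over the versions the user reaches: (a) every honest validator eventually sets its local version to $v_i$; and (b) immediately after doing so its budget equals $\budFrac R_{v_i}$. Claim (a) follows because, by the invariant at $v_{i-1}$, every transaction of $T_{v_{i-1}}$ is signed by all $\geq 2f+1$ honest validators and hence certified, so the checks on \Cref{loc:bc-owned-req-valid,loc:bc-owned-req-version} eventually pass for the $v_i$ request. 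Claim (b) follows by tracking the budget through the version-update handler: since the lone owner is honest, a validator only ever signs transactions from $T_{v_{i-1}}$, every one of which is in the $v_i$ request, so the additions on \Cref{loc:bc-owned-req-update-budget} contribute exactly $\budFrac\Val{T_{v_{i-1}}}$ while the reclaim on \Cref{loc:bc-owned-req-regain-budget} exactly undoes the budget spent on signing, taking the budget from $\budFrac R_{v_{i-1}}$ to $\budFrac R_{v_{i-1}} + \budFrac\Val{T_{v_{i-1}}} = \budFrac R_{v_i}$ --- the same bookkeeping the user does locally in \textsc{VersionUpdate} without communication. Since the user never lets its decrements at version $v_i$ sum to more than $\budFrac R_{v_i}$ (\Cref{alg:bounded-counter-owned-user}~\Cref{loc:bc-owned-user-check-budget}) and increments never lower a validator's budget, the check on \Cref{alg:bounded-counter-owned}~\Cref{loc:bc-owned-tx-check-budget} passes for every transaction of $T_{v_i}$ no matter the order of receipt, closing the induction. (When the residual budget is below \textsf{minBudget}, the user instead converts the counter to a plain owned object and the tail is finalized by one ordinary fast-path transaction.)

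\textbf{Step 3: round complexity, and the main obstacle.} In the decrement-only case each version costs $\CO(1)$ rounds --- one round trip to gather signatures on $T_{v_i}$, certificate dissemination, and one round trip for the $v_{i+1}$ update --- and $R_{v_{i+1}} \leq (1-\budFrac)R_{v_i}$ with $1-\budFrac = \frac{f}{2f+1} < \frac{1}{2}$, so after $\CO(\log \initBal)$ versions the residual balance drops below \textsf{minBudget} and the counter is converted, for an overall $\CO(\log \initBal)$ rounds. The crux is Step 2: claims (a) and (b) are mutually dependent --- certifying $T_{v_i}$ needs budget sufficiency at $v_i$, while reaching $v_{i+1}$ needs $T_{v_i}$ certified --- so they must be proved by a single interleaved induction, and the budget must be accounted per honest validator (not on average as in \Cref{lem:bc-owned-budget-value}) and shown robust both to adversarial validators withholding signatures and to honest validators receiving the user's transactions in an arbitrary order.
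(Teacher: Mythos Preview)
Your proposal is correct and follows essentially the same route as the paper: both arguments hinge on showing that an honest user's locally tracked budget coincides with every honest validator's budget at each version (your invariant~(b), the paper's claim $\Bud_v^j = \Bud_v^c = \budFrac(\initBal + \Val{\history{v}})$), from which signing, certification, and the geometric $\CO(\log\initBal)$ bound follow. Your treatment is in fact more careful on two points the paper glosses over: you argue budget sufficiency \emph{per honest validator} rather than citing \Cref{lem:bc-owned-budget-value} (which only bounds the average), and you make explicit the interleaved induction that validators actually reach each successive version; you also use the correct decay factor $1-\budFrac$ for the remaining balance, whereas the paper loosely writes $\budFrac$.
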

\begin{proof}
    First, we show that when the user runs \Cref{alg:bounded-counter-owned-user}, the budget $\Bud$ computed by the user matches the budget computed by each honest validator.
    Let $\Bud_v^c$ be the budget computed by the user and let $\Bud_v^j$ be the budget computed by a validator $j$ at the time when they each update their local $\Version$ to $v$.
    We will first show that for all validators $j$, $\Bud_v^j = \Bud_v^c = \budFrac(\initBal + \history{v})$.

    To show this, first note that for every honest validator $j$, $\Bud_v^j = \budFrac(\initBal + \history{v})$. This can be seen from \Cref{lem:bc-owned-budget-value} along with the observation that $\excnegtxs{v} = \emptyset$, i.e., there are no certified transactions not included in the history of the latest version update, because the user includes all transactions it sent in the version update (\Cref{alg:bounded-counter-owned-user}~\Cref{loc:bc-owned-user-create-req}).
    Second, we see that $\Bud_v^c = \budFrac(\initBal + \history{v})$.
    This is because, as argued above, $\history{v}$ is exactly the set of transactions ever sent by the user, and for each sent transaction, the user's budget updates in \Cref{alg:bounded-counter-owned-user}~\Cref{loc:bc-owned-user-tx-deduct-budget,loc:bc-owned-user-req-update-budget,loc:bc-owned-user-req-regain-budget} have the net effect of updating the budget by $\budFrac * \tx.\diff$ for each transaction $\tx$ sent by the user.

    Finally, for any given version $v$, the set of decrement transactions $\CS$ the user sends satisfies $\Bud_v^c + \Val{\CS} \geq 0$ (\Cref{alg:bounded-counter-owned-user}~\Cref{loc:bc-owned-user-check-budget,loc:bc-owned-user-tx-deduct-budget}).
    Therefore, every honest validator signs these transactions eventually (once all messages are delivered).
    If at most $f$ validators are adversarial, every transaction gets certified, and thus eventually executed by honest validators.

    Within a given version, all transactions sent by the user get certified concurrently, without any additional rounds of communication.
    At each version update, validators must wait to execute transactions with the previous version before executing transactions with the new version, thus successive versions are processed sequentially.
    If the user sends only decrement transactions, at each version update, the budget gets scaled by a factor $\budFrac < 1$.
    Thus, after $\CO(\log(\initBal))$ version updates, the budget will fall below a pre-specified bound $\textsf{minBudget}$, after which the user can convert the bounded counter object to a standard owned object to spend the remaining balance in a single transaction.
\end{proof}
\section{Safety Proof for the Collective Bounded Counter}
\label{app:bc-shared-safety-proof}

In this section, we prove the safety properties of the collective bounded counter (described in \Cref{sec:shared-bounded-counter}, \Cref{alg:bounded-counter-shared}).

Eventual consistency (\Cref{thm:bc-owned-eventual-cons}) continues to hold because when an honest validator executes a transaction upon seeing a certificate, it broadcasts the certificate to all other validators.
Similarly, validity continues to hold for the collective bounded counter.

To prove \globalsafetybc, we begin by noting that \Cref{lem:bc-owned-warmup} holds for the collective bounded counter in the same way as for the owned bounded counter.
That is, within a single version, any subset of the certified transactions does not spend too much.
For clarity, \Cref{lem:bc-owned-warmup} is recapped below.
The proof is the same as for the owned bounded counter because the rules for signing a transaction within a given version are the same in the collective bounded counters.

\begin{lemma}
    \label{lem:bc-shared-warmup}
    For any version $v$, let $\BudTotal_v$ be the average budget of all honest validators at the time when they set $\Version$ to $v$.
    Let $\certtxs_v$ be the set of certified transactions with version $v$.
    Then, for all $T \subseteq \certtxs_{v}$: 
    $\Val{T} \geq -\frac{1}{\budFrac} \BudTotal_v$.
\end{lemma}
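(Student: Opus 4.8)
The plan is to observe that the statement and its proof transfer unchanged from \Cref{lem:bc-owned-warmup}: the collective bounded counter reuses the \emph{same} \textsc{ProcessTx} logic as the owned one (\Cref{alg:bounded-counter-shared} inherits it from \Cref{alg:bounded-counter-owned}), so the per-validator invariant that a budget never drops below zero still holds at version $v$, and that is all the earlier proof used. I would therefore replay that argument verbatim.

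Concretely, I would first restrict attention to decrement transactions: writing $\certtxs_v^{-} \subseteq \certtxs_v$ for those $\tx$ with $\tx.\diff < 0$, it suffices to show $\Val{\certtxs_v^{-}} \geq -\frac{1}{\budFrac}\BudTotal_v$, since increment transactions only increase $\Val{T}$ and $\BudTotal_v \geq 0$. Next I would count budget deductions. Fix an execution with $\freal \leq f$ Byzantine validators, so $3f+1-\freal$ are honest. Each certificate carries $2f+1$ signatures, of which at most $\freal$ are Byzantine, so for every certified $\tx \in \certtxs_v^{-}$ at least $2f+1-\freal$ honest validators ran the budget decrement on \Cref{alg:bounded-counter-owned}~\Cref{loc:bc-owned-tx-decrease-budget} for $\tx$; summing over $\certtxs_v^{-}$ and dividing by $3f+1-\freal$, the average honest budget after these signatures is at most $\BudTotal_v + \frac{2f+1-\freal}{3f+1-\freal}\Val{\certtxs_v^{-}} \leq \BudTotal_v + \budFrac\,\Val{\certtxs_v^{-}}$, using $\frac{2f+1-\freal}{3f+1-\freal} \geq \frac{f+1}{2f+1} = \budFrac$ together with $\Val{\certtxs_v^{-}} \leq 0$. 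If $\Val{\certtxs_v^{-}} < -\frac{1}{\budFrac}\BudTotal_v$, this bound is strictly negative, so the average honest budget, and hence some honest validator's budget, would be negative, contradicting the guard on \Cref{alg:bounded-counter-owned}~\Cref{loc:bc-owned-tx-check-budget}.

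I do not expect a genuine obstacle, since the lemma is essentially a restatement of \Cref{lem:bc-owned-warmup}; the only point deserving a sentence of care is that the budget rewrites introduced by version-merge requests (\Cref{alg:bounded-counter-shared}~\Cref{loc:bc-shared-merge-update-budget,loc:bc-shared-merge-regain-budget}) do not interfere with the counting above. These rewrites fire only when a validator transitions \emph{off} version $v$, i.e., after it has stopped signing version-$v$ transactions, so they can neither enlarge $\certtxs_v^{-}$ nor increase the set of version-$v$ decrements an honest validator signed, and the bound on the average version-$v$ budget is preserved.
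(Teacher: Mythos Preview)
Your proposal is correct and follows exactly the paper's approach: the paper simply states that the proof is identical to that of \Cref{lem:bc-owned-warmup} because the \textsc{ProcessTx} rules are unchanged, and you replay that argument faithfully. Your extra paragraph on why version-merge budget rewrites do not interfere is a nice bit of hygiene the paper omits, but it does not change the method.
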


We now extend the remainder of the owned bounded counter's security proof (\Cref{sec:bounded-counter-owned-proof}) to the collective bounded counter.
Since the collective bounded counter allows merging multiple versions to create a new version, we consider the directed acyclic graph (DAG) formed by the versions, in which, unlike the owned bounded counter, each version may have multiple parents.

In the collective bounded counter, versions can be changed through a version update request (as in \Cref{alg:bounded-counter-owned}) or through a version merge request (as in \Cref{alg:bounded-counter-shared}). \Cref{def:bc-shared-proof-history} generalizes the definition of a version's parent to capture both these cases.

\begin{definition}
    \label{def:bc-shared-proof-parent}
    For any version $v \neq \initVersion$, define the set of parent versions $\parents{v}$ as $\{v.\PrevVersion\}$ if $v$ is a version update and $v.\PrevVersions$ is $v$ is a version merge.
\end{definition}

Even though the structure of the DAG formed by the versions is different, the key property of \Cref{lem:versions-form-chain} continues to hold. That is, the set of versions for which there exists at least one certified transaction forms a chain.
However, these versions may not be consecutive in the chain.

\begin{definition}
    There exists a path from $v$ to $v'$, indicated by $v \to v'$, if for some $k \geq 1$, there exists a sequence $v_1,...,v_k$ such that $v_1 = v'$, $v_k=v$, and for all $1 < i \leq k$, $v_{i-1} \in \parents{v_i}$.
    If $v \to v'$ doesn't hold, we write $v \not\to v'$.
\end{definition}

\begin{lemma}
    \label{lem:versions-form-chain-shared}
    Let $\CV$ be the set of versions for which there exists at least one certified transaction. If $\CV \neq \emptyset$, then
    $\CV = \{v_1, ..., v_{|\CV|}\}$ such that 
    $v_1 \to \initVersion$ and
    for all $i=2,...,|\CV|$, $v_i \to v_{i-1}$.
\end{lemma}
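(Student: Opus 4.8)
The plan is to follow the proof of \Cref{lem:versions-form-chain} almost verbatim, replacing the parent relation by the reachability relation $\to$ and the rooted-tree argument by a rooted-DAG argument. First I would isolate two structural facts. \textbf{Fact (i):} the versions form a DAG rooted at $\initVersion$ --- every non-genesis version $v$ is created by a version update (so $\parents{v} = \{v.\PrevVersion\}$) or a version merge (so $\parents{v} = v.\PrevVersions$), and in either case its parents already exist, so an easy induction gives a path $v \to \initVersion$. \textbf{Fact (ii):} an honest validator's local $\Version$ only ever moves along DAG edges --- it starts at $\initVersion$ (\Cref{alg:bounded-counter-owned}~\Cref{loc:bc-owned-init-version}) and changes from $u$ to $u'$ only when $u = u'.\PrevVersion$ on a version update (\Cref{alg:bounded-counter-owned}~\Cref{loc:bc-owned-req-valid}) or $u \in u'.\PrevVersions$ on a version merge (\Cref{alg:bounded-counter-shared}~\Cref{loc:bc-shared-merge-version}); in both cases $u \in \parents{u'}$. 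Consequently the versions an honest validator successively adopts trace a path in the DAG, so if such a validator's local version is $v$ at one time and $v'$ at a strictly later time, then $v' \to v$.

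Next I would prove that $\to$ restricted to $\CV$ is a chain. Acyclicity of the DAG gives antisymmetry and path concatenation gives transitivity, so $\to$ is a partial order; it remains to show any two $v, v' \in \CV$ are comparable. Suppose not: $v \neq v'$, $v \not\to v'$, and $v' \not\to v$. Each of $v, v'$ carries a certified transaction signed by $2f+1$ validators, and a validator signs a transaction of version $v$ only while its local $\Version$ equals $v$ (\Cref{alg:bounded-counter-owned}~\Cref{loc:bc-owned-tx-version}). The two signing quorums overlap in at least $2(2f+1) - n = f+1$ validators, at least one of which is honest; that honest validator was at version $v$ at some time and at version $v'$ at another, so by Fact (ii) one of $v \to v'$, $v' \to v$ holds --- a contradiction.

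Finally, since $(\CV, \to)$ is a finite chain I would enumerate $\CV = \{v_1, \dots, v_{|\CV|}\}$ so that $v_i \to v_{i-1}$ for $i = 2, \dots, |\CV|$, and then $v_1 \to \initVersion$ follows from Fact (i) applied to $v_1$. This is exactly the claimed statement; note the contrast with \Cref{lem:versions-form-chain}: consecutive $v_{i-1}, v_i$ need not satisfy $v_{i-1} \in \parents{v_i}$, because uncertified versions may sit on the path between them, which is why the conclusion is phrased with $\to$ rather than with $\parents{\cdot}$.

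The step I expect to be the main obstacle is Fact (ii) with its direction bookkeeping --- pinning down that ``adopted later'' implies ``reachable down to the one adopted earlier'', and in particular checking that a merge edge genuinely records the validator's pre-merge version as a parent, so a validator that got split onto a conflicting branch and then merges still traces a single monotone path through the DAG. Once Fact (ii) is nailed down, the quorum-intersection argument is identical to the owned-counter case.
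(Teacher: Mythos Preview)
Your proposal is correct and follows essentially the same approach as the paper: both establish that honest validators' local versions move only along parent edges (your Fact~(ii), the paper's observation that $v \in \parents{v'}$ whenever a validator updates from $v$ to $v'$), and both finish with the identical quorum-intersection argument to rule out incomparable versions in~$\CV$. Your presentation is slightly more careful---you make the temporal direction in Fact~(ii) explicit and correctly call the version structure a DAG rather than a tree---but the substance is the same.
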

\begin{proof}
    If no transactions have been certified, $\CV = \emptyset$.
    If at least one transaction is certified, then $v_1 \to \initVersion$.
    This is because initially honest validators sign only transactions with version $ \initVersion$ (\Cref{alg:bounded-counter-owned}~\Cref{loc:bc-owned-init-version,loc:bc-owned-tx-version}) and will not sign transactions for a different version until they receive a version update request containing certified transactions (\Cref{alg:bounded-counter-owned}~\Cref{loc:bc-owned-req-valid}) or a version merge request in which $\initVersion \in \PrevVersions$ (\Cref{alg:bounded-counter-shared}~\Cref{loc:bc-shared-merge-version}). 

    For any honest validator $j$, if it updates $\Version$ from $v$ to $v'$, it must be such that $v 
    \in \parents{v'}$ (\Cref{alg:bounded-counter-owned}~\Cref{loc:bc-owned-req-valid}, \Cref{alg:bounded-counter-shared}~\Cref{loc:bc-shared-merge-version}).
    Therefore, for all $v \in \CV$,
    there exists a sequence $\initVersion, ..., v$ in which there is a path from each version to the next version.
    In other words, the versions in $\CV$ are part of
    a tree rooted at $\initVersion$ with parent links as edges.

    Now all that remains to show is that this tree is, in fact, a chain.
    That is, there is no $v, v' \in \CV$ such that 
    $v \not\to v'$ and $v' \not\to v$.
    This follows from quorum intersection.
    If there was $v, v' \in \CV$ such that 
    $v \not\to v'$ and $v' \not\to v$,
    then for both versions $v$ and $v'$, there is a set of $2f+1$ validators that signed transactions with that version. These two sets of $2f+1$ validators have at least $2(2f+1) - n = f+1$ validators in common (since $n = 3f+1$).
    However, since at most $f$ validators are adversarial, at least one honest validator signed both transactions with version $v$ and $v'$.
    However, this is a contradiction because an honest validator will not sign transactions for both versions $v$ and $v'$
    since there is no path from $v$ to $v'$ or from $v'$ to $v$.
\end{proof}

Further, we generalize the history of a version (\Cref{def:bc-owned-proof-history}) to capture version update and version merge requests.
Like the owned bounded counter, the collective bounded counter too updates validators' budgets accounting for all transactions in the history of the new version.
So, the proof of \Cref{lem:bc-owned-budget-value} carries forward similarly in \Cref{lem:bc-shared-budget-value}.

\begin{definition}
    \label{def:bc-shared-proof-history}
    Define the history $\history{v}$ of a version as $\history{\initVersion} = \emptyset$, $\history{v \neq \initVersion} = \history{\parent{v}} \cup v.\PrevTxs$ if $v$ is a version update, and $\history{v \neq \initVersion} = \bigcup_{v' \in \parents{v}} \history{v'}$.
\end{definition}

\begin{lemma}
    \label{lem:bc-shared-budget-value}
    For any version $v_i \in \CV$,
    the average budget of all honest validators at the time they upgrade to version $v_{i}$ satisfies
    $\BudTotal_{v_{i}} \leq \budFrac (\initBal + \Val{\inctxs{v_i}} + \Val{\excnegtxs{v_i}})$.
\end{lemma}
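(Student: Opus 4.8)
The plan is to reprove \Cref{lem:bc-owned-budget-value} in the collective setting by tracking the same four ways an honest validator's budget changes over the execution, now allowing version \emph{merge} requests (\Cref{alg:bounded-counter-shared}) alongside version update requests. Throughout its run an honest validator (i)~starts from $\budFrac\initBal$, (ii)~subtracts $|\tx.\diff|$ for each decrement transaction it signs (\Cref{alg:bounded-counter-owned}~\Cref{loc:bc-owned-tx-decrease-budget}), (iii)~adds $\budFrac\,\tx.\diff$ for each certified transaction $\tx$ that newly enters the history via an update or merge it processes (\Cref{alg:bounded-counter-owned}~\Cref{loc:bc-owned-req-update-budget}, \Cref{alg:bounded-counter-shared}~\Cref{loc:bc-shared-merge-update-budget}), and (iv)~reclaims $|\tx.\diff|$ for every such decrement transaction it had itself signed (\Cref{alg:bounded-counter-owned}~\Cref{loc:bc-owned-req-regain-budget}, \Cref{alg:bounded-counter-shared}~\Cref{loc:bc-shared-merge-regain-budget}). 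Since the within-a-version signing rule is identical to \Cref{alg:bounded-counter-owned}, components (i) and (ii) and the final averaging step are unchanged; the only new content is to pin down what contributions (iii) and (iv) sum to.

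The main obstacle is to show that, by the time an honest validator sets $\Version$ to $v_i$, contributions (iii) and (iv) are taken over \emph{exactly} the transactions of $\history{v_i}$, each counted once, despite the merge DAG. I would prove this by induction along the chain of versions from \Cref{lem:versions-form-chain-shared}, using that every honest validator's sequence of adopted versions is a path from $\initVersion$ in the version DAG (\Cref{alg:bounded-counter-owned}~\Cref{loc:bc-owned-req-valid}, \Cref{alg:bounded-counter-shared}~\Cref{loc:bc-shared-merge-version}). On an update step from $\parent{v}$ to $v$ the validator adds precisely $v.\PrevTxs$ to what it has already accounted for, matching \Cref{def:bc-shared-proof-history}; on a merge step from some $v' \in \parents{v}$ to $v$, the \emph{pendingTxsInHistory} set is the recursive history $\bigcup_{v'' \in \parents{v}}\history{v''}$ minus the already-accounted-for transactions (those in $\history{v'}$), so afterwards the validator has accounted for $\history{v}$, still without repetition. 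The delicate points to handle carefully are that a validator may traverse versions not in $\CV$ and that a merge imports histories of parent versions the validator never adopted individually: in both cases the deduplication keyed on ``budget not yet updated for this transaction'' keeps the multiplicity at one, and the recursive history carried by the merge request supplies exactly the transactions of the unvisited parents.

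Once this structural claim is in hand, the rest follows the proof of \Cref{lem:bc-owned-budget-value} essentially verbatim: summing the four components over the honest validators and dividing by $n-\freal = 3f+1-\freal$, using that every certified decrement transaction carries at least $2f+1-\freal$ honest signatures (so (ii) sums to at most $\frac{2f+1-\freal}{3f+1-\freal}\Val{{\certtxs^{i}}^{-}}$), and cancelling the in-history portion of (ii) against (iv), gives
\[ \BudTotal_{v_i} \le \budFrac\big(\initBal + \Val{\history{v_i}}\big) + \frac{2f+1-\freal}{3f+1-\freal}\Val{\excnegtxs{v_i}}. \]
Since $\Val{\excnegtxs{v_i}} \le 0$, $\frac{2f+1-\freal}{3f+1-\freal} \ge \budFrac$ for all $\freal \le f$, and $\history{v_i} = \inctxs{v_i}$ by notation, this yields $\BudTotal_{v_i} \le \budFrac(\initBal + \Val{\inctxs{v_i}} + \Val{\excnegtxs{v_i}})$, completing the proof.
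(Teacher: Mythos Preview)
Your proposal is correct and follows essentially the same approach as the paper: decompose each honest validator's budget into the same four components (initial, signed decrements, history updates, reclaims), average, and simplify via $\frac{2f+1-\freal}{3f+1-\freal}\ge\budFrac$. The paper's proof is terser and simply asserts that components (iii) and (iv) range over $\history{v_i}$; your inductive argument that \emph{pendingTxsInHistory} ensures each transaction of $\history{v_i}$ is accounted for exactly once across update and merge steps is additional justification the paper omits, but it does not constitute a different route.
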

\begin{proof}
    Throughout the execution, an honest validator i) starts with an initial budget of $\budFrac\initBal$, then ii) decreases its budget for every decrement transaction signed (\Cref{alg:bounded-counter-owned}~\Cref{loc:bc-owned-tx-decrease-budget}), iii) updates its budget for every certified transaction included in a version update request (\Cref{alg:bounded-counter-owned}~\Cref{loc:bc-owned-req-update-budget}) or transitively included in a version merge request (\Cref{alg:bounded-counter-shared}~\Cref{loc:bc-shared-merge-update-budget}), and iv) reclaims its budget for every certified decrement transaction it had previously signed that is included in a version update request (\Cref{alg:bounded-counter-owned}~\Cref{loc:bc-owned-req-regain-budget}) or transitively included in a version merge request (\Cref{alg:bounded-counter-shared}~\Cref{loc:bc-shared-merge-regain-budget}).
    Suppose that $\freal \leq f$ validators are adversarial (so, $n-\freal = 3f + 1 - \freal$ are honest).
    Combining these four components,
    the average budget of all honest validators at the time they update to version $v_i$ is
    \begin{IEEEeqnarray*}{rCl}
        \BudTotal_{v_i} &\leq& \budFrac\initBal
        + \frac{2f+1-\freal}{3f+1-\freal} \Val{{\certtxs^i}^{-}} + \budFrac\Val{\history{v_i}}
        - \frac{2f+1-\freal}{3f+1-\freal}\Val{\history{v_i}^{-}} \\
        &=& \budFrac(\initBal + \Val{\history{v_i}}) + \frac{2f+1-\freal}{3f+1-\freal}\Val{\excnegtxs{v_i}} \\
        &\leq& \budFrac (\initBal + \Val{\inctxs{v_i}} + \Val{\excnegtxs{v_i}}).
    \end{IEEEeqnarray*}
\end{proof}

\begin{theorem}
    \label{thm:bc-shared-boundedness}
    The collective bounded counter protocol (validator code: \Cref{alg:bounded-counter-shared}) satisfies \globalsafetybc.
\end{theorem}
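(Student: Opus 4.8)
The plan is to reproduce the argument of \Cref{thm:bc-owned-boundedness} almost verbatim, substituting the collective-counter analogues of each ingredient that have already been established. Fix an arbitrary subset of honest validators and let $T$ be the set of transactions executed by some validator in this subset. Since the collective protocol inherits \textsc{ProcessCert} from \Cref{alg:bounded-counter-owned}, validators still execute only certified transactions (\Cref{alg:bounded-counter-owned}~\Cref{loc:bc-owned-cert-execute}), so $T$ consists of certified transactions; by \Cref{lem:versions-form-chain-shared} all versions occurring in $T$ lie on a single chain $v_1 \to \cdots \to v_{|\CV|}$ (now possibly with gaps, since parents need not be consecutive in the chain). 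Letting $v_k$ be the latest version occurring in $T$, we obtain $T \subseteq \certtxs^k$, where $\certtxs^k$ collects the certified transactions whose version is $v_i$ for some $i \leq k$.

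The second ingredient is the inclusion $T \supseteq \history{v_k}$. In the owned case this was immediate from the ``wait until executed'' guard (\Cref{alg:bounded-counter-owned}~\Cref{loc:bc-owned-cert-version}), which forces a validator to execute all of $v_k.\PrevTxs$ before any transaction with version $v_k$, and transitively this pulls in all of $\history{v_k}$. For the collective counter the history is defined recursively over a DAG (\Cref{def:bc-shared-proof-history}) in which a merge version has several parents and contributes no new transactions of its own, so I would argue that executing a transaction with version $v_k$ still requires first executing every transaction in the transitive closure of $\PrevTxs$ over both update edges and merge edges, which by unrolling \Cref{def:bc-shared-proof-history} is exactly $\history{v_k}$. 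I expect this step to be the main obstacle: it requires checking carefully that the execution guards, together with the chain structure of \Cref{lem:versions-form-chain-shared}, correctly account for the multiple-parent structure introduced by merges, and in particular that no certified transaction of an earlier version on the chain can be ``skipped'' because a merge rejoined two branches.

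With both inclusions in hand, the remainder is identical to the owned proof. Partition $\certtxs^{k-1} = \inctxs{v_k} \cup \excnegtxs{v_k} \cup \excpostxs{v_k}$ according to whether a transaction lies in the history of $v_k$ and, if not, its sign; the worst case is $T = \inctxs{v_k} \cup \excnegtxs{v_k} \cup \certtxs_{v_k}^{-}$, in which every decrement transaction and no increment transaction beyond the history is executed. Then $\Val{T} = \Val{\inctxs{v_k}} + \Val{\excnegtxs{v_k}} + \Val{\certtxs_{v_k}^{-}} \geq \Val{\inctxs{v_k}} + \Val{\excnegtxs{v_k}} - \frac{1}{\budFrac}\BudTotal_{v_k} \geq -\initBal$, using \Cref{lem:bc-shared-warmup} for the first inequality and \Cref{lem:bc-shared-budget-value} for the second. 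Hence $\initBal + \Val{T} \geq 0$, which is the predicate $P$ of \Cref{def:bounded-counter}, establishing \globalsafetybc. The conceptual take-away is that a merge only reshapes the version DAG and never adds certified transactions to any history, so the budget bookkeeping of \Cref{lem:bc-shared-budget-value} already absorbs it and the merge mechanism cannot create new spending.
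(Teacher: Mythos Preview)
Your proposal is correct and follows essentially the same route as the paper: both proofs are near-verbatim transcriptions of the owned-counter argument (\Cref{thm:bc-owned-boundedness}), replacing \Cref{lem:bc-owned-warmup} and \Cref{lem:bc-owned-budget-value} by their collective analogues \Cref{lem:bc-shared-warmup} and \Cref{lem:bc-shared-budget-value}, and arriving at the same worst-case decomposition and chain of inequalities. The only difference is that you pause to justify $T \supseteq \history{v_k}$ in the presence of merge versions, whereas the paper dispatches this with the same one-line citation of \Cref{loc:bc-owned-cert-version} used in the owned case.
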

\begin{proof}
    For any given subset of honest validators, let $T$ be the set of transactions executed by some validator in this subset.
    Let $v_k$ be the latest version in $T$.
    Since validators only execute certified transactions (\Cref{alg:bounded-counter-owned}~\Cref{loc:bc-owned-cert-execute}),
    $T \subseteq \certtxs^k$.
    Moreover, validators execute transactions in $v_k.\PrevTxs$ before executing transactions with version $v_k$ (\Cref{alg:bounded-counter-owned}~\Cref{loc:bc-owned-cert-version}), so 
    $T \supseteq \history{v_{k}}$.

    Recall that we partitioned $\certtxs^{k-1} = \inctxs{v_k} \cup \excnegtxs{v_k} \cup \excpostxs{v_k}$, that is, certified transactions with versions up to $v_{k-1}$ may be in the history of version $k$, and those that are not may be either increments or decrements.
    Given these constraints, it is sufficient to prove that $\initBal + \Val{T} \geq 0$ for the worst case $T = \inctxs{v_k} \cup \excnegtxs{v_k} \cup \certtxs_{v_k}^{-}$ where all decrement transactions and no increment transactions beyond $\history{v_k}$ are executed.
    \begin{align}
        \Val{T} &= \Val{\inctxs{v_k}} + \Val{\excnegtxs{v_k}} + \Val{\certtxs_{v_k}^{-}} \\
        &\geq \Val{\inctxs{v_k}} + \Val{\excnegtxs{v_k}}
        - \frac{1}{\budFrac}\BudTotal_{v_{k}} \quad \text{(\Cref{lem:bc-shared-warmup})} \\
        &\geq -\initBal \quad \text{(\Cref{lem:bc-shared-budget-value})}
    \end{align}
\end{proof}
\section{Security Arguments for \fun} \label{sec:proofs}

We argue about the safety and liveness of \fun.
Intuitively, \fun does not invalidate the finality guarantees of the normal fast path operations. That is, a client holding an effect certificate can be assured that its transaction will never be reverted.

\begin{theorem}\label{th:client-safety}
    If there exists an effect certificate $\effectcert$ over a transaction $\tx$, the execution of $\tx$ is never reverted.
\end{theorem}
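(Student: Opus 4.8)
The plan is to reduce \Cref{th:client-safety} to two facts: (i) classical fast-path safety (no two conflicting certificates exist for the same owned-object version), and (ii) the claim that every $\unlockcert$ that any honest validator ever acts on for an object consumed by $\tx$ must carry a certificate for $\tx$ itself. Given (ii), $\Call{ProcessUnlockCert}{}$ always re-executes $\tx$ rather than a $\noop$ or a conflicting transaction, so the only place where a locally executed transaction is ever undone cannot undo $\tx$; combined with the reconfiguration guarantee that checkpoints contain all finalized transactions (\Cref{sec:sui-checkpoints}), this gives that $\tx$ is never reverted. The starting observation is that an $\effectcert$ over $\tx$ aggregates $2f+1$ $\effectvote$s, so at least $f+1$ \emph{honest} validators emitted an $\effectvote$ for $\tx$; call this set $Q_e$. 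An honest validator emits such a vote only after executing $\tx$, either on the fast path (\Cref{fig:overview}) or inside $\Call{ProcessUnlockCert}{}$ (\Cref{alg:process-unlock-cert}). In particular $\tx$ has a certificate $\cert$ signed by $2f+1$ validators.

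First I would re-establish fact (i) by quorum intersection: any conflicting $\tx' \neq \tx$ on the same $\Okey.\Version$ would need $2f+1$ signers, and the two signing quorums share an honest validator, contradicting that a correct validator locks each owned-object version to at most one transaction (\Cref{sec:background-data-structures}). Hence no ordinary fast-path execution can overwrite $\tx$, and \emph{any} certificate that appears inside an $\unlockvote$ or $\unlockcert$ for $\Okey$ is necessarily a certificate for $\tx$. This already rules out an $\unlockcert$ carrying a \emph{different} transaction; the remaining danger is an $\unlockcert$ carrying $\none$, which would cause an honest validator to undo its local execution of $\tx$ and run a $\noop$.

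Next I would handle that case. Fix an owned object $\Okey$ consumed by $\tx$. Since $\unlockcert$s for $\Okey$ and transaction certificates for $\Okey$ are totally ordered by consensus, and since $\unlockdb[\Okey]$ only ever becomes $\confirmed$ through a consensus-sequenced message (Check~(\ref{alg:process-unlock-cert}.1), \Cref{alg:line:mark-confirmed}, and the rule that a sequenced transaction certificate is always executed), there is a well-defined first $\unlockcert$ for $\Okey$ that honest validators act on, and they all act on the same one (unless $\unlockdb[\Okey]$ was already $\confirmed$ via a sequenced certificate, which by fact~(i) is a certificate for $\tx$, and we are done). It remains to show this first $\unlockcert$ does not carry $\none$. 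It aggregates $2f+1$ $\unlockvote$s, hence at least $f+1$ from honest validators; call this set $Q_u$. Since $|Q_e| + |Q_u| \geq 2f+2$ and there are only $2f+1$ honest validators, pick $r \in Q_e \cap Q_u$. I would argue $r$'s $\unlockvote$ carries $\cert$: if $r$ had handled the corresponding $\unlockreq$ first, it would have set $\unlockdb[\Okey] \gets \unlocked$ (\Cref{alg:line:block-exec}) and thereafter refused to execute $\tx$ on the fast path, and the $\unlockcert$ in question being the \emph{first} leaves $r$ no consensus-path opportunity to execute $\tx$ either --- contradicting $r \in Q_e$; therefore $r$ executed $\tx$ and recorded $\cert$ in $\lockdb[\Okey]$ before answering the $\unlockreq$, so its $\unlockvote$ carries $\cert$. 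Hence the $\unlockcert$ carries a certificate for $\tx$ (by fact~(i), the only possibility), so $\Call{ProcessUnlockCert}{}$ re-executes $\tx$ on every honest validator and marks $\unlockdb[\Okey] \gets \confirmed$, sealing the decision; this holds for every $\Okey$ in $\tx$, including gas objects.

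The main obstacle I expect is the interleaving argument for $r$ in the third paragraph: one must carefully enumerate the orders in which $r$ can handle $\cert$ (via fast path or via an earlier $\unlockcert$), the $\unlockreq$, and $\unlockcert$s, and show each admissible order either forces $r$'s $\unlockvote$ to carry $\cert$ or contradicts $r \in Q_e$ --- in particular ruling out that $r$ ``flipped'' $\unlockdb[\Okey]$ back from $\confirmed$ and that some later $\unlockcert$ resurrected a $\noop$. A secondary subtlety, which I would isolate as a small lemma, is the determinism claim: that the per-object $\unlockcert$ that honest validators act on is a deterministic function of the consensus order, because $\unlockdb[\Okey]$ only advances to $\confirmed$ through consensus-sequenced certificates or $\unlockcert$s (Check~(\ref{alg:process-unlock-cert}.1)).
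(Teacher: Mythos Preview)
Your proposal is correct and rests on the same core idea as the paper's own proof: the quorum intersection between the $f{+}1$ honest validators behind the $\effectcert$ and the $f{+}1$ honest validators behind any $\unlockcert$ with $\cert=\none$, which is impossible among only $2f{+}1$ honest parties. The paper dispatches this in three sentences by direct counting (``$f{+}1 + f{+}1 + f = 3f{+}2 > 3f{+}1$''), assuming without comment that an honest validator who signed an $\unlockvote$ with $\none$ cannot also have contributed to the $\effectcert$.

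Your version is more elaborate but not genuinely different: you make explicit several things the paper leaves implicit. Your fact~(i) rules out an $\unlockcert$ carrying a \emph{conflicting} certificate, which the paper's proof never mentions (it only treats the $\none$ case). Your determinism lemma and your ``first acted-on $\unlockcert$'' case split handle the possibility of multiple $\unlockcert$s and interleaved checkpoint certificates, which the paper again elides. And your interleaving analysis at $r$ closes the gap the paper leaves open, namely why an honest validator that signed $(\unlockreq,\none)$ cannot \emph{later} execute $\tx$ and contribute an $\effectvote$; the paper's sentence ``a correct validator only signs an $\unlockvote$ with an empty $\cert$ only if it has not executed anything for $\Okey$'' speaks to the past but not the future, whereas your use of $\unlockdb[\Okey]=\unlocked$ plus ``first sequenced'' explicitly blocks both paths. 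So: same decomposition, same key lemma, but you supply the missing temporal reasoning the paper's terse counting argument glosses over.
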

\begin{proof}
    We assume that the execution of \tx is reverted and show a contradiction. 
    The transaction can only be reverted if there exists an $\unlockcert$ carrying an empty certificate over an $\Okey$  modified by $\tx$.
    From Check (\ref{alg:process-unlock-tx}.2) of \Cref{alg:process-unlock-tx} a correct validator only signs an $\unlockvote$ with an empty $\cert$ only if it has not executed anything for $\Okey$. From our assumption that $\Okey$ did admit a no-op there should be $f+1$ honest validators that did not partake in the generation of the $\effectcert$ of $\tx$ and hence passed the check. Additionally, for the $\effectcert$ to exist by definition it has $2f+1$ signatories over the $\Okey$ in question, at least $f+1$ of them being honest.  This implies a total of at least $f+1 + f+1 + f = 3f+2 > 3f+1$ validators, hence a contradiction.
\end{proof}

The converse also applies, that is, if an $\unlockcert$ exists, then no $\effectcert$ over the $\Okey$ will be generated in the fast path. The proof works analogously by adding an extra check during $\effectcert$ generation in which correct validators refuse to process certificates when they recorded $\unlocked$ in their $\unlockdb$.

Next, we show that validators that might process on the consensus path both a $\cert$ (through checkpointing) and $\unlockcert$ will arrive at the same execution result. We prove the case where an $\unlockcert$ is ordered first. For this, we need to enhance the protocol of checkpointing in Sui to check the value of $\unlockdb[\Okey]$ and ignore a $\cert$ that tries to process a $\confirmed$ $\Okey$, which is a straightforward change.

\begin{theorem}
\label{thm:unlock-conflicting-cert}
    If a correct validator executes an $\unlockcert$ certificate over $\Okey$ as sequenced by the SMR engine,
    no correct validator will subsequently execute a conflicting $\cert$ as sequenced by the SMR engine.
\end{theorem}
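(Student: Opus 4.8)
The plan is to exploit the single total order that the SMR engine imposes on all certificates processed on the consensus path --- both $\unlockcert$ messages and ordinary transaction certificates arriving through checkpointing --- together with the fact that at a correct validator the entry $\unlockdb[\Okey]$ is only ever advanced toward $\confirmed$ and is never reset within an epoch. First I would fix the correct validator $q$ that executes the $\unlockcert$ over $\Okey$ and let $\pi$ be the position of that $\unlockcert$ in the common consensus sequence. By \textbf{Check (\ref{alg:process-unlock-cert}.1)} of \Cref{alg:process-unlock-cert}, $q$ reaches \Cref{alg:line:exec} only because $\unlockdb[\Okey]$ was not $\confirmed$ in $q$'s state when it processed position $\pi$. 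I would then argue that no transaction certificate consuming $\Okey$ can occur strictly before $\pi$ in the sequence: such a certificate, when processed by $q$, would either already find $\unlockdb[\Okey] = \confirmed$ or would be executed on the consensus path and set $\unlockdb[\Okey] \gets \confirmed$; in both cases $q$ would see $\confirmed$ at position $\pi$, contradicting the previous sentence. Hence the only positions a conflicting $\cert$ can occupy are after $\pi$ --- precisely the ``subsequently'' case the theorem concerns.

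Next I would take an arbitrary correct validator $q'$ and assume, for contradiction, that $q'$ executes some $\cert$ on a transaction consuming $\Okey$ as sequenced by the SMR engine, at position $\pi' > \pi$. Because $q'$ processes the total order in order, by the time it reaches $\pi'$ it has already handled position $\pi$. When it did so, either it passed \textbf{Check (\ref{alg:process-unlock-cert}.1)}, executed the $\unlockcert$ at \Cref{alg:line:exec}, and set $\unlockdb[\Okey] \gets \confirmed$ at \Cref{alg:line:mark-confirmed}; or the check failed, which can only be because $\unlockdb[\Okey]$ was already $\confirmed$ (note that $\unlocked$ does not fail the check, so a pending \fun instance is not an obstacle). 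Either way $\unlockdb[\Okey] = \confirmed$ in $q'$'s state by the time it reaches $\pi'$, and it remains so by monotonicity. Invoking the enhanced checkpoint rule stated immediately before the theorem --- a correct validator ignores a $\cert$ whose $\Okey$ is $\confirmed$ in $\unlockdb$ --- I conclude that $q'$ does not execute $\cert$ at $\pi'$, a contradiction, which proves the claim.

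The deductions above are short; the real work is in the bookkeeping, which is where I expect the difficulty. I would need to pin down ``conflicting'' to mean ``consumes the same $\Okey = (\Oid, \Version)$'', and check that executing the $\unlockcert$ --- whether it carries a genuine $\cert$ or the $\noop$, the latter bumping the version by one --- actually consumes that version, so that any later certificate on the old $\Okey$ is genuinely stale. I would also have to verify the monotonicity claim for $\unlockdb[\Okey]$ carefully (it is set to $\unlocked$ at \Cref{alg:line:block-exec} of \Cref{alg:process-unlock-tx} and to $\confirmed$ at \Cref{alg:line:mark-confirmed} of \Cref{alg:process-unlock-cert}, and only the epoch change resets it), and confirm that the consensus path and the blocked fast path are the only avenues by which a correct validator could execute a certificate on $\Okey$, so that no third route lets a conflicting $\cert$ slip through. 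Making this closure argument precise --- together with the symmetric remark that the fast path is already sealed for $\Okey$ once a \fun instance has been opened --- is the part I expect to require the most care.
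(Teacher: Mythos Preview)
Your proposal is correct and follows essentially the same line as the paper: use the SMR total order, observe that processing the $\unlockcert$ leaves $\unlockdb[\Okey] = \confirmed$ at every correct validator, and then invoke Check~(\ref{alg:process-unlock-cert}.1) and its checkpoint dual to reject any later conflicting $\cert$. You are considerably more careful than the paper (which dispatches the theorem in three sentences): your case split on whether $q'$ passes or fails Check~(\ref{alg:process-unlock-cert}.1) at position $\pi$, and your monotonicity remark, plug small gaps the paper leaves implicit; your ``no conflicting $\cert$ before $\pi$'' paragraph is extra work not needed for the theorem as stated, since the conclusion only concerns subsequent certificates.
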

\begin{proof}
    The proof directly follows from the safety property of the SMR engine that all validators will process certificates in the same order.
    Hence, upon processing $\unlockcert$, all honest validators mark the execution of $\Okey$ as confirmed by setting $\unlockdb[\Okey] \gets \confirmed$ (\Cref{alg:line:mark-confirmed} of \Cref{alg:process-unlock-cert}). Then, Check (\ref{alg:process-unlock-cert}.1) of \Cref{alg:process-unlock-cert} (and its dual added at the checkpoint algorithm) ensures that if any further $\cert$ or $\unlockcert$ with a conflict is given as input to the execution engine it is rejected.
\end{proof}

The converse can be proven in the same manner since we enhance the execution of $\cert$ during the checkpoint process with updating $\unlockdb[\Okey] \gets \confirmed$ after processing. Then all $\unlockcert$ on the $\Okey$ will be rejected at the Check (\ref{alg:process-unlock-cert}.1) of \Cref{alg:process-unlock-cert}.

Based on the above theorems, we can prove safety of the overall \sysname system that uses \fun.

\begin{theorem}
\label{thm:full-system-safety}
\sysname satisfies safety (\Cref{def:safety}).
\end{theorem}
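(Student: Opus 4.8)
The plan is to verify the two clauses of \Cref{def:safety} --- \textbf{Validity} and \textbf{\globalsafety} --- using the building blocks already established. Validity is the routine direction: every code path that mutates a validator's state does so only behind a certificate check. \textsc{ProcessCert} executes $\cert.\tx$ only when $\valid{\cert}$ holds; checkpoints by construction contain only finalized, hence certified, transactions; and \Cref{alg:process-unlock-cert} executes the embedded $\cert$ only after its validity check, and otherwise executes a \noop. I would dispatch the \noop by treating it as a system action authorized by the $\unlockcert$ rather than a user transaction, so that every user transaction in any $T_p(t)$ indeed carries a valid certificate, which discharges Validity for the full system.

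The substance is \globalsafety. Fix a time $t$ and a set $H$ of honest validators, and let $U = \bigcup_{p \in H} T_p(t)$; the goal is to exhibit a sequence $T \supseteq U$ with $P(T)$ true for every $P \in \mathcal P$. I would group the transactions of $U$ by the object versions they consume. For \bcobject and collective objects, the required ordering together with the non-negativity invariant is delivered verbatim by \Cref{thm:bc-owned-boundedness,thm:bc-shared-boundedness}. For ordinary owned and multi-owner objects, the core fact is that at most one transaction per $(\Oid,\Version)$ can be finalized on the fast path: two distinct certificates on the same version would each need $2f{+}1$ signatures overlapping in $f{+}1$ validators, forcing a common honest signer and contradicting the lock discipline enforced at step~\two in \Cref{fig:overview}. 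When \fun intervenes, \Cref{th:client-safety}, \Cref{thm:unlock-conflicting-cert}, and their stated converses guarantee that the fast path and the consensus path agree on which transaction (or \noop) is adopted for each contended $\Okey$, and the SMR total order supplies a canonical linearization of the consensus-resolved transactions. Chaining these per-version outcomes along the version/ancestry DAG --- the transaction producing a version precedes any transaction consuming it --- yields a well-defined $T$ whose restriction to any single object is a legal linear execution from that object's genesis, so conservation-of-value predicates hold object by object and bounded-counter non-negativity holds by the cited theorems.

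The step I expect to be the main obstacle is reconciling the ``for all $t$'' quantifier with the transient divergence \fun can create. An honest validator may execute a certified $\tx$ on the fast path and, only later, upon delivering a sequenced $\unlockcert$ that carries \none, revert $\tx$ in favour of a \noop; at an intermediate $t$ the union $U$ may then contain both $\tx$ and the \noop over the same object version. I would resolve this by observing that such a $\tx$ is certified and valid, hence does not overspend, while the \noop is state-neutral, so placing $\tx$ immediately before the \noop in $T$ preserves every conservation-of-value and bounded-counter predicate in $\mathcal P$; and by noting that \Cref{th:client-safety} ensures this $\tx$ was never finalized, so no stronger guarantee was ever owed to it, while \convergence (established separately) erases the divergence in the limit. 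Combining the Validity argument, per-version uniqueness on the fast path, the \fun consistency theorems, and this edge-case analysis yields \globalsafety and hence the theorem.
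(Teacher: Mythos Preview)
Your route is correct and more thorough than the paper's. The paper's proof of \globalsafety is a three-line argument organized around a single claim: for each owned object $O$, it asserts (citing only \Cref{thm:unlock-conflicting-cert}) that the executed sets $T_p^{O}(t)$ across honest validators are totally ordered by inclusion, so their union is simply the set held by whichever honest validator is furthest ahead on $O$; ordering that set by version yields $T^{O}$, valid because some honest validator actually executed exactly it; merging the $T^{O}$ across independent objects finishes. Bounded counters are not mentioned in that proof at all---the paper remarks in \Cref{sec:model} that their safety is handled separately in \Cref{sec:bounded-counter-owned-proof}. Your finer decomposition---per-version uniqueness via quorum intersection, explicit invocation of \Cref{th:client-safety} and \Cref{thm:unlock-conflicting-cert} for the fast-path/consensus interaction, and direct citation of \Cref{thm:bc-owned-boundedness,thm:bc-shared-boundedness}---is more work but covers more ground and makes visible the assumptions that the paper's inclusion-ordering shortcut leaves implicit.

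Your transient-divergence paragraph correctly locates the delicate spot, but the proposed fix---placing both the soon-to-be-reverted $\tx$ and the \noop into $T$---works only for additive predicates like conservation of value; it fails any predicate forbidding two transactions from consuming the same $(\Oid,\Version)$, which \Cref{sec:model} lists as a canonical example. The paper sidesteps this by asserting inclusion-ordering of the $T_p^{O}(t)$, which if true at every $t$ would rule out such a mixed union; but \Cref{thm:unlock-conflicting-cert} speaks only about SMR-sequenced certificates and does not by itself cover the window in which one honest validator has fast-path-executed $\tx$ while another has already processed the \noop via consensus. So this is a loose end in the paper's argument as well, not something your approach introduces.
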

\begin{proof}
    Validity holds because validators execute only transactions with valid certificates.

    Next, we prove global safety.
    For any given object $O$, let $T_p^{O}(t)$ be the set of transactions executed on that object by validator $p$ up to time $t$.
    Due to \Cref{thm:unlock-conflicting-cert}, for any two validators $p,q$, $T_p^{O}(t) \subseteq T_q^{O}(t)$ or $T_q^{O}(t) \subseteq T_p^{O}(t)$.
    Thus, $\bigcup_{p\text{ honest}} T_p^{O}(t)$ is the set executed by some honest validator. Consider the sequence $T^{O}$ made by arranging  transactions in this set in the order of the object version. Then, $T^O$ contains all transactions in $\bigcup_{p \text{ honest}} T_p^{O}(t)$ and this sequence respects the application's validity constraint since an honest validator executed transactions in this sequence.
    Finally, let $T$ be the merged sequence of $T^{O}$ for all objects, where the merge preserves the partial order for each object.
    Due to the independence of different objects, $T$ also satisfies the validity predicate, thus proving global safety.
\end{proof}

\para{Liveness argument}
Intuitively, we argue that \fun---and its composition with normal fast path operations---neither deadlocks nor enables unjustified aborts (which could starve an object from progress).

\begin{lemma}[Unlock Certificate Availability] \label{th:certificate-creation}
    A correct user can obtain an unlock certificate $\unlockcert$ over a valid $\Okey$.
\end{lemma}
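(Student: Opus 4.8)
The plan is to walk through the \fun protocol of \Cref{fig:fast-unlock} step by step and argue that, for a correct and authorized user, none of the steps leading to the formation of an $\unlockcert$ can block. First I would observe that a correct user authorized over $\Okey$ can assemble a valid unlock request $\unlockreq(\Okey, \auth)$: the authenticator $\auth$ consists of a transaction mutating $\Okey$ together with a signature on it by an owner of $\Okey$, plus a proof that the requester may act on $\Okey$. Since the user is authorized (e.g.\ it is the owner, or it holds such an owner-signed transaction from the stuck fast-path instance), it can produce all of these, so $\valid{\unlockreq}$ holds and the authenticator check of \Cref{alg:process-unlock-tx} (\Cref{alg:line:verify-auth}) passes at every honest validator.

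Next I would argue that every honest validator returns an unlock vote. Assuming eventual message delivery among honest parties, all $\geq 2f+1$ honest validators eventually receive the broadcast $\unlockreq$ and invoke \textsc{ProcessUnlockTx}. Inspecting \Cref{alg:process-unlock-tx}, the \emph{only} gating condition is $\valid{\unlockreq}$; there is no branch on the state of $\unlockdb[\Okey]$ (or $\lockdb[\Okey]$) that could suppress a reply. The validator reads $\Cert \gets \lockdb[\Okey]$ — which is either a valid certificate (an honest validator never stores an invalid one) or $\none$ — sets $\unlockdb[\Okey]$ to $\unlocked$, and returns $\unlockvote(\unlockreq, \mathbf{Option}(\Cert))$. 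Hence the user eventually collects at least $2f+1$ unlock votes. I would also note, using quorum intersection, that at most one transaction on $\Okey$ can be certified, so all honest votes that carry a certificate carry the same one, leaving no ambiguity in the next step.

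Finally, the user forms $\unlockcert(\unlockreq, \mathbf{Option}(\Cert))$ from a quorum of $2f+1$ of these votes: if some collected (valid) vote carries a certificate it embeds that certificate, otherwise it embeds $\none$. Byzantine validators may reply with votes carrying bogus certificates, but the user simply discards those, and an acceptable choice always remains (a valid certificate if any honest validator holds one for $\Okey$, and $\none$ otherwise). The resulting $\unlockcert$ is signed by a quorum and its embedded certificate is valid or $\none$, so it meets the validity conditions later checked in \Cref{alg:process-unlock-cert} (\Cref{alg:line:check-unlock-cert}); thus the correct user obtains an unlock certificate over $\Okey$. The main point to check carefully is exactly this non-blocking property of \Cref{alg:process-unlock-tx}: an honest validator issues an unlock vote for every valid request regardless of whether $\Okey$ is currently locked, already $\unlocked$, or already $\confirmed$ — which is immediate from the code, since the $\unlockdb$ state only influences the later consensus-side processing (\Cref{alg:process-unlock-cert}), not vote issuance. (Availability of the certificate does not assert it will have an \emph{effect}; that is handled by \Cref{th:client-safety} and \Cref{thm:unlock-conflicting-cert}.)
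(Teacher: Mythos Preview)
Your proposal is correct and follows essentially the same approach as the paper: argue that \textsc{ProcessUnlockTx} is non-blocking once the authenticator check passes, so all $\geq 2f+1$ honest validators eventually reply with an $\unlockvote$, which the user aggregates into an $\unlockcert$. Your treatment is in fact more careful than the paper's terse version, explicitly addressing how the user handles possibly bogus certificates in Byzantine votes and invoking quorum intersection to ensure the embedded $\Cert$ is unambiguous; these details are sound but not strictly necessary for the lemma as stated.
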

\begin{proof}
    A correct validator always signs $\unlockvote$ if it passes the check of \Cref{alg:process-unlock-tx}.
    Well-formed $\unlockreq$ always come with a valid authentication path (Check (\ref{alg:process-unlock-tx}.1)), and Check (\ref{alg:process-unlock-tx}.2) always returns an $\unlockvote$.
    As a result, if $\unlockreq$ is disseminated to $2f+1$ correct validators by a correct user, they will eventually all return an $\unlockvote$. The user then aggregates those votes into a unlock certificate $\unlockcert$ over $\Okey$.
\end{proof}

\begin{theorem}[\fun Liveness] \label{th:simple-unlock-liveness}
    If a correct and authorized user initiates a fast-unlock protocol, the \Okey in question will eventually admit a new transaction.
\end{theorem}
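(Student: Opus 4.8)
The plan is to push the user's $\unlockcert$ through the consensus engine, show that this forces a deterministic outcome on $\Okey$ on the consensus path, and then observe that the outcome either already \emph{is} a fresh transaction on the object or leaves the object at a new, unlocked version that a correct user reuses immediately.

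First I would invoke \Cref{th:certificate-creation}: a correct and authorized user can assemble an $\unlockcert$ over $\Okey$, since the authenticator inside $\unlockreq$ carries a transaction touching $\Okey$ signed by its owner together with a proof that the requester may act on $\Okey$, so --- running \Cref{alg:process-unlock-tx} --- every honest validator passes Check (\ref{alg:process-unlock-tx}.1), returns an $\unlockvote$, and $2f+1$ such votes form the certificate. The user submits this $\unlockcert$ for sequencing; by liveness of the SMR engine (which makes progress once messages among honest validators are delivered), \emph{some} event concerning $\Okey$ is eventually processed on the consensus path by every honest validator --- either this $\unlockcert$ itself, or, if it was raced, a transaction certificate on $\Okey$ or an earlier $\unlockcert$ on $\Okey$. (These are the only events that can set $\unlockdb[\Okey]$ to $\confirmed$; cf.\ \Cref{alg:line:mark-confirmed} of \Cref{alg:process-unlock-cert} and the dual checkpoint rule.)

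Next I would split on the result of processing that first consensus event for $\Okey$, using \Cref{thm:unlock-conflicting-cert} and its converse to make all honest validators agree on it. In the first case the executed item is a genuine transaction on $\Okey$ (the sequenced object is a transaction certificate, or the $\unlockcert$ carried a $\cert \neq \none$): then the object identified by $\Okey$ has admitted that transaction, by \Cref{thm:unlock-conflicting-cert} no conflicting outcome can occur afterwards, so it is not reverted, and we are done. In the second case a $\noop$ is executed ($\cert = \none$): honest validators undo any transaction they had locally executed on $\Okey$, run the no-op, advance the object to the fresh version $\Okey' = (\Oid,\Version+1)$, set $\unlockdb[\Okey] \gets \confirmed$, and return $\effectvote$ messages that the user aggregates into an $\effectcert$; since consensus delivers to every honest validator, at least $2f+1$ of them reach this state.

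Finally, in the no-op case $\Okey'$ is an ordinary owned-object version with $\unlockdb[\Okey'] = \none$ and no lock recorded in $\lockdb[\Okey']$, so the correct user --- learning $\Okey'$ from the effects --- issues a new transaction consuming $\Okey'$, honest validators sign it (nothing blocks them), it is certified, and by the Progress property (\Cref{def:liveness}) every honest validator executes it; hence $\Okey$ admits a new transaction. I expect the main obstacle to be precisely this case analysis: one must rule out every interleaving on the consensus path of a racing transaction certificate, a checkpoint certificate, and one or more unlock certificates, establishing that the $\confirmed$ bookkeeping is an invariant, set exactly once per object version, that blocks exactly the redundant or contradictory re-executions --- so that the protocol can neither deadlock nor silently drop the unlock. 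This is the same reasoning that underpins \Cref{thm:unlock-conflicting-cert} and \Cref{thm:full-system-safety}. (If the owners of $\Okey'$ immediately re-equivocate, the argument simply repeats with a further unlock, so the statement is read as: the unlock protocol terminates and restores the object's ability to transact.)
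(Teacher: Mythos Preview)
Your proposal is correct and follows the same skeleton as the paper: invoke \Cref{th:certificate-creation} to obtain an $\unlockcert$, appeal to SMR liveness so that some decisive event on $\Okey$ is sequenced, and conclude that the object is left in a state that admits a new transaction.

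The paper's own proof is considerably terser and differs in its case split. Rather than distinguishing (as you do) between ``sequenced event carried a real $\cert$'' versus ``sequenced event was a no-op'', the paper splits on whether the $\unlockcert$ is sequenced \emph{at all} before the epoch ends: if it is, consensus agreement forces all validators to execute it and update $\Okey$; if not, the epoch-change protocol drops all locks and restores liveness of $\Okey$ automatically. You omit this epoch-end fallback entirely, implicitly relying on SMR liveness alone to guarantee sequencing --- which is defensible under the paper's eventual-delivery assumption, but the paper chooses to hedge by invoking the epoch boundary as a safety net. Conversely, your finer case analysis (transaction vs.\ no-op, followed by an explicit argument that the fresh version $\Okey'$ is usable) spells out what the paper compresses into ``an updated $\Okey$''. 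Neither difference is a gap; your version is more explicit about post-unlock usability, the paper's is more explicit about the epoch-end escape hatch.
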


\begin{proof}
    A correct and authorized user will eventually generate an unlock certificate by \Cref{th:certificate-creation}. Additionally from the liveness property of SMR the unlock certificate will either eventually be added as part of the SMR output or the epoch will end. If the first happens by agreement of consensus the \unlockcert will be executed by all validators, leading to the termination of the fast-unlock protocol and an updated \Okey. If the epoch ends, all locks are dropped and liveness of all \Okey are automatically available for processing.
\end{proof}
\Cref{th:simple-unlock-liveness} is sufficient for correct users as either they will manage to no-op an incorrect invocation of \Okey, drive the tranasction of a correct $\tx$ to completion, or the epoch end will automatically unblock them. This means that there will always be an available $\Okey$ to be modified.

Now that we proved that an authorized user will succeed into unblocking the $\Okey$ we also need to show that an unauthorized user will not succeed into starving legitimate users from progress through abusing fast-unlock.

\begin{theorem}[Starvation Freedom] \label{th:starvation-liveness}
    No user can successfully initiate a fast-unlock on an $\Okey$ it cannot produce an $\auth$ for.
\end{theorem}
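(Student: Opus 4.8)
The plan is to reduce a ``successful'' fast-unlock to the existence of an $\unlockcert$ over $\Okey$ that is accepted by at least one honest validator, and then derive a contradiction with signature unforgeability via a quorum-intersection argument. First I would make the statement precise: say a user \emph{successfully initiates} a fast-unlock on $\Okey$ if, as a consequence of the \fun protocol, some honest validator executes a $\noop$ (or an adopted conflicting transaction) on $\Okey$ and sets $\unlockdb[\Okey] \gets \confirmed$. By \Cref{alg:process-unlock-cert}, an honest validator only takes this step while processing an $\unlockcert$ that passes Check (\ref{alg:process-unlock-cert}.2); and by construction an $\unlockcert$ bundles the originating $\unlockreq$, hence the authenticator $\auth$ it carries. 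So it suffices to show that a user who cannot produce a valid $\auth$ for $\Okey$ cannot cause any honest validator to accept such an $\unlockcert$.

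Second, I would apply quorum intersection to the votes composing the certificate. By definition a valid $\unlockcert$ over $\Okey$ carries $\unlockvote$ signatures from a quorum of $2f+1$ validators; since at most $f$ are Byzantine, at least $f+1$ of the signers are honest, so at least one honest validator processed this $\unlockreq$ with the handler of \Cref{alg:process-unlock-tx} and returned an $\unlockvote$. By the code, an honest validator reaches \Cref{alg:line:block-exec} (and thus returns a vote) only after Check (\ref{alg:process-unlock-tx}.1) succeeds, i.e.\ only if $\auth$ contains a well-formed transaction referencing $\Okey$ together with a valid authorization of that transaction by the owner(s) of $\Okey$. Hence the mere existence of an accepted $\unlockcert$ over $\Okey$ implies that a valid $\auth$ for $\Okey$ was exhibited.

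Third, I would close with the computational-adversary assumption of \Cref{sec:model}: ``cannot produce an $\auth$ for $\Okey$'' means the user cannot, except with negligible probability, exhibit the owner's signature on a transaction touching $\Okey$ (nor satisfy the object's access-control predicate in the multi-owner or collective case). Exhibiting the bundle demanded by Check (\ref{alg:process-unlock-tx}.1) would therefore break unforgeability of the owner's signature scheme (resp.\ soundness of the authorization predicate). Combining with the previous step, such a user cannot elicit even one honest $\unlockvote$, so no $\unlockcert$ over $\Okey$ can be assembled or accepted, and by the first step $\Okey$ is never unlocked at that user's behest. The main obstacle is not the quorum arithmetic --- which is routine --- but phrasing the hypothesis so the theorem is non-vacuous and so that ``cannot produce $\auth$'' is exactly the negation of the $\auth$-validity predicate actually enforced by Check (\ref{alg:process-unlock-tx}.1); a secondary subtlety is the Auto-Unlock variant, where the effective ``authenticator'' is a pair of conflicting owner-signed transactions observed in consensus history, and I would note that the same argument applies verbatim after re-reading $\auth$ as that evidence, which is again gated by the owners' signatures.
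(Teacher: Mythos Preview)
Your proposal is correct and follows essentially the same approach as the paper: honest validators gate $\unlockvote$ on the authenticator check at \Cref{alg:line:verify-auth}, so an unauthorized user can collect at most $f$ votes and never assemble an $\unlockcert$. Your version is considerably more detailed---making ``successful initiation'' precise, spelling out the quorum arithmetic, tying the hypothesis to signature unforgeability, and addressing the Auto-Unlock variant---whereas the paper compresses all of this into three sentences.
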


\begin{proof}
    All honest validators check the authorization vector \auth of the requesting user (\Cref{alg:line:verify-auth} in \Cref{alg:process-unlock-tx}). This means that no honest party will lock an object without an authorization, including slow parties that have not yet seen the $\Okey$ which will reject or cache the request for later processing. As a result, by the model, there will never be sufficient $\unlockvote$ to generate an $\unlockcert$ driven by an unauthorized user.
\end{proof}

\begin{theorem}
\label{full-system-liveness}
\sysname satisfies liveness (\Cref{def:liveness}).
\end{theorem}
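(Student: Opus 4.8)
The plan is to prove the two components of \Cref{def:liveness} --- \textbf{Progress} and \textbf{\Convergence} --- separately, and in each case to split transactions into the two classes that \sysname processes on distinct paths: \bcobject transactions (owned or collective, \Cref{alg:bounded-counter-owned,alg:bounded-counter-shared}) and all other fast-path transactions (ordinary owned objects, multi-owner transactions, and collective objects), which are governed by \fun. For the \bcobject class both properties are essentially already established: \Convergence follows from \Cref{thm:bc-owned-eventual-cons} (which, as argued in \Cref{app:bc-shared-safety-proof}, carries over to the collective counter), and Progress follows from \Cref{thm:bc-owned-liveness} together with its collective analogue, using that a locked collective counter is first unlocked by a version merge request --- or, under persistent contention, by \fun --- and thereafter behaves like the owned counter; moreover any already-certified counter transaction sits at a well-defined point of the version chain and is executed once a single honest validator does so, which by \Cref{thm:bc-owned-eventual-cons} then propagates to all. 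So the bulk of the work is the non-\bcobject class.

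For \textbf{\Convergence} on the non-\bcobject class, I would fix honest $p_1,p_2$ and time $t$ and take any $\tx \in T_{p_2}(t)$. If $p_2$ executed $\tx$ on the fast path it broadcast the certificate (as in \Cref{loc:bc-owned-cert-broadcast}, and analogously for ordinary owned objects), so $p_1$ eventually receives it; if $p_2$ executed $\tx$ through consensus (via a checkpoint or an $\unlockcert$), then by the agreement property of the SMR engine $p_1$ sees the same sequenced entry and, by \Cref{thm:unlock-conflicting-cert} and its dual, produces the same result. In either case the per-object executions of honest validators are totally ordered by object version and agree up to the shorter one, and --- inheriting Sui's reconfiguration guarantee that every fast-path-finalized transaction is eventually checkpointed --- $p_1$ eventually executes every transaction $p_2$ had executed by time $t$. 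Merging these per-object chains, which commute because distinct objects are independent (exactly the argument used in \Cref{thm:full-system-safety}), yields $T_{p_1}(t') \supseteq T_{p_2}(t)$ for some $t' \ge t$.

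For \textbf{Progress} on the non-\bcobject class, let $\tx$ have a valid certificate and assume no owner of its input objects equivocates. Equivocation-freedom means no transaction conflicts with $\tx$ on any of $\tx$'s input object versions, so no honest validator ever locks a competing transaction, and any well-formed $\unlockreq$ over one of $\tx$'s object keys must --- by the authenticator check (\Cref{alg:line:verify-auth}) and \Cref{th:starvation-liveness} --- be issued by an authorized owner and carry $\tx$ itself as its authenticating transaction. If no $\unlockreq$ over $\tx$'s keys is ever formed, every honest validator that receives the certificate executes $\tx$ immediately on the fast path, since nothing conflicting is locked and $\unlockdb$ is not $\unlocked$ for those keys. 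If some $\unlockcert$ is sequenced, then by \Cref{th:simple-unlock-liveness} this happens within consensus latency, after which \Cref{thm:unlock-conflicting-cert} guarantees the affected $\Okey$ is modified only through consensus; submitting $\tx$'s certificate to the SMR engine then gets it sequenced and executed by all honest validators (and, should consensus itself stall, Sui's end-of-epoch reconfiguration unblocks the object and executes $\tx$). \Cref{th:client-safety} and its converse ensure the two mechanisms never contradict --- a finalized $\tx$ is never undone by an $\unlockcert$, and an $\unlockcert$ never coexists with a fast-path $\effectcert$ --- so the cases compose.

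The step I expect to be the main obstacle is exactly this Progress argument for the non-\bcobject class: showing that a certified $\tx$ on a non-equivocated object is never permanently preempted by an $\unlockcert$ that carries $\none$ and runs a no-op. The resolution must combine equivocation-freedom (so the only authenticator any unlocker can present is $\tx$), \Cref{th:starvation-liveness} (so an unauthorized party cannot force a spurious unlock), and the observation that after any $\unlockcert$ the object is driven through consensus, where $\tx$'s certificate is eventually sequenced --- so a no-op, if ever executed, is itself superseded by $\tx$ along the SMR path --- backed by the end-of-epoch fallback when consensus liveness is delayed. The remaining care is bookkeeping: making sure the per-object total orders invoked for \Convergence are the very ones on which \Cref{thm:full-system-safety} was argued, so that the merged sequence is coherent across the safety and liveness proofs.
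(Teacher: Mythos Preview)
Your Progress argument for the non-\bcobject class has a genuine gap, and it is precisely at the step you flag as the main obstacle. You claim that if an $\unlockcert$ is sequenced, then ``submitting $\tx$'s certificate to the SMR engine then gets it sequenced and executed,'' and that ``a no-op, if ever executed, is itself superseded by $\tx$ along the SMR path.'' This is false. When an $\unlockcert$ with $\cert = \none$ is processed, the no-op \emph{increments the object's version} and sets $\unlockdb[\Okey] \gets \confirmed$ (\Cref{alg:line:mark-confirmed}); thereafter any certificate on the old version of $\Okey$ --- including $\tx$'s --- is ignored both on the fast path and in checkpointing (the paper explicitly adds the dual check that drops a $\cert$ on a $\confirmed$ $\Okey$). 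The no-op is never superseded; $\tx$ is permanently dead for that version, and end-of-epoch reconfiguration drops locks but does not resurrect transactions on consumed versions. The paper avoids this case altogether with a much simpler observation: since only an object's owner can authenticate an $\unlockreq$, and the Progress hypothesis is precisely that no such owner equivocates, no $\unlockreq$ --- hence no $\unlockcert$ --- is ever issued for $\tx$'s inputs, so the fast path alone finishes $\tx$. You actually reach the doorstep of this (you correctly note that any authenticator must be $\tx$ itself) but then turn to the wrong resolution.

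Your \Convergence argument also differs from the paper's and has a related hole: casing on how $p_2$ executed and leaning on certificate broadcast plus checkpointing does not cover the scenario where $p_1$ has already set $\unlockdb[\Okey] = \unlocked$ and therefore refuses to execute the broadcast $\Cert$ on the fast path. The paper instead cases on whether an $\unlockcert$ exists and uses a quorum-intersection step: once one honest validator has executed $\tx$, at least $f{+}1$ honest validators hold $\Cert$, so any $\unlockcert$ (assembled from $2f{+}1$ $\unlockvote$s) must carry $\Cert$ rather than $\none$, and processing it executes $\tx$ everywhere.
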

\begin{proof}
    First we prove progress (\Cref{def:liveness}).
    Every transaction with a valid certificate will be eventually executed unless there is an $\unlockcert$ containing $\cert=\none$.
    Moreover, if the owners of the transaction's input objects do not equivocate, there will be no $\unlockreq$ for those object (recall that only the object owners can issue an $\unlockreq$).
    This ensures progress.

    Next, we prove eventual consistency (\Cref{def:liveness}).
    If a validator $p_1$ executes a transaction, it must have seen the transaction finalized, i.e., $2f+1$ validators signed a certificate for that transaction. Therefore, at least $f+1$ honest validators must have seen a certificate $\Cert$ for that transaction.
    If there is no $\unlockcert$ for the transaction's input objects, then eventually, all honest validators will receive $2f+1$ signatures on the certificate, and thereafter execute the transaction. 
    If there is an $\unlockcert$ for one of the transaction's input objects, then $\unlockcert$ must contain $\Cert$ because at least one honest validator whose signature is in $\unlockcert$ must have seen $\Cert$.
    Therefore, even in this case, all honest validators will eventually execute the transaction.
\end{proof}

\para{Generalization to multi-object unlock}
The multi-object unlock protocol can be seen as a composition of many single-object unlock protocols (one per object) as well as a single commit protocol (for the accompanied transaction). As a result, the safety of the protocol follows from the fact that objects are independent of each other so if at least one has a prior certificate then the commit flow will lead to committing that prior certificate (which iteratively applies to all objects with prior certificates). If on the other hand, no object has a prior certificate then the workflow is the combination of the simple \fun per object together with the shared-object path of committing the transactions of Sui which is safe as proven in the original Sui paper~\cite{suiL}.
Second, we explore liveness. There are two cases: (1) all objects can be unlocked, (2) one or more objects are already certified.
The first case is exactly the same as the simple protocol of \Cref{sec:fast-unlock} and a proof would follow exactly the same structure. For the second case, we first look into the base case of a single object that is already certified which is already proven in the previous sections. For more than one objects we can see that since the validator adds all certificates in their reply and then processes each certificate separately when handling the unlock cert then there is no interaction between certificate processing and can be considered a batch of independent requests.

Finally, for liveness the accompanied transaction might need to acquire locks. This is also an independent invocation of the Sui fast-path. As a result if the transaction is valid it will either succeed or  blocks. In the latter case, the user will have to invoke fast-unlock again including in the set of to-unlock objects the newly blocked objects of the transaction. Given that there is a finite number of objects a user holds an unlock request will eventually succeed. 
\section{Contention Mitigation} \label{sec:multi-unlock}
The basic \fun protocol speeds up recovery from loss of liveness due to mistakes. However, \sys aims to support workloads on the fast path that are truly under contention. In this case, the basic protocol in \Cref{sec:fast-unlock} is insufficient, since it can result in multiple rounds of locking and no-op unlocking without any user transaction being committed. We present a protocol that proposes a new transaction during the unlock phase that is executed once the unlock is sequenced, ensuring liveness.

In the following protocol, we additionally allow users to unlock multiple objects at once.
The multi-object unlock protocol follows the same general flow as the single-object unlock protocol described in \Cref{sec:fast-unlock}. We now describe steps \one-\six depicted in \Cref{fig:fast-unlock} for the multi-unlock protocol.

\begin{algorithm}[t]
    \caption{Process unlock requests (multi-object)}
    \label{alg:process-unlock-tx-multi}
    \footnotesize

    \begin{algorithmic}[1]
        \LineComment{Handle $\unlockreq$ messages from clients.}
        \Procedure{ProcessUnlockTx}{$\unlockreq$}
        \LineComment{Check (\ref{alg:process-unlock-tx-multi}.1): Check authenticator.}
        \If{$!\valid{\unlockreq}$} \Return error \EndIf \label{alg:line:check-abort-multi}

        \LineComment{Collect certificates.}
        \State $c \gets \none$
        \For{$\Okey \in \unlockreq.\Okeys$}
        \State $c \gets c \cup \lockdb[\Okey]$ \label{alg:line:check-cert-multi}
        \EndFor
        \State $\unlockvote \gets \sign{\unlockreq, c}$

        \LineComment{Record the decision to unlock.}
        \If {$c==\none$}
        \For{$\Okey \in \unlockreq.\Okeys$}
        \State $\unlockdb[\Okey] \gets \unlocked$ \label{alg:line:block-exec-multi}
        \EndFor
        \EndIf
        \State \Return $\unlockvote$
        \EndProcedure
    \end{algorithmic}
\end{algorithm}

\para{Protocol description}
The user first creates an \emph{unlock request} specifying a set of objects to unlock:
$$\unlockreq([\Okey], \tx, \auth)$$
This message contains a list of the object's keys $[\Okey]$ to unlock (accessible as $\unlockreq.\Okeys$), a new transaction $\tx$ to execute if the unlock process succeeds, and an authenticator $\auth$ ensuring the sender is authorized to access all objects in $[\Okey]$. The user broadcasts this message to all validators~(\one).

\Cref{alg:process-unlock-tx-multi} describes how each validator handles this unlock request $\unlockreq$. They first perform Check (\ref{alg:process-unlock-tx-multi}.1) \Cref{alg:line:check-abort-multi} to check the authenticator $\auth$ is valid with respect to all objects. This check ensures that the user is authorized to mutate all the objects referenced by $\unlockreq$ and to lock all owned object referenced by $\tx$.
The validator then collects any certificates for the objects referenced by $\unlockreq$ (\Cref{alg:line:check-cert-multi}) and adds them to the response as \Cert. The validator then marks the object in $\unlockreq$ as reserved for transaction execution through consensus only (\Cref{alg:line:block-exec-multi}).

The validator finally returns an \emph{unlock vote} $\unlockvote$ to the user:
$$\unlockvote(\unlockreq, [\mathbf{Option}(\Cert)])$$
This message contains the unlock message $\unlockreq$ itself and possibly a set of certificates $[\cert]$ on transactions including the object keys referenced by $\unlockreq$ (possible empty)~(\two). If \Cert is not empty the certified transactions may have been finalized, and should be executed instead of the new transaction.

The user collects a quorum of $2f+1$ $\unlockvote$ over the same $\unlockreq$ message and assembles them into an \emph{unlock certificate} $\unlockcert$:
$$\unlockcert(\unlockreq, \Cert)$$
where $\unlockreq$ is the user-created certified unlock message and $U \Cert$ is the unions of all set of certificates received in $\unlockreq$ responses. The user submits this message to the consensus engine~(\three)
The consensus engine sequences all $\unlockcert$ messages; all correct validators observe the same output sequence~(\four).

\Cref{alg:process-unlock-cert-multi} describes how validators process these $\unlockcert$ messages after they are sequenced by the consensus engine.
The validator first ensures they did not already process another $\unlockcert$ or $\cert$ through checkpoint for the same objects keys (\Cref{alg:line:ensure-first-multi}).
They then check $\unlockcert$ is valid, that is, the validator ensures (i) it is correctly signed by a quorum of authorities, and (ii) that all certificates $[\cert]$ it contains are valid (\Cref{alg:line:check-abort-multi-cert}).
The validator can only execute the transaction $\tx$ specified by the user if $\unlockcert.\cert$ is empty (\Cref{alg:line:no-overwrite-multi}).
The validator then marks every object key of $[\Okey]$ as $\confirmed$ to prevent any future unlock requests on the $\Okey$ from overwriting execution with a different transaction (\Cref{alg:line:mark-confirmed-multi})
and returns a set of $\effectvote$ to the user~(\five).

The user assembles an $\effectvote$ from a quorum of $2f+1$ validators into an \emph{effect certificate} $\effectcert$ that determines finality~(\six).

\begin{algorithm}[t]
    \caption{Process unlock certificates (multi-object)}
    \label{alg:process-unlock-cert-multi}
    \footnotesize
    \begin{algorithmic}[1]
        \LineComment{Handle $\unlockcert$ messages from consensus.}
        \Procedure{ProcessUnlockCert}{$\unlockcert$}
        \LineComment{Check (\ref{alg:process-unlock-cert-multi}.1): Check no transaction already processed.}
        \For{$\Okey \in \unlockcert.\Okeys$}
        \If{$\unlockdb[\Okey] = \confirmed$}  \label{alg:line:ensure-first-multi}
        \State \Return
        \EndIf
        \EndFor

        \LineComment{Check (\ref{alg:process-unlock-cert-multi}.2): Check message validity.}
        \If{$!\valid{\unlockcert}$} \Return error \EndIf \label{alg:line:check-abort-multi-cert}

        \LineComment{Check (\ref{alg:process-unlock-cert-multi}.3): Can we execute the tx?}
        \State $v \gets [\;]$
        \If{\unlockcert.\cert = [\;]}  \label{alg:line:no-overwrite-multi}
        \State $\tx \gets \unlockcert.\unlockreq.\tx$
        \State $\effectvote \gets \exec{\tx, \unlockcert}$ \label{alg:line:exec-tx-multi}
        \State $v \gets \effectvote$
        \For{$\Okey \in \unlockcert.\Okeys$}
        \State $\unlockdb[\Okey] = \confirmed$
        \EndFor
        \Else
        \For{$\Cert \in \unlockcert.\cert$}
        \State $\effectvote \gets \exec{\Cert}$ \label{alg:line:exec-multi}
        \State $v \gets v \cup \effectvote$
        \For{$\Okey \in \cert.\Okeys$}
        \State $\unlockdb[\Okey] = \confirmed$ \label{alg:line:mark-confirmed-multi} %
        \EndFor
        \EndFor
        \EndIf

        \State \Return $v$
        \EndProcedure
    \end{algorithmic}
\end{algorithm}

\subsection{Handling Gas Objects} \label{sec:gas}
Typical transactions not only mutate objects but also consume a gas object to pay for the computation. If, however, the transaction is equivocated then this gas is locked as well. For this reason \sysname requires a fresh gas-object in order for consensus to process the unlock request.
Specifically together with~\Cref{alg:process-unlock-tx}, the parties should provide a fresh gas object for their request. This gas object is checked for validity along with the check in~\Cref{alg:line:verify-auth} and locked for the unlock transaction in~\Cref{alg:line:block-exec}. When the user collects the no-commit proof in the second step of the protocol, the $2f+1$ collected signatures also serve as a certificate for the gas object. The consensus then checks the validity of the certificate and spends it locally before entering~\Cref{alg:process-unlock-cert}.
Then when consensus executes the transaction, one of three scenarios may happen:

\begin{itemize}
    \item The unlock request is valid and includes a certificate. Then the execution happens as usual and both the gas object for the unlock and the gas object for the execution are consumed.
    \item The unlock request is valid and comes with a no-op. Then the gas object for unlock is consumed. If there was some locked transaction racing the \fun then the accompanying gas object is potentially blocked. The user can then explicitly unlock that gas object by running \fun.
    \item The unlock request is not processed because a checkpoint certificate already executed a transaction. Then the gas object is still consumed without altering the state of the $\Okey$.
\end{itemize}

Note that if gas objects are implemented using bounded counters, the same gas object can be spent concurrently for the transaction and the unlock, thus the above problem wouldn't exist.

\section{Detailed Experimental Setup} \label{sec:detailed-setup}
This section complements \Cref{sec:setup} by specifying the network setup and machine specs used in the benchmarks presented in \Cref{sec:evaluation}.

We deploy all systems on AWS, using \texttt{m5d.8xlarge} instances across $13$ different AWS regions:
N. Virginia (us-east-1), Oregon (us-west-2), Canada (ca-central-1), Frankfurt (eu-central-1), Ireland (eu-west-1), London (eu-west-2), Paris (eu-west-3), Stockholm (eu-north-1), Mumbai (ap-south-1), Singapore (ap-southeast-1), Sydney (ap-southeast-2), Tokyo (ap-northeast-1), and Seoul (ap-northeast-2).
Validators are uniformly distributed across those regions. Each machine provides $10$\,Gbps of bandwidth, $32$ virtual CPUs (16 physical cores) on a $3.1$\,GHz Intel Xeon Skylake 8175M, $128$\,GB memory, and runs Linux Ubuntu server $22.04$. We select these machines because they provide decent performance, are in the price range of ``commodity servers'', and satisfy the minimum required to run a Sui node as recommended by the Sui Foundation\footnote{\url{https://docs.sui.io/guides/operator/validator-config}}.

\end{document}